\renewcommand{\red}[1]{#1}
\newcommand{\ben}{\vspace{0mm}\begin{equation}}
\newcommand{\een}{\vspace{0mm}\end{equation}}
\newcommand{\be}{\vspace{0mm}\begin{equation*}}
\newcommand{\ee}{\vspace{0mm}\end{equation*}}
\newcommand{\bea}{\vspace{0mm}\begin{equation*}\begin{aligned}}
\newcommand{\eea}{\vspace{0mm}\end{aligned}\end{equation*}}
\newcommand{\bean}{\vspace{0mm}\begin{equation}\begin{aligned}}
\newcommand{\eean}{\vspace{0mm}\end{aligned}\end{equation}}
\newcommand{\Xcal}{\mathcal{X}}
\newcommand{\Ycal}{\mathcal{Y}}
\newcommand{\E}{\mathbb{E}}
\newcommand{\Pro}{\mathbb{P}}
\newcommand{\N}{\mathbb{N}}
\newcommand{\R}{\mathbb{R}}
\newcommand{\xbf}{\mathbf{x}}
\newcommand{\ybf}{\mathbf{y}}
\newcommand{\zbf}{\mathbf{z}}
\newcommand{\nbf}{\mathbf{n}}
\newcommand{\hbf}{\mathbf{h}}
\newcommand{\Zbf}{\mathbf{Z}}
\newcommand{\un}{\mathbf{1}}
\newcommand{\cu}{\underline{c}}
\newcommand{\Bu}{\underline{B}}
\newcommand{\Bb}{\bar{B}}
\newcommand{\Du}{\underline{D}}
\newcommand{\tM}{\widetilde{M}}
\newcommand{\tq}{\widetilde{q}}
\newcommand{\Zm}{\mathcal{Z}}
\newcommand{\Hm}{\mathcal{H}}
\newcommand{\Nm}{\mathcal{N}}
\newcommand{\norm}[1]{||#1||}
\newcommand{\pderiv}[2]{\frac{\partial #1}{\partial #2}}
\newtheorem{thm}{Theorem}[section]
\newtheorem{definition}[thm]{Definition}
\newtheorem{remark}[thm]{Remark}
\newtheorem{prop}[thm]{Proposition}
\newtheorem{lemma}[thm]{Lemma}
\newtheorem{Hyp}{Assumption} 
\newtheorem{subhyp}{}[Hyp]
\begin{document}

\title{Stochastic eco-evolutionary model of a prey-predator community}
\author[1]{Manon Costa \thanks{manon.costa@cmap.polytechnique.fr, corresponding author}}
\author[2]{C\'eline Hauzy}
\author[2]{Nicolas Loeuille}
\author[1]{Sylvie M\'el\'eard}
\affil[1]{CMAP, \'Ecole Polytechnique, CNRS UMR 7641, Route de Saclay, 91128
 Palaiseau Cedex, France}
\affil[2]{Laboratoire EcoEvo, UMR 7625, UPMC, Paris, France}
\date{Februay 03, 2015}
\maketitle

\begin{abstract}
We are interested in the impact of natural selection in a prey-predator community. We introduce an individual-based model of the community that takes into account both prey and predator phenotypes. 
Our aim is to understand the phenotypic coevolution of prey and predators. The community evolves as a multi-type birth and death process with mutations. We first consider the infinite particle approximation of the process without mutation. In this limit, the process can be approximated by a system of differential equations. We prove the existence of a unique globally asymptotically stable equilibrium under specific conditions on the interaction among prey individuals. When mutations are rare, the community evolves on the mutational scale according to a Markovian jump process. This process describes the successive equilibria of the prey-predator community and extends the Polymorphic Evolutionary Sequence to a coevolutionary framework. We then assume that mutations have a small impact on phenotypes and consider the evolution of monomorphic prey and predator populations. The limit of small mutation steps leads to a system of two differential equations which is a version of the canonical equation of adaptive dynamics for the prey-predator coevolution. \red{We illustrate these different limits with an example of prey-predator community that takes into account different prey defense mechanisms. We observe through simulations how these various prey strategies impact the community.}
\end{abstract}
\medskip \emph{Keywords:} Predator-prey; multi-type birth and death process; Lotka-Volterra equations; long time behavior of dynamical systems; mutation selection process; Polymorphic evolution sequence; adaptive dynamics.

\medskip
\noindent\emph{AMS subject classification:} 60J75; 37N25; 92D15; 92D25.

\section{Introduction}
The evolution of a population establishes a link between selected individual characteristics and the environment in which the population lives. Quantifying how the impact of the environment varies along evolutionary trajectories is an important question. Here, we aim at considering how other species interact with the population of interest. These different species compose an ecological community in which each population has a specific role: parasites, predators, resources, etc... The evolution of the different species then modifies the complete interaction network, continuously redefining the selective environment acting on the considered population. The coevolution of different species therefore allows us to consider the feedback loop that links phenotype distributions to environmental variations \cite{intro-dieckman}.\\
In the present paper, we focus on the case of prey-predator communities evolving on similar time scales. As far as ecological dynamics are concerned, there exists an important literature on such predator-prey interactions. In the 1920's, Lotka \cite{lotka} and Volterra \cite{volterra} independently proposed a dynamical system for the ecological dynamics of prey and predators which was then extensively studied (see \cite{TA83m},\cite{HofbauerSigmund},\cite{Murray}).
More recently Marrow, Dieckmann and Law \cite{DML95},\cite{Marrow92},\cite{Marrow96} tackled the question of how natural selection affected the dynamics of such interactions. In the adaptive dynamics framework introduced by Metz, Geritz $\&$ al. and Dieckman and Law \cite{Metz92},\cite{DieckmannLaw96}, these authors developed heuristic tools to study the phenotypic coevolution of monomorphic prey and predator populations and its impact on the network. The survival of prey and predators is strongly conditioned on their respective abilities to defend and hunt. As a result, the understanding of the variety of defense traits and of behavioral and morphological adaptation of predators to these defensive mechanisms has become an important focus for evolutionary ecology (see among others \cite{Strauss02},\cite{Muller04},\cite{lind2013life},\cite{courtois2012differences}).
Considering such coevolutionary dynamics brings up new questions regarding the structure of ecological networks, their stability and the consequences of evolution on their emergent properties (e.g. \cite{Loeuille10},\cite{Dercole06}). For instance, it has been shown that predator-prey coevolution may yield food-web architectures that resemble the ones observed in empirical datasets (\cite{Loeuille05},\cite{rossberg2006food},\cite{caldarelli1998modelling},\cite{drossel2001influence}).
Coevolution of predator-prey interactions may also erode the regulating role of predation \cite{Loeuille04}
and change the overall distribution of energy within the community
\cite{loeuille2006evolution}.
Further models suggest that evolution can select ecological dynamics that are inherently less stable \cite{Loeuille10},\cite{ferriere1993chaotic},\cite{doebeli1995evolution}
or more stable \cite{Abrams00},\cite{abrams1997prey}
than initial systems. It is important to note that the importance of coevolution for ecological network dynamics is not restricted to the realm of mathematical models. Indeed, some of the implications of defense evolution in prey for the stability of ecological dynamics have been reproduced experimentally \cite{yoshida2003rapid},\cite{meyer2006prey}. 
Evolutionary dynamics have also been experimentally reproduced in plant-herbivore systems \cite{Agrawal12}. 
Because the importance of eco-evolutionary dynamics of predator-prey interactions now relies on a strong theoretical background and complementary empirical observations or experimental works, evolution is nowadays largely used in terms of applications. To give just an example, the implications of plant-enemy coevolution for the management of agricultural production has been stressed by many \cite{denison2003darwinian},\cite{thrall2011evolution},\cite{loeuille2013eco}.
\\
In a mathematical setting, Durett and Mayberry \cite{DM10} looked into a specific prey-predator community and considered the phenotypic evolution of prey in a fixed community of predators and vice versa under the assumptions of adaptive dynamics (large population, rare and small mutations). They consider a probabilistic microscopic model of the community, following the rigourous approch developped by Champagnat, Ferri\`ere and M\'el\'eard (\cite{Champagnat06},\cite{CFM08},\cite{CM11}) for the eco-evolutionary dynamics of a population with logistic competition.\\
In this article, we present a stochastic individual-based model for the predator-prey community that evolves as a multi-type birth and death process. The phenotype of an individual is transmitted to its offspring after a potential mutation. The prey phenotypes constrain their defense abilities and influence their reproduction, mortality rate and competition ability. We also consider the evolution of predator phenotypes and model its impact on the predation intensity. \red{ We give an example of prey and predator phenotypes in Section \ref{subsec:example} and we illustrate our results with exact simulations of the individual-based process.} \\
\red{We study the stochastic prey-predator community process in different scalings corresponding to the assumptions of adaptive dynamics: large population, rare mutations and mutations of small impact. Since we assume that mutations are rare, it is important to understand the behavior of the community between two mutations. Therefore we study the evolution of a prey-predator community composed of $d$ prey sub-populations and $m$ predator sub-populations (Section \ref{sec:Model}).
The main question is the composition of this community in a long time scale corresponding to the scale where mutations occur. In the large population limit, the dynamics of the prey-predator community is well approximated by a system of differential equations. In Section~\ref{sec:ODE}, we study the long time behavior of this deterministic system. In particular, we introduce conditions for the existence and uniqueness of a globally asymptotically stable equilibrium. These conditions rely on specific matrices for the interaction between the species. We improve here a result of  Goh, Takeuchi and Adachi (see \cite{Goh78},\cite{TA83m}) in our specific setting. 
The existence of globally stable equilibria is related to optimization problems called Linear complementarity problems. We consider a class of these problems related to the \textit{augmented problems }(see Cottle et al.\cite{cottleLCP}) and extend existing results to our framework.\\
Then we prove in Section \ref{sec:sortie-eq} that the individual-based stochastic process also converges to this equilibrium in finite time and remains close to this equilibrium on a long time scale. In particular we give a result on the exit time of an attractive domain which remains true even for a perturbed process. 
Our result is obtained using the properties of the Lyapunov function associated with the deterministic system as in the work of Champagnat, Jabin et M\'el\'eard \cite{CJM13}. The interest is to highlight the time scale separation between competition phases and mutation occurences. Between two mutations, we can thus
characterize the resident prey-predator community.\\
} 
\red{In Section~\ref{sec:MUT}, we study the impact of rare mutations on the community. The rare mutation framework was first formalized by Champagnat \cite{Champagnat06} for the phenotypic evolution of a population with logistic competition. At each reproduction event, the phenotype of the newborn can be altered by a mutation. We consider the successive invasions of mutants and characterize the survival probability of a mutant trait in a given community. 
In the mutation scale, we prove that the process jumps from a deterministic equilibrium to another one according to the successive mutant invasions. This jump process extends the Polymorphic Evolutionary Sequence to a co-evolutionary framework.}

Finally, we consider the case where mutations have a small impact on phenotypes. Combining these three assumptions (large population, rare mutations and small mutation jumps), we derive a couple of canonical equations describing the coevolution of the prey and predator traits \cite{CM11},\cite{Marrow92},\cite{Marrow96}. 
\section{The model}
\label{sec:Model}
\subsection{The microscopic model}
\label{subsec:micro}
We consider an asexual prey-predator community in which each individual is characterized by its phenotypic traits. At each reproduction event the trait of the parent is transmitted to its offspring. 

The interest of this work is the coevolution of prey and predator traits that affects the predation. 
The phenotype $x\in\Xcal$ of a prey individual describes its ability to defend itself against predation. We assume that this trait has an effect on the predation intensity that the prey individual undergoes, but also on its reproduction rate, intrinsic death rate, and ability to compete with other prey individuals.
Such costs may emerge because the energy allocated to defense is diverted from other functions such as growth, maintenance or reproduction (e.g. \cite{herms1992dilemma},\cite{Agrawal12},\cite{lind2013life}). 
The phenotype $y\in\Ycal$ of a predator characterizes its prey consumption rate. This trait affects the predation exerted on prey but also the death rate of the predator. Again, such costs may be explained by differential allocation among life-history traits, but also by behavioral constraints. For instance, increased consumption rate requiring a larger time investment in resource acquisition, it may decrease the vigilance of the predator against its own enemies, creating a mortality cost (see \cite{illius1994costs},\cite{trussell2006fear}).
The trait spaces $\Xcal$ and $\Ycal$ are assumed to be compact subsets of $\R^p$ and $\R^P$ respectively.\\

\noindent The community is composed of $d$ prey types $x_1,\dots,x_d$ and $m$ predator types $y_1,\dots,y_m$. The state of the community is described by the vector of the sub-population sizes. We introduce a parameter $K$ scaling these sub-population sizes (as in \cite{FM04},\cite{CFM08}). To ease the distinction between prey and predator populations we denote by $N^K_i$ the number of prey individuals with trait $x_i$, for $1\le i\le d$, and by $H^K_l$ the number of predators with trait $y_l$, for $1\le l\le m$. Finally the community is represented by the vector
\ben
\label{Z}
\Zbf^K=\frac{1}{K}\bigl(N_1^K,\dots,N_d^K,H_1^K,\dots,H^K_m\bigr),
\een
of the rescaled numbers of individuals holding the different traits.\\
The dynamics of the community follows a continuous time multi-type birth and death process. We first describe the behavior of the prey population. Each prey individual with trait $x$ gives birth to an offspring at rate $b(x)$. The newborn holds the same trait as its parent. The death rate of a prey individual holding trait $x$ is given by
$$\lambda(x,\Zbf^K)=d(x)+\sum_{i=1}^d \frac{c(x,x_i)}{K}N^K_i+\sum_{l=1}^m \frac{B(x,y_l)}{K}H^K_l,
$$ 
where $d(x)$ is the intrinsic death rate of a prey individual with trait $x$, $c(x,x')$ the competition exerted by a prey individual with trait $x'$ on the prey individual with trait $x$ and $B(x,y)$ the intensity of the predation exerted by a predator holding trait $y$ on the prey individual with trait $x$.
In the absence of predators, the prey population evolves as a birth and death process with logistic competition whose behavior was extensively studied by Champagnat, Ferri\`ere and M\'el\'eard \cite{Champagnat06},\cite{CFM08},\cite{CM11}.\\
For the predator population, each predator holding trait $y$ gives birth to a new predator at rate
$$
r\sum_{i=1}^d \frac{B(x_i,y)}{K}N^K_i,
$$
proportional to the predation pressure it exerts on the prey population. The parameter $r$ can be seen as the conversion efficiency of prey biomass into predator biomass. We assume in the following that $r<1$.
In the absence of prey, the predators are unable to reproduce and their population will become extinct rapidly.
Each predator holding trait $y$ dies at rate $D(y)$. The competition between different predators is taken into account through the prey consumption.\\
\noindent The interaction between prey and predators affects the prey death rate and the predator birth rate. This interaction benefits predators but penalizes prey. It creates an asymmetry in the community process and makes it difficult to study: comparisons between two processes whose rates are close, are not possible on a long time scale. We will see in the following how to circumvent this difficulty.
\subsection{An example introducing two types of defenses}
\label{subsec:example}
The diversity of defense strategies observed in nature is overwhelming and the maintenance of such a diversity of strategies is an important focus of evolutionary ecology \cite{ehrlich1964butterflies}. 
Just focusing on one type of consumption interaction, namely plant-herbivore interactions, strategies of defense are morphological (e.g., through spines or \red{trichomes/hair} \cite{zhang2012decreased}), chemical (e.g., the productions of phenols and tannins \cite{becerra2009macroevolutionary}) or through the attraction of enemies of herbivores ("crying for help" \cite{kessler2001defensive}). 
Even when focusing on one defense mechanism, e.g. chemical, the \red{diversity of compounds that are used for defense is very high}, not only in total, but even within species \cite{poelman2008consequences}. 
Modelling such a diversity is challenging and a broad categorization is necessary. 
Here, \red{based on previous empirical or experimental works (see \cite{Strauss02},\cite{Muller04})}, we propose to consider two major classes of defenses, based on their action mode and on the costs they incur: \textit{quantitative defenses} and \textit{qualitative defenses}.\\
\red{
Quantitative defenses correspond to phenotypes that are efficient against a vast number of enemies, but that incur a direct cost in terms of growth or reproduction \cite{Muller04}. Typical examples include structural defenses such as increased toughness \cite{poorter1999comparison}, production of morphological defenses (trichomes, spines) (e.g. \cite{agren1994evolution},\cite{mauricio1997experimental}) or production of digestibility reducing compounds \cite{baldwin1998jasmonate}. In the present work,} we assume that the cost of quantitative defenses affects reproduction (cf. \cite{Muller04},\cite{Strauss02},\cite{lind2013life}).\\
\red{Conversely, qualitative defenses correspond to phenotypes that alleviate consumption by some of the enemies, but incur a cost through another ecological interaction (eg, increased consumption by other enemies or reduced benefits from mutualists \cite{Muller04} ; ``ecological costs'' sensu \cite{Strauss02}).} \red{For instance, alkaloid defenses in plants are efficient against generalist herbivores, but may attract specialists that have evolved to tolerate them or even to use them against their own predators \cite{Muller04}. Other chemical defenses (eg, nicotine) affect the quality of nectar, reducing pollination opportunities \cite{adler2012reliance}. Floral traits such as color or corolla size may reduce the attraction of herbivores, but at the expense of pollinator visitation
\cite{strauss1997floral}. In the present work, qualitative defenses allow a reduction in the effect of one predator, but increase the vulnerability to another predator.}
Because such defense strategies largely impact the similarity of prey niches regarding their enemies \cite{Umea}, we here make the hypothesis that individuals that are closer in terms of qualitative defenses $x$ have a stronger interference competition. 
Such an hypothesis is justified by experimental observations \cite{Agrawal12}, and coherent with the fact that closely related or trait-similar species usually compete more strongly (see \cite{abrams1983theory},\cite{burns2011more}).\\
We take these two types of defenses into account by associating each prey with a two-dimensional trait $x=(q_n,q_a)$ where $q_n\in\R_+$ is the quantity of quantitative defense produced by the prey and $q_a\in \R$ represents its qualitative defense. The allocative trade-off induced by the quantitative defense $q_n$ is represented by an exponential decrease of both the prey birth rate and the predation intensity, at speed $\alpha_n$ and $\beta_n$ respectively. In simulations, we chose a weak allocative trade-off with $\alpha_n=1/10$ and $\beta_n=2$: prey can increase their production of defenses without being too penalized.\\
The predator ability to consume the different qualitative defenses of prey individuals is characterized by two parameters: their preferred qualitative defense $\rho$, and their degree of generalism $\sigma$. \textit{Specialists predators} have a small range $\sigma$ and exert an important predation pressure on the prey populations holding traits close to their preference, while \textit{generalist predators} ($\sigma$ large) consume a large range of qualitative defenses but with less efficiency.  Each predator is then represented by the couple $y=(\rho,\sigma)\in\R\times]0,+\infty[$. The predation intensity decreases with the difference $|\rho-q_a|$ between the preference of predators and the prey qualitative defense. Note that higher generalism incurs a cost in terms of interaction efficiency, as the maximal predation rate is of order $1/\sigma$.\\
In the simulations, we used the following rate functions: for $(q_n,q_a)\in[0,+\infty[\times \R$ and $(\rho,\sigma)\in\R\times ]0,+\infty[$:
\bean
\label{para}
&b(q_n,q_a)=b_0\exp(- \alpha_n q_n), \quad d(q_a,q_n)=d_0,\\
&c(q_n,q_a,q_n',q_a')=c_0\exp\Bigl(-\frac{(q_a-q_a')^2}{2}\Bigr),\\
&B(q_n,q_a,\rho,\sigma)=\exp(- \beta_n q_n)\frac{1}{\sigma}\exp\Bigl(-\frac{(q_a-\rho)^2}{2\sigma^2}  \Bigr),\\
&D(\rho,\sigma)=D.
\eean
\noindent We illustrate this example with exact simulations of the birth and death process introduced above. We are interested in the impact of predators on a prey population using two different qualitative defenses and no quantitative defense: the different prey traits are $x_1=(0,0.8)$ and $x_2=(0,1.7)$. We represent on Figure \ref{fig:1}, the evolution through time of the respective sizes of the prey sub-populations with trait $x_1$ (in green \textcolor{green}{$\times$}), $x_2$ (in red \textcolor{red}{$+$}) and of the predator population holding a trait $(\rho, 0.6)$ for different choices of $\rho$ (in blue \textcolor{blue}{$*$}).\\
\red{When the predator preference differs too much from the prey defense, their population dies out and the two prey populations coexist. In the sequel, we are interested in the cases where the predator population survives.}
We observe three different behaviors. In Figure~\ref{fig:article-1}, the preference of predators is $\rho=0.2$. The three populations coexist on a long time scale. The prey population holding trait $x_2$ has more individuals than the prey population with trait $x_1$ since predation is less important on $x_2$. In Figure~\ref{fig:article-2}, the preference of predators is $\rho=0.7$: predators are well adapted to the trait $x_1$. The predation intensity is so strong on prey holding trait $x_1$ that their population die out. However both populations of predators and prey with trait $x_2$ survive. In Figure~\ref{fig:article-3}, the preference of predators is $\rho=1.26$: they consume both prey populations similarly. We observe that the three populations coexist and that both prey sub-populations have similar small size.\\
\red{As the parameter $\rho$ increases further, we first observe the extinction of the prey population holding trait $x_2$. This is the symmetrical case to \subref{fig:article-2}. Then, we observe similarly to case \subref{fig:article-1} that the three populations coexist.} 

\begin{figure}[h!]
\centering
\subfigure[][$\rho=0.2$]{
\scalebox{0.35}{
\includegraphics{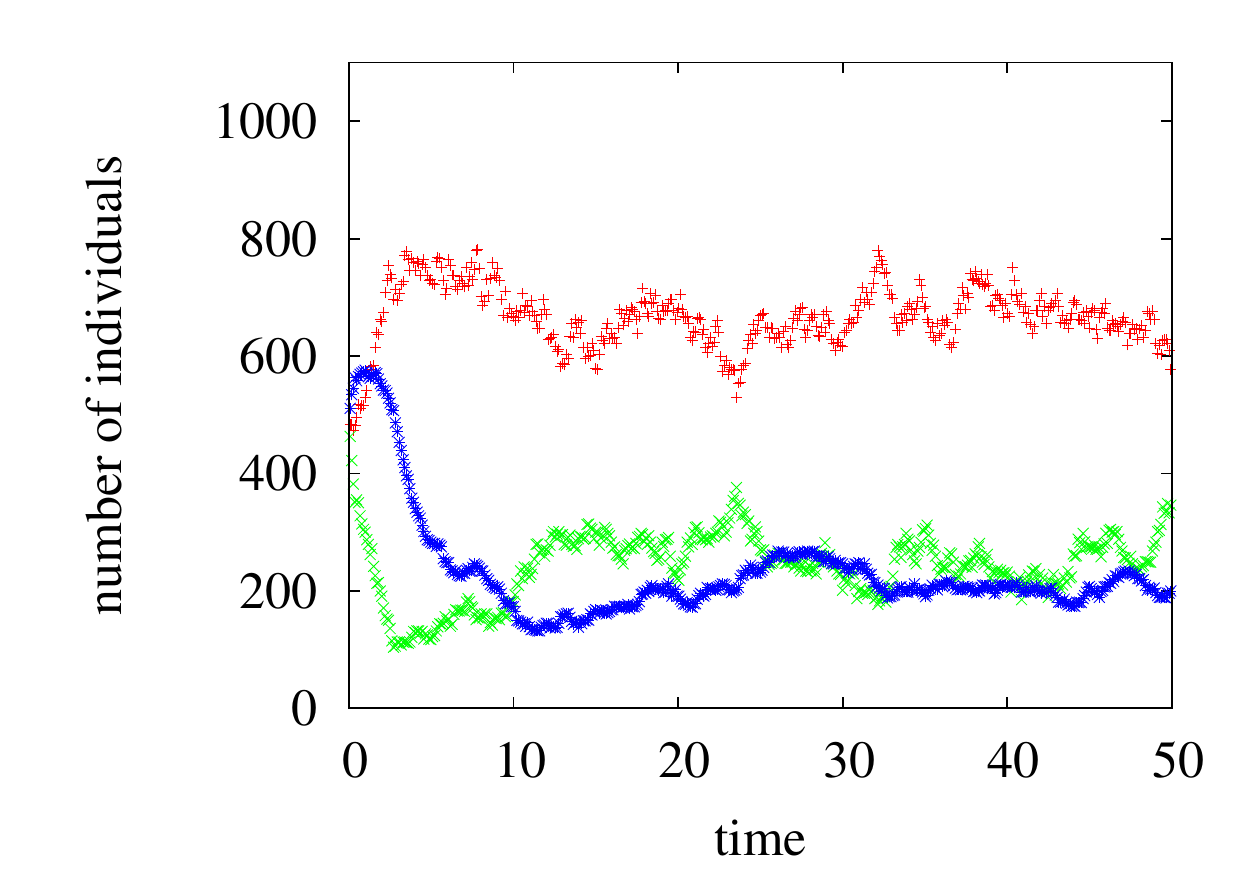}
}
\label{fig:article-1}
}\hfill
\subfigure[][$\rho=0.7$]{
\scalebox{0.35}{
\includegraphics{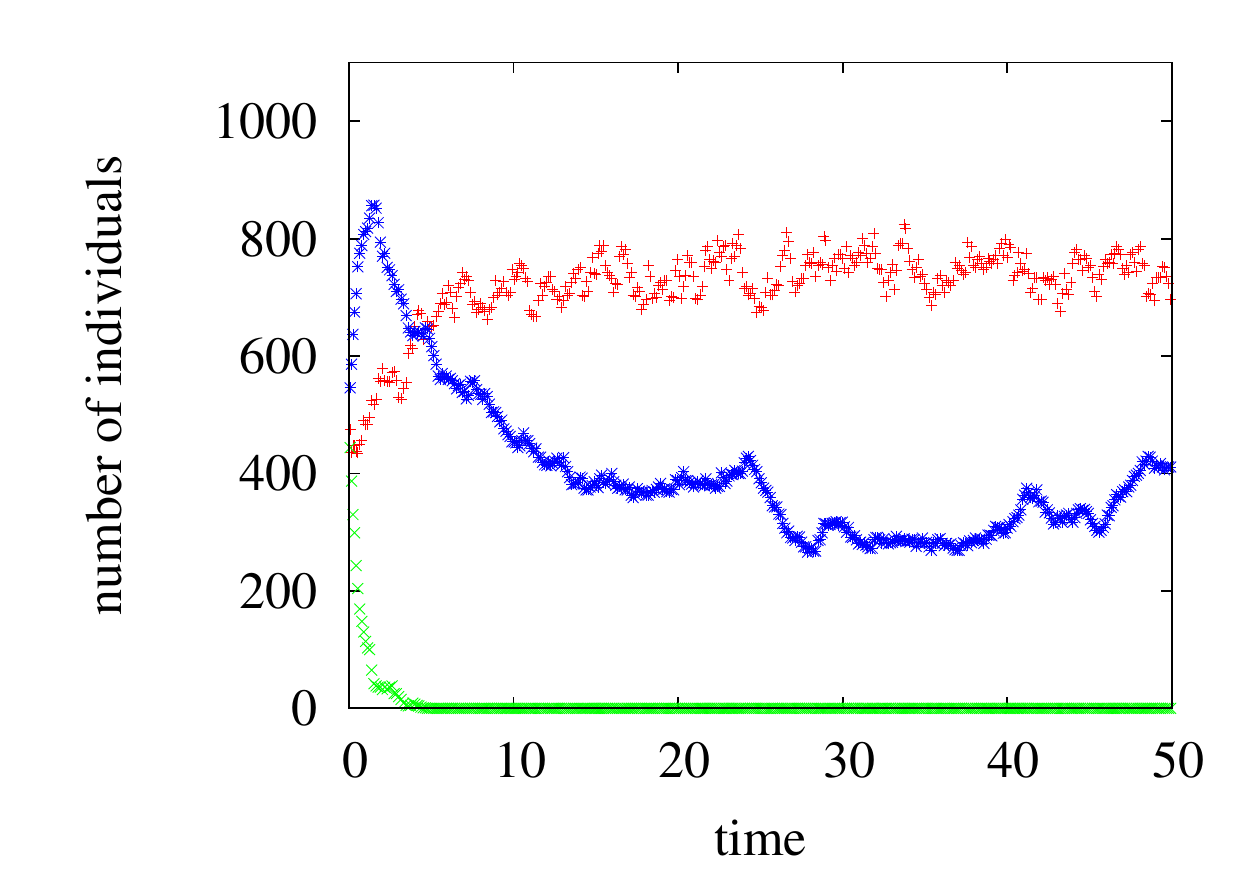}
}
\label{fig:article-2}
}\hfill
\subfigure[][$\rho=1.26$]{
\scalebox{0.35}{
\includegraphics{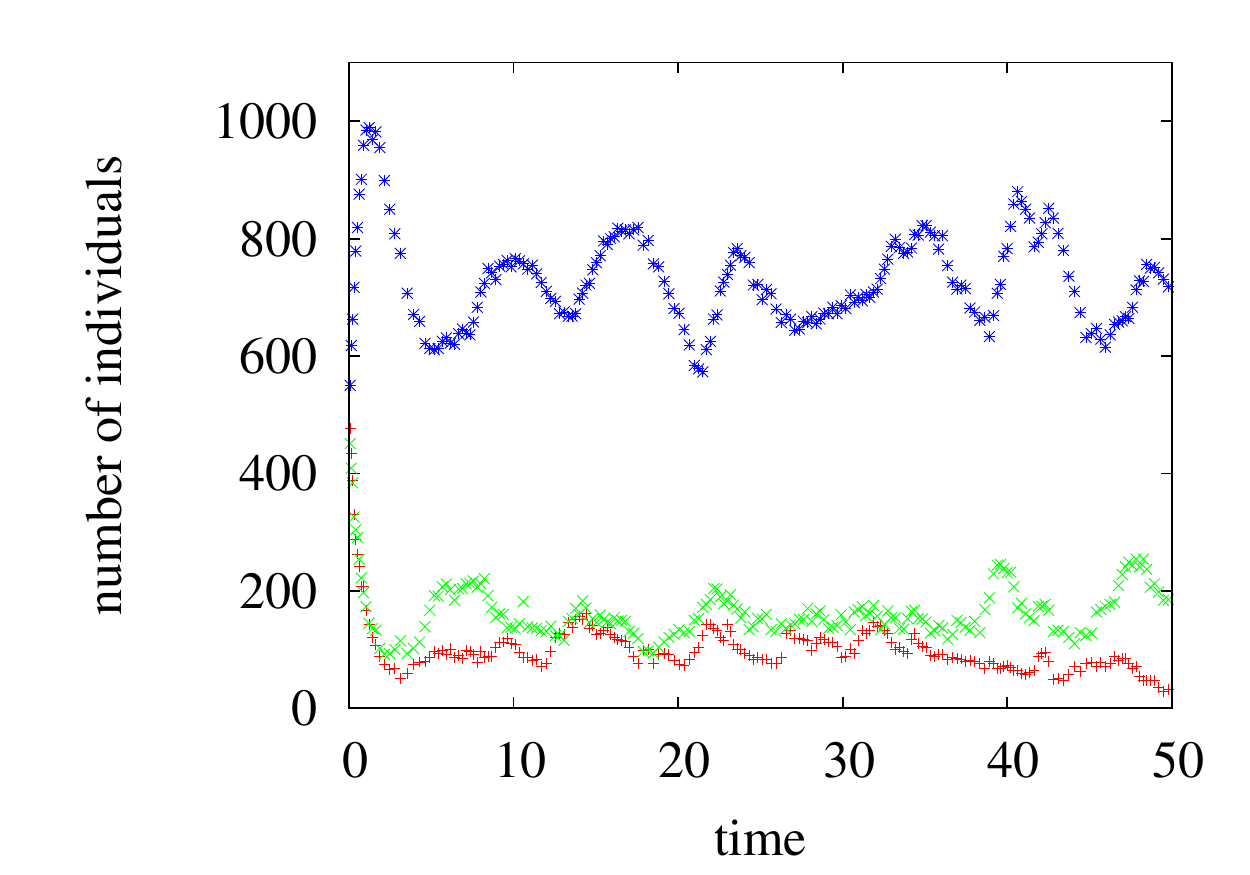}
}
\label{fig:article-3}
}
\caption{We represent the evolution through time of the respective sizes of the prey sub-populations with trait $x_1=(0,0.8)$ (\textcolor{green}{$\times$}), $x_2=(0,1.7)$ (\textcolor{red}{$+$}) and of the predator population with trait $(\rho, 0.6)$ for different choices of $\rho$ (\textcolor{blue}{$*$}). Other parameters are $K=500$, $b_0=2.5$, $d_0=0$, $c_0=1.5$, $D=0.5$, $r=0.8$, $\alpha_n=0.1$, $\beta_n=2$.
}
\label{fig:1}
\end{figure}


\subsection{Existence of the process and uniform bounds of the community size}

\red{The prey-predator community process $\Zbf^K=\frac{1}{K}(N^K_1,\cdots,N^K_d,H^K_1,\cdots,H^K_d)$ introduced above is a Markov process on $(\N/K)^{d+m}$. Its transition rates (or jump rates) are given by the birth and death rates of individuals. \\
A trajectory of the prey-predator community process can be constructed as solution of a stochastic differential equation driven by Poisson point measures (see \cite{FM04},\cite{CFM08}). This construction is given in Appendix \ref{app:poisson}.\\
The community process is well defined up to the explosion of the number of individuals.
}
We denote by $N^K=\sum_{i=1}^d N_i^K$  the total \red{prey number} and by $H^K=\sum_{l=1}^mH_l^K$ the total number of predators. 
\red{The prey population size $N^K$ jumps of $+1$ each time a prey individual is born and of $-1$ each time a prey individual dies; the predator population size evolves similarly. }

In the sequel we make the following assumptions:
\begin{Hyp} 
\label{hyp:existence}
The rate functions $b$, $d$, $c$, $B$  and $D$ are continuous, positive and bounded respectively by $\bar{b}$, $\bar{d}$, $\bar{c}$, $\bar{B}$ and $\bar{D}$. Moreover the functions $c$, $B$ and $D$ \red{are bounded below} by positive real numbers $\cu$, $\Bu$ and $\Du$.
\end{Hyp}
\begin{Hyp}
\label{hyp:moment}
The initial condition satisfies $ \sup_K\E\Bigl( ( \frac{N^K(0)}{K})^3+(\frac{H^K(0)}{K} )^3\Bigr) <\infty$.
\end{Hyp}
\red{
The next proposition gives moment properties of the community process and states that the expected population size remains bounded uniformly in $K$ and $t$.}
\begin{prop}
\label{thm:majunif}
Under Assumptions  \ref{hyp:existence} and \ref{hyp:moment}
\begin{description}
\item[(i)]For every $T>0$, 
$$
\sup_K\E\Bigl(\sup_{t\in[0,T]} \bigl( \frac{N^K(t)}{K}\bigr)^3+\bigl(\frac{H^K(t)}{K} \bigr)^3 \Bigr) <\infty.
$$
\item[(ii)] Moreover
$$
\sup_K\sup_{t\ge0}\E\Bigl( ( \frac{N^K(t)}{K}+\frac{H^K(t)}{K})^2\Bigr) <\infty.
$$
\end{description}
\end{prop}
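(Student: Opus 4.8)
The plan rests on an algebraic observation that removes the asymmetry emphasised in the text. Rather than $\frac{N^K}{K}+\frac{H^K}{K}$, I would control the powers of the weighted total mass $V^K:=\frac{N^K}{K}+\frac1r\,\frac{H^K}{K}$, because each predator birth is, up to the conversion factor $r$, a predation-induced prey death, so the predation rates cancel exactly when the generator $\mathcal L$ of $\Zbf^K$ acts on $V^K$:
\[
\mathcal L V^K=\sum_{i=1}^d\bigl(b(x_i)-d(x_i)\bigr)\frac{N_i^K}{K}-\sum_{i,j=1}^d c(x_i,x_j)\frac{N_i^KN_j^K}{K^2}-\frac1r\sum_{l=1}^m D(y_l)\frac{H_l^K}{K}.
\]
By Assumption \ref{hyp:existence} and $\bar b u-\cu u^2\le\frac{\bar b^2}{2\cu}+\frac{\cu}{2}-\cu u$, this gives $\mathcal L V^K\le C_0-\epsilon_0 V^K$ for constants $C_0,\epsilon_0>0$ independent of $K$. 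Since $r<1$, one has $\frac{N^K}{K}+\frac{H^K}{K}\le V^K$, so it suffices to prove (i) and (ii) with $V^K$ in place of $\frac{N^K}{K}+\frac{H^K}{K}$.

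For the moment bounds I would apply $\mathcal L$ to $(V^K)^p$, $p=1,2$. For $p=1$ the inequality above suffices. For $p=2$, writing $\mathcal L(V^K)^2=2V^K\mathcal L V^K+\frac1{K^2}R^K$ with $R^K$ the jump rates weighted by the squared jump amplitudes, and using the pre-linearised bound $\mathcal L V^K\le\bar b\frac{N^K}{K}-\cu(\frac{N^K}{K})^2-\frac{\Du}{r}\frac{H^K}{K}$, one checks (Assumption \ref{hyp:existence}: every jump rate is $\le CKV^K(1+V^K)$) that $\frac1{K^2}R^K$ is $O(1/K)$ times a degree-two polynomial in $(\frac{N^K}{K},\frac{H^K}{K})$ whose quadratic part is absorbed, for $V^K$ large, by the prey-competition term $-2\cu V^K(\frac{N^K}{K})^2$ contained in $2V^K\mathcal L V^K$; retaining the rest of that term gives $\mathcal L(V^K)^2\le C_2-\epsilon_2(V^K)^2$ uniformly in $K$. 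A localisation at $\tau_M=\inf\{t\ge0:V^K(t)\ge M\}$, Dynkin's formula, Fatou's lemma and Assumption \ref{hyp:moment} then yield $\sup_K\sup_{t\ge0}\E[(V^K(t))^p]<\infty$ for $p=1,2$; integrating $u'\le C_2-\epsilon_2 u$ gives $\E[(V^K(t))^2]\le\E[(V^K(0))^2]e^{-\epsilon_2 t}+C_2/\epsilon_2$, which is (ii).

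For (i) I would start from the semimartingale decomposition $V^K(t)=V^K(0)+\int_0^t\mathcal L V^K(s)\,ds+\widetilde M^K_t$, whose martingale part has predictable bracket $\langle\widetilde M^K\rangle_t\le\frac CK\int_0^t\bigl(V^K(s)+(V^K(s))^2\bigr)\,ds$ (jumps of $V^K$ of order $1/K$, total jump rate of order $KV^K(1+V^K)$). Since $\mathcal L V^K\le C_0$, one has $\sup_{t\le T}V^K(t)\le V^K(0)+C_0T+\sup_{t\le T}\widetilde M^K_t$. Raising to the power $p$, taking expectations and using the Burkholder--Davis--Gundy inequality together with the bracket bound, I would estimate $\Phi_p(T):=\E[\sup_{t\le T}(V^K(t))^p]$ inductively: $\Phi_1(T)$ and $\Phi_2(T)$ are bounded uniformly in $K$ using only the uniform-in-$t$ bounds just obtained, and the $p=3$ step leads to an inequality of the form $\Phi_3(T)\le C\bigl(\E[(V^K(0))^3]+1+(T/K)^{3/2}\bigl(\Phi_2(T)^{3/4}+\Phi_3(T)\bigr)\bigr)$, in which $\Phi_3(T)$ on the right is absorbed into the left for $K$ large (the finitely many smaller values of $K$ being treated individually). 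This is first run for the process stopped at $\tau_M$, so that $\widetilde M^K$ is a genuine $L^2$-martingale and $\Phi_p(T)$ is finite a priori, and one then lets $M\to\infty$ by monotone convergence; with $\frac{N^K}{K}+\frac{H^K}{K}\le V^K$ this is (i).

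The main obstacle is exactly the one the authors flag: predation raises both the predator birth rate and the prey death rate, so $\Zbf^K$ admits no useful monotone coupling with a simpler process, and the naive choice $\frac{N^K}{K}+\frac{H^K}{K}$ leaves a cross term $-(1-r)\sum_{i,l}B(x_i,y_l)\frac{N_i^KH_l^K}{K^2}$ that can be discarded (it is $\le0$) but not exploited; the weighting by $1/r$ is precisely what makes that term disappear and restores a dissipative drift. The secondary, purely technical, difficulty is to pass the supremum inside the expectation under Assumption \ref{hyp:moment}, which provides only a third moment: applying Dynkin to $(V^K)^3$ directly would force a sixth moment through the martingale bracket, so one instead keeps the bracket of $V^K$ itself (degree two) and closes the resulting self-referential estimate for $\Phi_3(T)$ using the smallness of $(T/K)^{3/2}$.
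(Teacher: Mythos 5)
Your proof is correct in substance but follows a genuinely different route from the paper's. For (i) the paper couples the prey population with the predator-free logistic process $\widetilde N^K$ (predation only adds prey deaths, so $N^K\le\widetilde N^K$ pathwise), imports the third-moment bounds of Fournier--M\'el\'eard and Champagnat for that process, and then controls $H^K$ by neglecting predator deaths and applying Gronwall on $[0,T]$; for (ii) it computes the drift of $\bigl(\tfrac{N^K+H^K}{K}\bigr)^2$ directly and runs a case analysis (regions $N^K\ge Kn_0$, $H^K\ge Kh_0$, and the compact remainder) to extract a differential inequality $u'\le C'-Cu$. Your pivot is the weighted mass $V^K=\tfrac{N^K}{K}+\tfrac{1}{r}\tfrac{H^K}{K}$, for which the predation terms cancel exactly in $\mathcal L V^K$ -- this is the classical Lotka--Volterra weighting, and it converts the asymmetry the authors struggle with into a uniformly dissipative drift $\mathcal L V^K\le C_0-\epsilon_0V^K$. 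That makes (ii) essentially a two-line generator computation instead of the paper's polynomial-root analysis, and it replaces the coupling-plus-citation argument for (i) by a self-contained BDG/absorption estimate for $\E[\sup_{t\le T}(V^K)^3]$. What the paper's route buys is elementarity (only Gronwall and cited moment lemmas, no BDG, no self-referential inequality); what yours buys is a cleaner and more transparent mechanism, a time-uniform bound obtained without the ad hoc function $\Phi^K$, and an estimate for (i) that does not rely on the domination $N^K\le\widetilde N^K$ (nor on the delicate independence assertion the paper invokes between $\widetilde N^K$ and $H^K$).

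Three points should be tightened when writing this up. First, BDG is stated with the quadratic variation $[\widetilde M^K]$, not the predictable bracket; for $p=3$ you need the extra term $\E[\sup_t|\Delta \widetilde M^K_t|^3]$, which is harmless here (jumps are $\le 1/(rK)$) but must be mentioned. Second, in the bound $\mathcal L (V^K)^2\le C_2-\epsilon_2(V^K)^2$, the cross term $\tfrac{1}{K}\bar B(1+\tfrac1r)\tfrac{N^K}{K}\tfrac{H^K}{K}$ is not absorbed by the prey-competition cubic alone when $N^K/K$ is small: you must explicitly retain the predator-death contributions $-2V^K\tfrac{\Du}{r}\tfrac{H^K}{K}$ from $2V^K\mathcal L V^K$ (as your ``retaining the rest'' implicitly does); with that, the inequality indeed holds uniformly over all $K\ge1$, not only large $K$. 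Third, the absorption closing the $\Phi_3$ estimate needs $K\gtrsim T$; for the finitely many smaller $K$ the statement still requires $\sup_{t\le T}$-finiteness, which you can obtain from your own machinery by iterating the same estimate over time windows of length of order $K$, restarting with the finite third moment produced on the previous window -- this deserves a sentence rather than the phrase ``treated individually''.
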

Point (i) justifies the existence of the process $\Zbf^K$ for all times and point (ii) will be used to justify convergence results on long time scales.
The proof of the Proposition is given in Appendix \ref{app:majunif}.
\subsection{Limit in large population}
\label{subsec:largepop}
In this section we study the behavior of the community in a large population limit ($K\to\infty$). We use the same scaling for both populations and establish that the stochastic process $\Zbf^K$ can be approximated by the solution of a deterministic system of differential equations.\\
For $\mathbf{x}=(x_1,\dots,x_d)\in\Xcal^d$ and $\mathbf{y}=(y_1,\dots,y_m)\in\Ycal^m$ we denote by $LVP(\mathbf{x},\mathbf{y})$ the differential system 
\begin{equation}
\label{dpmp}
\left\{ \begin{aligned}
&\frac{dn_i(t)}{dt}= n_i(t) \Bigl( b(x_i)-d(x_i)-\sum_{j=1}^dc(x_i,x_j)n_j(t) -\sum_{l=1}^mB(x_i,y_l) h_l(t) \Bigr),\quad\forall 1\le i\le d,\\
&\frac{dh_l(t)}{dt}= h_l(t)\Bigl(r \sum_{i=1}^dB(x_i,y_l) n_i(t)-D(y_l) \Bigr),\quad\forall 1\le l\le m. 
\end{aligned}\right. 
\end{equation}
A solution of this system is a vector $\zbf=(n_1,\dots,n_d,h_1,\dots,h_m)$.
\begin{prop}
\label{prop:cventempsfini}
Under Assumptions \ref{hyp:existence} and \ref{hyp:moment} and assuming that the sequence of initial conditions $(\Zbf^K(0))_K$ converges in probability toward a deterministic vector $\zbf(0)\in [0,\infty)^{d+m}$, then for every $T>0$ the sequence of processes $(\Zbf^K(t),t\in[0,T])_K$ converges in law in the Skorohod space $\mathbb{D}([0,T],(\R_+)^{d+m})$ toward the unique function $(\zbf(t),t\in[0,T])$ solution of the system $LVP(\xbf,\ybf)$ with initial condition $\mathbf{z}_0$ and satisfying $\sup\limits_{t\in[0,T]} \norm{\zbf(t)} <\infty$.
\end{prop}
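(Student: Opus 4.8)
The plan is to follow the classical route for deterministic limits of density-dependent birth and death processes (as in \cite{FM04},\cite{CFM08}): obtain a semimartingale decomposition of $\Zbf^K$, prove tightness in the Skorohod space, show that the martingale part vanishes, identify every limit point as a solution of $LVP(\xbf,\ybf)$, and finish with a uniqueness argument for the limiting system. From the Poisson construction of Appendix \ref{app:poisson} one writes, for $1\le i\le d$,
$$
\frac{N_i^K(t)}{K}=\frac{N_i^K(0)}{K}+\int_0^t\Bigl(b(x_i)-\lambda(x_i,\Zbf^K(s))\Bigr)\frac{N_i^K(s)}{K}\,ds+M_i^K(t),
$$
and analogously $\frac{H_l^K(t)}{K}=\frac{H_l^K(0)}{K}+\int_0^t\bigl(r\sum_{i=1}^dB(x_i,y_l)\frac{N_i^K(s)}{K}-D(y_l)\bigr)\frac{H_l^K(s)}{K}\,ds+\tM_l^K(t)$, where $M_i^K,\tM_l^K$ are square-integrable martingales whose predictable brackets have size $1/K$, e.g.
$$
\langle M_i^K\rangle_t=\frac1K\int_0^t\Bigl(b(x_i)+\lambda(x_i,\Zbf^K(s))\Bigr)\frac{N_i^K(s)}{K}\,ds .
$$
Since $b,d,c,B,D$ are bounded (Assumption \ref{hyp:existence}), the rates above are bounded by affine functions of $\norm{\Zbf^K}$; combined with Proposition \ref{thm:majunif}(i) this gives $\E\langle M_i^K\rangle_T=O(1/K)$, so by Doob's inequality $M_i^K\to0$ and $\tM_l^K\to0$ in $L^2$, uniformly on $[0,T]$.

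For tightness I would apply the Aldous--Rebolledo criterion: for stopping times $\tau\le T$ and $\theta\le\delta$, the increment of the finite-variation part is bounded by $C\delta(1+\sup_{[0,T]}\norm{\Zbf^K}^2)$ and that of each bracket by $\frac CK\delta(1+\sup_{[0,T]}\norm{\Zbf^K}^2)$; by Proposition \ref{thm:majunif}(i) and Markov's inequality these are uniformly small for $\delta$ small, and since the initial data converge, $(\Zbf^K)_K$ is tight in $\mathbb D([0,T],\R_+^{d+m})$. By Prokhorov's theorem and Skorohod's representation theorem, along a subsequence $\Zbf^K$ converges almost surely to some $\zbf$, which has continuous paths because the jumps have size $1/K\to0$. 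Passing to the limit in the decomposition — the drift integrands converge by continuity of $b,d,c,B,D$, with uniform integrability supplied by Proposition \ref{thm:majunif}(i), and the martingale terms vanishing — shows that $\zbf$ is almost surely a solution of $LVP(\xbf,\ybf)$ with initial condition $\zbf(0)$, with nonnegative and locally bounded components.

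It remains to show that $LVP(\xbf,\ybf)$ has a unique such solution, which then forces the whole sequence to converge to it. The crucial point — and the place where the asymmetric prey–predator coupling emphasized in Section \ref{subsec:micro} must be dealt with — is that in the weighted total mass $S(t):=r\sum_{i=1}^dn_i(t)+\sum_{l=1}^mh_l(t)$ the predation contributions cancel:
$$
\frac{dS}{dt}=r\sum_{i=1}^dn_i\Bigl(b(x_i)-d(x_i)-\sum_{j=1}^dc(x_i,x_j)n_j\Bigr)-\sum_{l=1}^mD(y_l)h_l\le r\bar b\sum_{i=1}^dn_i\le\bar b\,S,
$$
so $S(t)\le S(0)e^{\bar b t}$ and $\sup_{t\in[0,T]}\norm{\zbf(t)}<\infty$. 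On the resulting compact set the vector field of $LVP(\xbf,\ybf)$ is $C^1$, hence Lipschitz, so the Cauchy--Lipschitz theorem gives local existence and uniqueness, the a priori bound rules out blow-up before $T$, and nonnegativity is automatic because each equation has the form $\dot z_k=z_k\,g_k(\zbf)$.

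The step I expect to be the main obstacle is precisely this a priori control of the limiting (and, through Proposition \ref{thm:majunif}, the prelimiting) dynamics, which fails under the naive comparison arguments the authors warned against; once the functional $S=r\sum n_i+\sum h_l$ that makes the interaction terms cancel is identified, the martingale estimate, the Aldous--Rebolledo verification, and the identification of the limit are routine applications of the moment bounds already established.
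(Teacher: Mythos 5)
Your proposal is correct and follows essentially the same route as the paper, which itself only sketches the classical compactness--uniqueness method of Fournier and M\'el\'eard (tightness from the moment bounds of Proposition \ref{thm:majunif}(i), vanishing martingale part, identification of the limit as the unique bounded solution of $LVP(\xbf,\ybf)$). Your explicit a priori bound via $S=r\sum_i n_i+\sum_l h_l$, in which the predation terms cancel, is a sound way to justify the uniqueness step the paper leaves implicit (the only slip is calling the drift and bracket bounds ``affine'' in $\norm{\Zbf^K}$ when they are quadratic, but your subsequent estimates use the correct quadratic bounds, so nothing breaks).
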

\noindent 
\red{The proof follows a classical compactness-uniqueness method developed by Fournier and M\'el\'eard (2004) (Theorem 5.3). First we prove using Proposition \ref{thm:majunif}(i) that the sequence $(\Zbf^K(t),t\in[0,T])_K$ is tight. Then we identify the limit as the unique solution of the system of differential equations $LVP(\xbf,\ybf)$.}

\begin{remark}The extinction of the predator population is not possible in finite time for the solutions of the differential system $LVP(\xbf,\ybf)$. Indeed, if there exists $1\le l\le m$ such that $h_l(0)>0$, then for every $t\ge0$,
$$
\frac{d}{dt} h_l(t) \ge -D(y_l)h_l(t).
$$
Thus $h_l(t)\ge h_l(0)\exp(-D(y_l)t)>0$.\\
Conversely, if there is no predator at time $t=0$, i.e. $\zbf(0)=(\nbf(0),0)$, then the stochastic process $\Zbf^K$ converges toward the solution of a competitive Lotka-Volterra system (denoted by $LVC(\mathbf{x})$) given by:
\bean
 \label{dp}
&\frac{dn_i(t)}{dt}= n_i(t) \Bigl( b(x_i)-d(x_i)-\sum_{j=1}^dc(x_i,x_j)n_j(t)  \Bigr),\quad\forall 1\le i\le d.
\eean
\end{remark}

\section{Long time behavior of the solutions of the deterministic system $LVP$}
\label{sec:ODE}
In this section we study the long time behavior of the solutions to the $LVP(\xbf,\ybf)$ system for fixed $\mathbf{x}=(x_1,\dots,x_d)\in\Xcal^d$ and $\mathbf{y}=(y_1,\dots,y_m)\in\Ycal^m$. To simplify notation, we forget the dependence on traits for the parameters and only use subscripts: for example  $B_{il}=B(x_i,y_l)$.\\
\noindent We are interested in the equilibria of the dynamical system \eqref{dpmp}. 
Hofbauer et Sigmund  proved (Section 5.4, p.47 \cite{HofbauerSigmund}) that the $LVP(\xbf,\ybf)$ systems satisfy the competitive exclusion principle.
This ecological principle states that $m$ different species cannot survive on fewer than $m$ different resources (or in less than $m$ different niches) (see \cite{Amstrong80}). An important consequence is that every asymptotically stable equilibrium $\zbf^*$ of the $LVP(\xbf,\ybf)$ system contains \red{no fewer} prey sub-populations than of predators:
$$\#\{ 1\le i\le d, n^*_i>0\} \ge \#\{ 1\le l\le m, h^*_l>0\}.$$
Therefore the diversity among predators is limited by the diversity among prey.\\
In Subsection \ref{subsec:stab}, we introduce conditions for an equilibrium to be globally asymptotically stable, (i.e. every solution of the system with positive initial condition converges when $t$ goes to $\infty$ toward this equilibrium). This strong notion of stability entails that such an equilibrium is unique. Numerous authors, notably  Goh (\cite{Goh78}), Takeuchi et Adachi (\cite{TA83m}) have already studied this question. We develop here a different approach by improving the Lyapunov function introduced by these authors. \nocite{Goh78,TA83m} 
The interest of this approach is to obtain quantitative information on the behavior of the stochastic process close to the deterministic equilibrium (see Section \ref{sec:sortie-eq}).
Then in Subsection \ref{subsec:existence}, we study the existence of globally asymptotically stable equilibria. This question is related to the existence of solutions to \textit{Linear Complementarity Problems}.
Combining these two results, we derive conditions that ensure the existence of a unique globally asymptotically stable equilibrium for the $LVP$ systems.

\subsection{Condition for global asymptotic stability}
\label{subsec:stab}
We assume the existence of a non-negative equilibrium $\zbf^*=(n^*_1,\dots n^*_d,h^*_1,\dots,h^*_m)$ of the $LVP(\xbf,\ybf)$ system defined in \eqref{dpmp}. We seek conditions on this equilibrium to be globally asymptotically stable.
The global stability relies on the properties of the interaction matrix of the system $LVP$:
\ben
\label{mat-i}
I=\left( \begin{array}{cc}
C&B\\
-rB^T&0 
\end{array}
\right),
\een
where $C=(c_{ij})_{1\le i,j\le d}$ and $B=(B_{il})_{1\le i\le d,1\le l\le m}$. 
We introduce two assumptions on the differential system:
\begin{Hyp}
\label{hyp:ODE}
\begin{subhyp}
\label{hyp:mat}
For every $d\in\N$ and almost every $(x_1,..,x_d)\in\Xcal^d$, the matrix of the competition among prey $C(\mathbf{x})=(c(x_i,x_j))_{1\le i,j\le d}$ satisfies that $C(\mathbf{x})+C(\mathbf{x})^T$ is positive definite.
\end{subhyp}
\begin{subhyp}
\label{hyp:independance}
Let $d,m\in\N$, $\mathbf{x}=(x_1,...x_d)\in\Xcal^d$, and $\mathbf{y}=(y_1,...,y_m)\in\Ycal^m$. Every subsystem of the system $LVP(\xbf,\ybf)$ is non degenerate.
\end{subhyp}
\end{Hyp}
\noindent Assumption \ref{hyp:mat} allows us to define a Lyapunov function for the system $LVP$. \red{As an example, Assumption \ref{hyp:mat} is satisfied for matrices $C=(c_{ij})$ symmetric and strictly diagonally dominant ( $|c_{ii}|>\sum_{j\neq i}|c_{ij}|$). Remark that the competition matrix is symmetric when the competition among preys only depends on the distance between their phenotypes. This is often the case when individuals pay a cost in phenotype matching \cite{Yoder10},\cite{burns2011more}.
Strictly diagonally dominant matrices arise when the competition within the sub-populations is more important than the competition with the other sub-populations. This assumption reflects the impact of the similarity of niches of individuals with close phenotypes \cite{Umea},\cite{burns2011more}.\\
Assumption \ref{hyp:independance} allows to characterize the different equilibria of the $LVP(\xbf,\ybf)$ system with their null and positive components. This assumption reflects that every sub-population plays a different role in the prey-predator community (and in any sub-commnity).}\\

We associate with the equilibrium $\zbf^*$ two subsets containing the subscripts of the traits that disappear in the equilibrium for the prey and predator populations respectively:
\ben
\label{PQ}
P=\{1\le i\le d, n^*_i=0\}\text{ and } Q=\{1\le l\le m, h^*_l=0\}.
\een
The following proposition states conditions for the global asymptotic stability of an equilibrium. 
\begin{prop}
\label{prop:stab}
Let us assume Assumption~\ref{hyp:ODE} and the existence of an equilibrium $\zbf^*$ of the system $LVP(\mathbf{x},\mathbf{y})$ such that
\ben
\label{cond-stab}
\left\{\begin{aligned}
&\forall i\in P,\quad b_i-d_i-\sum_{j=1}^dc_{ij}n_j^* -\sum_{l=1}^mB_{il} h_l^*< 0,\\
&\forall l\in Q,\quad r \sum_{i=1}^dB_{il} n_i^*-D_l< 0,
\end{aligned} \right. 
\een
then this equilibrium is globally asymptotically stable.
Moreover such an equilibrium is unique.
\end{prop}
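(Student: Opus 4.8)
The plan is to prove global asymptotic stability by exhibiting an explicit Volterra-type Lyapunov function tailored to the antisymmetric block structure of the interaction matrix $I$ in \eqref{mat-i}, and then to identify the limit set with LaSalle's invariance principle. Recall that any solution of $LVP(\xbf,\ybf)$ started at a positive initial condition keeps all its coordinates strictly positive, so the function below is well defined along it. I would set, for $\zbf=(n_1,\dots,n_d,h_1,\dots,h_m)$,
\[
V(\zbf)=\sum_{i\notin P}\Bigl(n_i-n_i^*-n_i^*\ln\tfrac{n_i}{n_i^*}\Bigr)+\sum_{i\in P}n_i+\frac1r\sum_{l\notin Q}\Bigl(h_l-h_l^*-h_l^*\ln\tfrac{h_l}{h_l^*}\Bigr)+\frac1r\sum_{l\in Q}h_l .
\]
Every term is nonnegative, $V(\zbf)=0$ if and only if $\zbf=\zbf^*$, and the sublevel sets of $V$ are bounded and stay away from the faces $\{n_i=0\}$ with $i\notin P$ and $\{h_l=0\}$ with $l\notin Q$ (which in particular rules out blow-up and gives global existence of positive solutions). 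The point of the weight $1/r$ on the predator block is that it is exactly the weight under which the predation terms in $\dot V$ cancel, because the off-diagonal blocks of $I$ are $B$ and $-rB^{T}$.

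Differentiating $V$ along a trajectory, using that $n_i\,\partial_{n_i}V=n_i-n_i^*$ and $h_l\,\partial_{h_l}V=\frac1r(h_l-h_l^*)$ in all cases (including the extinct indices, where $n_i^*=0$ resp.\ $h_l^*=0$), and subtracting the equilibrium relations, one gets with $u_i=n_i-n_i^*$, $v_l=h_l-h_l^*$,
\[
\dot V=-\tfrac12\,u^{T}\bigl(C+C^{T}\bigr)u+\sum_{i\in P}n_i\Bigl(b_i-d_i-\sum_{j}c_{ij}n_j^*-\sum_l B_{il}h_l^*\Bigr)+\frac1r\sum_{l\in Q}h_l\Bigl(r\sum_i B_{il}n_i^*-D_l\Bigr),
\]
the mixed predation contribution $-\sum_{i,l}u_iB_{il}v_l+\sum_{i,l}v_lB_{il}u_i$ having cancelled. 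The two bracketed factors are the left-hand sides in \eqref{cond-stab}, hence strictly negative, while $n_i\ge0$ and $h_l\ge0$; so both boundary sums are $\le0$. By Assumption~\ref{hyp:mat}, $u^{T}(C+C^{T})u\ge0$, with equality only when $u=0$. Thus $\dot V\le0$ on the whole positive orthant, and $\dot V=0$ precisely on $E:=\{\nbf=\nbf^*\}\cap\{h_l=0:\,l\in Q\}$.

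Since $V$ is positive definite with respect to $\zbf^*$ and $\dot V\le0$, the equilibrium is Lyapunov stable; to upgrade this to global asymptotic stability I would apply LaSalle's invariance principle on the compact positively invariant set $\{V\le V(\zbf(0))\}$ and show that the largest invariant subset of $E$ is $\{\zbf^*\}$. On $E$ the prey coordinates are frozen at $n_i^*$, so invariance forces $n_i^*\dot n_i=0$; for $i\notin P$, and because the predators indexed by $Q$ are already extinct on $E$, this reads $\sum_{l\notin Q}B_{il}(h_l-h_l^*)=0$ for every $i\notin P$. Here Assumption~\ref{hyp:independance} enters: non-degeneracy of the subsystem of $LVP(\xbf,\ybf)$ carried by the index set $\{i\notin P\}\cup\{l\notin Q\}$ — which admits $\zbf^*$ restricted to these coordinates as an interior equilibrium — makes the linear map $h\mapsto\bigl(\sum_{l\notin Q}B_{il}h_l\bigr)_{i\notin P}$ injective, forcing $h_l=h_l^*$ for all $l\notin Q$ as well. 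Hence $E$ has no invariant subset other than $\{\zbf^*\}$, LaSalle yields $\zbf(t)\to\zbf^*$ for every positive initial condition, and global asymptotic stability follows. Uniqueness is then immediate: any other equilibrium satisfying the hypotheses is a stationary solution, hence must coincide with the global attractor $\zbf^*$.

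The main obstacle is twofold. On the one hand, the LaSalle step needs care: one must check that the sublevel sets of $V$ remain compact once one includes the harmless invariant boundary faces $\{n_i=0:i\in P\}$ and $\{h_l=0:l\in Q\}$, on which the polynomial vector field is still defined, and then translate the abstract hypothesis ``every subsystem is non-degenerate'' into the concrete injectivity/full-rank statement used above (consistent with the competitive exclusion inequality $\#\{i:n_i^*>0\}\ge\#\{l:h_l^*>0\}$). On the other hand, the cancellation of the predation terms in $\dot V$ together with the simultaneous sign analysis over the four cases ($i\in P$ or not, $l\in Q$ or not) — although routine — is really the crux: it is what lets the antisymmetric structure of $I$, rather than mere positive definiteness of $C+C^{T}$, do the work, and it is also where the improvement over the Lyapunov functions of Goh \cite{Goh78} and of Takeuchi and Adachi \cite{TA83m} lies, the extra linear terms $\sum_{i\in P}n_i$ and $\frac1r\sum_{l\in Q}h_l$ making $V$ a genuine Lyapunov function on the whole orthant (boundary equilibria included) and yielding the quantitative decay exploited later in Section~\ref{sec:sortie-eq}.
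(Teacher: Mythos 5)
Your proof is correct in substance, but it takes a genuinely different route in its second half, so let me compare. Your function $V$ is, up to the constant factor $r$ and an additive constant, exactly the function \eqref{Lyap-V} of the paper, and your cancellation of the predation cross terms leading to $\frac{d}{dt}V=-\frac12 (\nbf-\nbf^*)^T(C+C^T)(\nbf-\nbf^*)+\sum_{i\in P}n_i s_i+\frac1r\sum_{l\in Q}h_l F_l$ is the paper's identity \eqref{dVdt}; up to this point the two arguments coincide. The divergence is in how the non-strictness of $\frac{d}{dt}V$ is handled: this derivative vanishes on the whole set $\{\nbf=\nbf^*\}\cap\{h_l=0,\ l\in Q\}$, and you resolve this by LaSalle's invariance principle, using Assumption \ref{hyp:independance} to show that the largest invariant subset of that set is $\{\zbf^*\}$ (your injectivity reading of the non-degeneracy hypothesis is essentially the same use the paper makes of it at the corresponding step, so this is not a gap relative to the paper). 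The paper instead augments $V$ by the cross term $\red{W}$ of \eqref{Lyap-G} and chooses $\gamma$, $\Gamma$ so that $L=V+\gamma \red{W}$ in \eqref{Lyap-L} has a derivative dominated by a negative definite quadratic form plus boundary terms; its zero set is then identified directly, with no invariance argument. For the proposition itself your route is shorter and equally valid, modulo the compactness of sublevel sets that you rightly flag and a small repair to your uniqueness sentence: an arbitrary stationary point need not be a positive initial condition (the origin is stationary too), so argue instead that two equilibria satisfying the hypotheses would both attract the same strictly positive trajectory, hence coincide. However, your closing claim that your $V$ already ``yields the quantitative decay exploited in Section~\ref{sec:sortie-eq}'' is not correct: since $\frac{d}{dt}V$ vanishes on a set strictly larger than $\{\zbf^*\}$ whenever surviving predators are present, $V$ cannot satisfy an inequality of the form \eqref{Prop2}, $\frac{d}{dt}L(\zbf(t))\le -C''\norm{\zbf-\zbf^*}^2$, and it is precisely to obtain this pointwise bound — indispensable for the exit-time estimate of Theorem~\ref{thm:sortie_eq}, where LaSalle-type information gives nothing — that the paper constructs $L=V+\gamma \red{W}$ rather than stopping at $V$.
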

\noindent Conditions \eqref{cond-stab} ensure that the equilibrium $\zbf^*$ is asymptotically stable. This can be easily obtained by computing the eigenvalues of the Jacobian matrix of the system.
\begin{proof}
We define the function
\ben
\label{Lyap-V}
V(\zbf)=\sum_{i=1}^d r(n_i-n_i^*\log(n_i))+\sum_{l=1}^m (h_l-h_l^*\log(h_l)).
\een
Using the fact that $\zbf^*$ is an equilibrium of the system $LVP(\mathbf{x},\mathbf{y})$, the derivative of $V$ along a solution equals
\bean
\label{dVdt}
\frac{d}{dt}V(\zbf(t)&)=-\frac{r}{2}(\mathbf{n}-\mathbf{n}^*)^T(C+C^T)(\mathbf{n}-\mathbf{n}^*)
\\&+r\sum_{i\in P} n_i(b_i-d_i-\sum_{j=1}^dc_{ij}n_j^* -\sum_{l=1}^mB_{il} h_l^*)+\sum_{l\in Q} h_l( \sum_{j=1}^drB_{jl} n_j^*-D_l).
\eean
Since $\zbf^*$ satisfies \eqref{cond-stab} and by \ref{hyp:mat}, the derivative $\frac{d}{dt}V(\zbf(t)$ is nonnegative but vanishes not only at point $\zbf^*$. In the following we search for a function $\red{W}$ and $\gamma>0$ such that\ben
\label{Lyap-L}
L(\zbf)=V(\zbf)+\gamma \red{W}(\zbf)
\een is a Lyapunov function for the system: for every solution $(\zbf(t); t\ge0)$, the function $L(\zbf(t))$ decreases with time and reaches its only minimum at $\zbf^*$. We set
\ben
\label{Lyap-G}
\red{W}(\zbf)=\sum_{l=1}^m(h_l-h_l^*)\sum_{i=1}^dB_{il}(n_i-n_i^*).
\een
Its derivative along a solution is given by:
$$\begin{aligned}
\frac{d}{dt}&\red{W}(\zbf(t))=
\sum_{l=1}^m h_l r\Bigl( \sum_{i=1}^d B_{il}(n_i-n_i^*) \Bigr)^2
+ \sum_{l\in Q} h_l \Bigl(r \sum_{i=1}^d B_{il}n^*_i-D_l\Bigr)\Bigl(\sum_{j=1}^dB_{jl}(n_j-n_j^*)\Bigr)
\\
- & \sum_{i=1}^d n_i\Bigl(\sum_{l=1}^m B_{il}(h_l-h_l^*) \Bigr)^2
+ \sum_{i\in P} n_i \Bigl(b_i-d_i-\sum_{j=1}^dc_{ij}n_j^* -\sum_{l=1}^m B_{il} h_l^*  \Bigr) \Bigl(\sum_{k=1}^m B_{ik}(h_k-h_k^*)\Bigr)\\
-& \sum_{i=1}^d n_i\sum_{j=1}^d c_{ij}(n_j-n_j^*)\Bigl(\sum_{l=1}^m B_{il}(h_l-h_l^*) \Bigr).
  \end{aligned}$$
The second, third and forth terms are bounded because the solutions of the system are bounded as well. 
The last term can be bounded by :
$$\begin{aligned}
   \sum_{i=1}^d n_i\sum_{j=1}^d c_{ij}(n_j-n_j^*)\sum_{l=1}^m B_{il}(h_l-h_l^*) 
\le
 \sum_{i=1}^d n_i \Bigl( \frac{(\sum_{j=1}^d c_{ij}(n_j-n_j^*))^2}{\Gamma} + \Gamma\Bigl(\sum_{l=1}^m B_{il}(h_l-h_l^*) \Bigr)^2 \Bigr),
  \end{aligned}
$$
where $\Gamma$ will be chosen afterwards.
Together with equation \eqref{dVdt} we can upper bound the derivative of $L$:
\bean
\label{deriv-L}
\frac{d}{dt}L(\zbf(t))\le &-(\mathbf{n}-\mathbf{n}^*)^T(U+U^T)(\mathbf{n}-\mathbf{n}^*)- \gamma(1-\Gamma) \sum_{i=1}^d n_i \Bigl(\sum_{k=1}^m B_{ik}(h_k-h_k^*) \Bigr)^2,\\
&+\sum_{i\in P} n_i(b_i-d_i-\sum_{j=1}^dc_{ij}n_j^* -\sum_{l=1}^mB_{il} h_l^*) (1+\gamma\sum_{k=1}^m B_{ik}(h_k-h_k^*) )\\
&+\sum_{l\in Q} h_k( \sum_{j=1}^dr B_{jl} n_j^*-D_l)(1+\gamma\sum_{i=1}^dB_{il}(n_i-n_i^*))
\eean
where $U=(c_{ij}+ \frac{\gamma}{\Gamma}c_{ij}\sum_{u=1}^d n_u+\gamma\sum_{l=1}^mh_lB_{il} B_{jl}  )_{1\le i, j\le d}$.\\
It remains to choose $\Gamma$ and $\gamma$. We set $\Gamma<1$.
Since the solution $\zbf$ is bounded, it is possible to choose the constant $\gamma$ such that the matrix $U+U^T$ is positive definite and
$$
1+\gamma\sum_{i=1}^dB_{ik}(n_i-n_i^*)>0 ,\quad \forall 1\le k\le m,
\text{ and } 
1+\gamma\sum_{k=1}^m B_{ik}(h_k-h_k^*)>0, \quad \forall 1\le i\le d,
$$
The derivative of $L( \zbf(t))$ is then non positive and null for the vectors $(u_1,\dots,u_d,v_1,\dots,v_m)$ such that:
\be
\left\{\begin{aligned}
&\forall i\in\{1,...,d\},\quad  u_i=n_i^*,\\
&\forall l\in Q,\quad v_l=h_l^*=0,\\
&\forall i\in\{1,...,d\}, \quad \sum_{l=1}^m B_{il}(v_l-h_l^*) =0.\end{aligned}\right.
\ee
Since $\zbf^*$ is an equilibrium, these conditions are equivalent to 
\be
 \left\{\begin{aligned}
&\forall i\in\{1,...,d\},\quad  u_i=n_i^*,\\
&\forall l\in Q,\quad v_l=h_l^*=0,\\
&\forall i\notin P, \quad b_i-d_i-\sum_{j=1}^d c_{ij}n^*_j -\sum_{l=1}^m B_{il} v_l=0,\end{aligned}\right.
\ee
The vector $(\mathbf{u},\mathbf{v})$ is then an equilibrium $LVP$ having the same null components as $\zbf^*$. Assumption \ref{hyp:independance} ensures that $(\mathbf{u},\mathbf{v})=\zbf^*$.
\end{proof}

\subsection{Existence of globally asymptotically stable equilibria for the system $LVP$}
\label{subsec:existence}
The existence of equilibria of the system $LVP(\xbf,\ybf)$ satisfying \eqref{cond-stab} is related to the existence of solutions to specific optimization problems called Linear Complementarity Problems (LCP) (see \cite{TA82}).

\begin{definition}[Cottle et al. \cite{cottleLCP}]
Given $M\in \R^{u\times u}$ and $q\in\R^u$, the Linear Complementarity Problem associated with $(M,q)$ (denoted by $LCP(M,q)$) seeks a vector $z\in\R^u$ satisfying
\bean
\label{LCP}
\forall 1\le j\le u, \quad &z_j\ge0 \quad \text{and }\quad (Mz+q)_j\ge0,\\
& (Mz+q)^T\cdot z=0.
\eean
\end{definition}
\noindent Note that the last condition can be written $(Mz+q)_j z_j=0$, $\forall 1\le j\le u$.\\
\noindent Let us remark that every equilibrium $\zbf^*\in(\R_+)^{d+m}$ of the system $LVP(\xbf,\ybf)$ satisfying \eqref{cond-stab} is a solution of $LCP(I,R)$ where $u=d+m$, $I$ is the interaction matrix introduced in \eqref{mat-i} and $R=(-(b_1-d_1),\dots,-(b_d-d_d),D_1,\dots,D_m))^T$ is the vector of the growth rates of the sub-populations. Actually, an equilibrium of the system $LVP(\xbf,\ybf)$ satisfying \eqref{cond-stab} is also a solution to $LCP(\widetilde{I},\widetilde{R})$ where 
\ben
\label{tilde}
\widetilde{I}=\left( \begin{array}{c|c}
M&B\\
\hline
-B^T&0 
\end{array}\right),\quad \widetilde{R}=(-(b_1-d_1),\dots,-(b_d-d_d),\frac{D_1}{r},\dots,\frac{D_m}{r})^T.  
\een
\noindent We therefore consider a specific range of LCP related to the shape of the interaction matrix $\widetilde{I}$ which presents a null sub-matrix.
The following result derives easily from existing results (see \cite{cottleLCP}). We detail the proof in Appendix \ref{app:proofALCP}.
\begin{thm}
\label{thm:ALCP}
Let $M\in\mathbb{R}^{d\times d}$ and $q\in\mathbb{R}^d$. For every matrix $B\in(\mathbb{R}_+)^{d\times m}$ and every non-negative vector $D\in \R^{m}$  we define
\begin{equation}
 \label{ALCP}
\widetilde{M}=\left( \begin{array}{c|c}
M&B\\
\hline
-B^T&0 
\end{array}\right)\text{ and } \widetilde{q} = 
\left( \begin{aligned}
&q\\
&D
\end{aligned}\right).
\end{equation}
The problem $LCP(\tM,\tq)$ admits a solution.
\end{thm}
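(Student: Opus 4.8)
\medskip
\noindent\textbf{Proof proposal.} The plan is to recognise that the augmented matrix $\widetilde M$ lies in a class for which feasibility of the LCP already forces solvability, and then to construct a feasible point by hand using the sign conditions $B\in(\R_+)^{d\times m}$ and $D\in(\R_+)^m$. I use throughout that $M+M^T$ is positive definite, which is the situation in which the theorem is applied (Assumption~\ref{hyp:mat}).

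First I would check that $\widetilde M$ is copositive-plus. A direct computation gives
\[
\widetilde M+\widetilde M^T=\left(\begin{array}{c|c} M+M^T & 0\\ \hline 0 & 0 \end{array}\right),
\]
so for every $v=(x,y)\in\R^d\times\R^m$ one has $v^T\widetilde M v=\frac{1}{2}\,x^T(M+M^T)x\ge 0$, and this vanishes only when $x=0$, in which case $(\widetilde M+\widetilde M^T)v=0$. Hence $\widetilde M$ is copositive-plus, and the classical theory of Lemke's algorithm for copositive-plus matrices (see \cite{cottleLCP}) tells us that $LCP(\widetilde M,\widetilde q)$ has a solution as soon as there exists some $z\ge 0$ with $\widetilde M z+\widetilde q\ge 0$. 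It therefore only remains to exhibit one feasible $z$.

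To do that, I would let $Z\subseteq\{1,\dots,d\}$ be the set of indices of the zero rows of $B$ and put $n_i=0$ for $i\notin Z$. Since the rows of $B$ indexed by $Z$ vanish, this forces $B^Tn=0$, so the lower block of the constraint, $-B^Tn+D=D$, is already nonnegative whatever the remaining components $(n_i)_{i\in Z}\ge 0$ may be. The principal submatrix $M_{ZZ}$ inherits positive definiteness from $M$ (its symmetric part is a principal submatrix of $M+M^T$), hence $LCP(M_{ZZ},q_Z)$ has a solution (\cite{cottleLCP}); I would take $(n_i)_{i\in Z}\ge 0$ to be such a solution, which gives $(Mn+q)_i\ge 0$ for all $i\in Z$. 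Finally I would choose $h=t\,\mathbf 1$ with $t>0$: for $i\in Z$ one has $(Bh)_i=0$ and $(Mn+q)_i\ge 0$ already, while for $i\notin Z$ the $i$-th row of $B$ is nonzero, so $(Bh)_i=t\sum_l B_{il}\to\infty$ as $t\to\infty$, and a large enough $t$ makes $(Mn+Bh+q)_i\ge 0$ for every $i\notin Z$ simultaneously. Then $z=(n,h)\ge 0$ is feasible, and combined with the previous step it is a solution of $LCP(\widetilde M,\widetilde q)$.

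The main obstacle is the feasibility construction. The point to notice is that the indices of the zero rows of $B$ play a double role: they decouple from the constraint $B^Tn\le D$, so the corresponding components of $n$ can be chosen freely, and at the same time they are exactly the coordinates on which the predator variables $h$ have no leverage — which is what isolates the genuine difficulty as the lower-dimensional problem $LCP(M_{ZZ},q_Z)$, solvable by positive definiteness. Checking that $\widetilde M$ is copositive-plus and invoking the ``feasible $\Rightarrow$ solvable'' theorem for such matrices are then routine.
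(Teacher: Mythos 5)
Your argument is sound as written, but it proves a variant of Theorem \ref{thm:ALCP} rather than the statement as printed, and it follows a genuinely different route from the paper. The paper assumes nothing about $M$: it rewrites the problem as a variational inequality for $f(\nbf)=q+M\nbf$ over the polyhedron $E=\{\nbf\ge 0:\ B^T\nbf\le D\}$, invokes existence of a solution on a nonempty compact convex set (Theorem \ref{thm:proj}, a projection fixed-point argument), and then recovers the $\hbf$-components as the Lagrange multipliers of the constraints $(D-B^T\nbf)_l\ge0$ at the minimizer of $a\mapsto a^Tf(\nbf^*)$ on $E$. That proof uses compactness of $E$ --- which silently requires every row of $B$ to have a positive entry, automatic in the application since $B\ge\Bu>0$ --- but no property of $M$ whatsoever. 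Your route (copositive-plus matrices plus Lemke's feasibility theorem plus an explicit feasible point) genuinely needs $M+M^T$ positive semidefinite, i.e.\ you import Assumption \ref{hyp:mat}, which is not among the hypotheses of Theorem \ref{thm:ALCP}. Read against the statement as written, that is the one gap: the copositive-plus verification fails for a general $M\in\R^{d\times d}$. To be fair, the statement as written is itself too liberal (for $d=m=1$, $M=-1$, $q=-1$, $B=0$, $D=0$ the problem is infeasible), so some repair is needed on either side: the paper's repair is nondegeneracy of $B$, yours is definiteness of $M$; in the eco-evolutionary setting where the theorem is used, both hold, and your version even covers zero rows of $B$, exactly where the paper's claim that $E$ is bounded would break down. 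Neither argument subsumes the other.

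The individual steps of your proposal do check out: $\tM+\tM^T$ is block-diagonal with blocks $M+M^T$ and $0$, so $\tM$ is copositive-plus under Assumption \ref{hyp:mat} (it would even suffice that $M$ be copositive-plus); $M_{ZZ}$ inherits positive definiteness, so $LCP(M_{ZZ},q_Z)$ is solvable; setting $n_i=0$ for $i\notin Z$ gives $B^T\nbf=0$, so the lower block equals $D\ge0$ independently of $\hbf$; and $\hbf=t\mathbf{1}$ with $t$ large makes the upper block nonnegative on rows $i\notin Z$ since $\sum_l B_{il}>0$ there. Feasibility plus the processing of copositive-plus matrices by Lemke's method (see \cite{cottleLCP}) then yields a solution. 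If you want to match the theorem in the form the paper uses it, either add the definiteness hypothesis explicitly, or switch to the variational-inequality/KKT argument, which trades the condition on $M$ for compactness of $E$.
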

Note that a solution $(\nbf,\hbf)$ of $LCP(\widetilde{I},\widetilde{R})$ is an equilibrium of the $LVP$ system such that
\ben
\label{cond-stab-large}
\left\{\begin{aligned}
&\forall 1\le i\le d, \text{ if } n_i=0 \text{ then } b_i-d_i-\sum_{j=1}^dc_{ij}n_j -\sum_{l=1}^mB_{il} h_l\le 0,\\
&\forall 1\le l\le m, \text{ if } h_l=0 \text{ then }r \sum_{i=1}^dB_{il} n_i-D_l\le 0.
\end{aligned}\right.
\een
These conditions are similar to conditions \eqref{cond-stab} for the global asymptotic stability of an equilibrium, but contain large inequalities. Therefore to obtain the existence of globally asymptotically stable equilibria of the $LVP$ systems we introduce an additional assumption that prevents the quantities involved in conditions \eqref{cond-stab} and \eqref{cond-stab-large} from vanishing. These quantities correspond to the growth rates of prey individuals holding trait $x_i$ and of predators holding trait $y_l$ in a community described by the vector $\zbf^*$. In ecology these quantities are referred to as invasion fitness. We denote the \textit{invasion fitness of a prey individual holding trait $x$ in a community $\zbf^*$} by
\ben
\label{fit-Wp}
s(x;\zbf^*)= b(x)-d(x)-\sum_{i=1}^dc(x,x_i)n^*_i-\sum_{k=1}^m B(x,y_k)h^*_k, \quad\forall x\in\Xcal,
\een
and \textit{invasion fitness of a predator holding trait $y$ in a community $\zbf^*$}  by
\ben
\label{fit-WP}
F(y;\zbf^*)=\sum_{j=1}^drB(x_j,y) n^*_j-D(y) ,\quad \forall y\in Yc.
\een 
\begin{Hyp}
\label{hyp:fitness}
For every $(\mathbf{x},\mathbf{y})\in\Xcal^d\times\Ycal^m$, and every vector $(\mathbf{n},\mathbf{h})$ solution of $LCP(I,R)$, the sets
$
\{x'\in\Xcal, s(x';(\mathbf{n},\mathbf{h}))=0\}
$
and
$
\{y'\in\Ycal, F(y';(\mathbf{n},\mathbf{h})) =0\}
$
have null Lebesgue measure.
\end{Hyp}
In the following we prove that conditions for survival of a small population can be expressed thanks to the fitness functions $s$ and $F$ (we will be interested in the survival of a mutant population). More precisely if a population has a non positive fitness, then it becomes extinct quickly. Otherwise, the population has a chance to invade the resident community. Therefore these fitness functions measure the selective advantage of a trait value in a given community. Assumption \ref{hyp:fitness} is equivalent to assume that every possible trait has either an advantage or a disadvantage in every stable equilibria of the $LVP$ system.\\
\noindent Combining Proposition~\ref{prop:stab} and Theorem \ref{thm:ALCP} we establish that
\begin{thm}
\label{thm:exis-uni-eq}
Under Assumptions \ref{hyp:ODE} and \ref{hyp:fitness}, for almost every $(\mathbf{x},\mathbf{y})\in\Xcal^d\times\Ycal^m$ there exists a unique globally asymptotically stable $LVP(\mathbf{x},\mathbf{y})$. Moreover this equilibrium satisfies \eqref{cond-stab}.
\end{thm}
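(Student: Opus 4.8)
The plan is to assemble three ingredients already available: Theorem~\ref{thm:ALCP}, which produces an equilibrium; Assumption~\ref{hyp:fitness}, which for almost every community forbids the borderline case of a vanishing invasion fitness; and Proposition~\ref{prop:stab}, which turns such a nondegenerate equilibrium into a globally asymptotically stable one. First I would fix $(\xbf,\ybf)\in\Xcal^d\times\Ycal^m$ and apply Theorem~\ref{thm:ALCP} with $M=C(\xbf)$, $B=B(\xbf,\ybf)$, $q=(-(b_1-d_1),\dots,-(b_d-d_d))^T$ and the non-negative vector $D=(D_1/r,\dots,D_m/r)^T$ (non-negativity holds since $r>0$ and each $D_l>0$). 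This yields a vector $(\nbf,\hbf)\in(\R_+)^{d+m}$ solving $LCP(\widetilde{I},\widetilde{R})$, hence, by the observation preceding \eqref{cond-stab-large}, an equilibrium of $LVP(\xbf,\ybf)$ satisfying \eqref{cond-stab-large}; multiplying the predator block of the complementarity relation by $r$ shows this vector also solves $LCP(I,R)$, so Assumption~\ref{hyp:fitness} applies to it.

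The core of the argument is to show that, outside a Lebesgue-null set of $(\xbf,\ybf)$, \emph{every} solution $(\nbf,\hbf)$ of $LCP(I,R)$ satisfies the strict inequalities \eqref{cond-stab}; by \eqref{cond-stab-large} it is enough to exclude equality, i.e.\ that $s(x_i;(\nbf,\hbf))=0$ for some $i$ with $n_i=0$, or $F(y_l;(\nbf,\hbf))=0$ for some $l$ with $h_l=0$. I would argue by slicing. Fix $i\le d$ and freeze every trait but $x_i$. If $(\nbf,\hbf)$ solves $LCP(I,R)$ for $(\xbf,\ybf)$ and $n_i=0$, then deleting the $i$-th prey coordinate gives a solution of the LCP attached to the reduced community $(\xbf_{-i},\ybf)$ (the equations for the remaining coordinates are unchanged since the deleted coordinate equals $0$), and $s(x_i;(\nbf,\hbf))$ coincides with the invasion fitness of the \emph{probe} trait $x_i$ against that reduced community, the $c(x_i,x_i)n_i$ term having dropped out. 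This last point is crucial: $s(x_i;\cdot)$ vanishes automatically whenever $x_i$ is a surviving resident trait, and becomes a genuine constraint only once $x_i$ is regarded as an outside probe, which is exactly the situation here. By Assumption~\ref{hyp:independance} the reduced community has only finitely many LCP solutions (at most one per choice of active support, by nondegeneracy of the associated subsystem), and Assumption~\ref{hyp:fitness}, applied to that reduced community, makes $\{x'\in\Xcal:\, s(x';\zbf)=0\}$ Lebesgue-null for each of these finitely many $\zbf$; hence, the frozen data being fixed, for a.e.\ $x_i$ no reduced-community equilibrium kills $s(x_i;\cdot)$. After checking measurability (on each region where the active support is fixed, the surviving coordinates are rational functions of the rates, with nonzero subsystem determinants as denominators), Fubini's theorem shows that the set of $(\xbf,\ybf)$ for which some $LCP(I,R)$ solution has $n_i=0$ and $s(x_i;\cdot)=0$ is null. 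Taking the union over $i\le d$, the symmetric statement over $l\le m$ for predators (freezing every trait but $y_l$ and using that $F(\cdot;\zbf)$ depends only on the prey coordinates of $\zbf$), and adjoining the null set of $\xbf$ where Assumption~\ref{hyp:mat} fails, yields an exceptional Lebesgue-null set $\mathcal{N}\subset\Xcal^d\times\Ycal^m$.

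Finally, for $(\xbf,\ybf)\notin\mathcal{N}$ both parts of Assumption~\ref{hyp:ODE} hold and the equilibrium built in the first step satisfies \eqref{cond-stab-large} with all the relevant invasion fitnesses nonzero, hence \eqref{cond-stab}; Proposition~\ref{prop:stab} then shows this equilibrium is globally asymptotically stable, and a globally asymptotically stable equilibrium is automatically unique, which gives both assertions. The main obstacle is the slicing/measurability step: one has to organize the (finitely many) LCP solutions by their support, control their dependence on the traits well enough for Fubini to apply, and perform cleanly the reduction that allows Assumption~\ref{hyp:fitness} to be invoked for the reduced community rather than for the full one. The first and last steps are essentially bookkeeping on top of Theorem~\ref{thm:ALCP} and Proposition~\ref{prop:stab}.
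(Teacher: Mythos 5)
Your proposal is correct and follows essentially the route the paper intends: existence of an equilibrium satisfying the weak inequalities \eqref{cond-stab-large} via Theorem~\ref{thm:ALCP}, strictness of \eqref{cond-stab} for almost every trait vector via Assumption~\ref{hyp:fitness}, and global asymptotic stability plus uniqueness via Proposition~\ref{prop:stab}. The paper presents the theorem simply as a combination of Proposition~\ref{prop:stab} and Theorem~\ref{thm:ALCP} without detailing the almost-everywhere step, and your slicing/Fubini argument (using that a vanishing component decouples the equilibrium from the corresponding trait, together with the finiteness of LCP solutions provided by Assumption~\ref{hyp:independance}) is a sound elaboration of exactly that omitted detail.
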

\noindent In the sequel we denote by $\zbf^*(\xbf,\ybf)=(\mathbf{n}^*(\mathbf{x},\mathbf{y}),\mathbf{h}^*(\mathbf{x},\mathbf{y}))$ the unique globally asymptotically stable equilibrium of the $LVP(\mathbf{x},\mathbf{y})$ system.
\noindent Under the same assumptions we can also establish the existence of a unique globally asymptotically stable equilibrium of the $LVC$ system introduced in \eqref{dp}. We denote by $\bar{\nbf}(\xbf)$ this equilibrium. 

\section{Consequence for the long time behavior of the stochastic process}
\label{sec:sortie-eq}
Let us fix $\xbf\in\Xcal^d$ and $\ybf\in\Ycal^m$ and denote by $\zbf^*=\zbf^*(\xbf,\ybf)$ the unique globally asymptotically stable equilibrium of the system $LVP(\xbf,\ybf)$. 
\red{In this section we study the long time behavior of the prey-predator community process $\Zbf^K$ defined in \eqref{Z}. In Proposition \ref{prop:cventempsfini}, we compare the stochastic process with its deterministic approximation on a finite time interval $[0,T]$, however, on longer time scales the stochastic process may exit the neighbourhood of this approximation. We first prove that $\Zbf^K$ enters in finite time in a neighbourhood of $\zbf^*$. Then, using a probabilistic argument of large deviation, we prove that the trajectory remains in a neighbourhood of $\zbf^*$ during a time of order $\exp(KV)$ for $V>0$. Finally we study the extinction time of small populations which are not adapted in the community.}\\
For every $\varepsilon>0$, we denote by $\mathcal{B}_{\varepsilon}$ the $\R^{d+m}$ sphere of radius $\varepsilon$ centred in $\zbf^*$.
\begin{prop}
\label{prop:cveq}
Let us assume Assumptions \ref{hyp:existence} and \ref{hyp:moment} and that the sequence of initial conditions $\Zbf^K(0)$ converges in probability toward a deterministic vector $\zbf(0)$, then for every $\varepsilon>0$, there exists $t_{\varepsilon}>0$ such that
$$
\lim_{K\to\infty} \Pro( \Zbf^K_{t_{\varepsilon}} \in \mathcal{B}_{\varepsilon})=1.
$$
\end{prop}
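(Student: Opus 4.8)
The goal is to show that the rescaled community process $\Zbf^K$ enters any fixed $\varepsilon$-neighbourhood $\mathcal{B}_\varepsilon$ of the globally asymptotically stable equilibrium $\zbf^*$ in a finite (deterministic) time $t_\varepsilon$, with probability tending to $1$ as $K\to\infty$. The strategy is to combine the finite-time law-of-large-numbers approximation of Proposition~\ref{prop:cventempsfini} with the global asymptotic stability established in Theorem~\ref{thm:exis-uni-eq}. The point is that on any $[0,T]$ the stochastic process is uniformly close to the deterministic solution $\zbf(t)$ of $LVP(\xbf,\ybf)$, and by global stability that solution itself gets arbitrarily close to $\zbf^*$ for $T$ large; so picking $T$ large and then invoking the approximation on $[0,T]$ does the job. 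The subtlety is that global asymptotic stability as stated is ``for every positive initial condition''; one must handle the limiting initial condition $\zbf(0)$ carefully — in particular whether some of its coordinates vanish.

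\textbf{Step 1: reduce to a deterministic convergence statement.} Fix $\varepsilon>0$. I first claim there is $T=T(\varepsilon)$ such that the solution $\zbf(t)$ of $LVP(\xbf,\ybf)$ started from $\zbf(0)$ satisfies $\zbf(T)\in\mathcal{B}_{\varepsilon/2}$. If all coordinates of $\zbf(0)$ are strictly positive this is immediate from the definition of global asymptotic stability (Theorem~\ref{thm:exis-uni-eq}, together with Proposition~\ref{prop:stab}). If some prey coordinates of $\zbf(0)$ vanish, one uses that for such $i$, $n_i(t)$ stays $0$ for all $t$, so the dynamics reduces to the corresponding subsystem $LVP(\xbf_{I},\ybf)$ on the surviving prey indices; by Assumption~\ref{hyp:independance} and Theorem~\ref{thm:exis-uni-eq} that subsystem has its own globally asymptotically stable equilibrium. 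This is not a problem for the conclusion because the limiting initial condition is \emph{given} and fixed; but strictly speaking the statement as phrased asserts convergence to $\zbf^*=\zbf^*(\xbf,\ybf)$, so one should either add the mild hypothesis that $\zbf(0)$ has all coordinates positive, or note (as the authors surely intend, given the Remark after Proposition~\ref{prop:cventempsfini} showing $h_l(t)>0$ for $t>0$ once $h_l(0)>0$, and the analogous lower bound $n_i(t)\ge n_i(0)e^{-(\bar d+\dots)t}$ for prey) that after an arbitrarily short time the solution has moved into the positive orthant, and then apply global stability from there. I would spell out the latter: for any $t_0>0$, $\zbf(t_0)$ lies in the interior of the orthant spanned by the initially-present coordinates, whence global stability of the full or reduced system applies from time $t_0$ onward.

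\textbf{Step 2: transfer to the stochastic process.} With $T$ fixed, apply Proposition~\ref{prop:cventempsfini}: $(\Zbf^K(t),t\in[0,T])$ converges in law in $\mathbb{D}([0,T],\R_+^{d+m})$ to $(\zbf(t),t\in[0,T])$. Since the limit is a deterministic continuous path, the convergence upgrades to convergence in probability for the uniform norm on $[0,T]$: $\sup_{t\in[0,T]}\norm{\Zbf^K(t)-\zbf(t)}\to 0$ in probability. Evaluating at $t=T$ and using $\norm{\zbf(T)-\zbf^*}<\varepsilon/2$ gives
\[
\Pro\bigl(\norm{\Zbf^K(T)-\zbf^*}<\varepsilon\bigr)\ \ge\ \Pro\Bigl(\sup_{t\in[0,T]}\norm{\Zbf^K(t)-\zbf(t)}<\varepsilon/2\Bigr)\ \xrightarrow[K\to\infty]{}\ 1.
\]
Setting $t_\varepsilon:=T(\varepsilon)$ yields exactly the claimed statement $\lim_{K\to\infty}\Pro(\Zbf^K_{t_\varepsilon}\in\mathcal{B}_\varepsilon)=1$.

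\textbf{Main obstacle.} The genuinely delicate point is Step 1: producing a \emph{single} time $T$ that works, i.e. making the qualitative ``$\zbf(t)\to\zbf^*$'' quantitative enough to pick $T(\varepsilon)$, and doing so uniformly in whatever dependence on $\zbf(0)$ is allowed. For a fixed $\zbf(0)$ this is automatic from global asymptotic stability, so the only real work is the orthant/boundary bookkeeping — checking that starting from a possibly-degenerate $\zbf(0)$ one still lands in the basin of $\zbf^*$, using the explicit lower bounds on the surviving coordinates together with Assumption~\ref{hyp:independance} to rule out the solution converging to a ``wrong'' subsystem equilibrium. Everything after that (Step 2) is the standard compactness–uniqueness consequence already in hand. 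I would therefore present Step 1 as the heart of the argument and keep Step 2 brief.
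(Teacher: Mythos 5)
Your proposal is correct and follows essentially the same route as the paper: finite-horizon convergence of $\Zbf^K$ to the $LVP(\xbf,\ybf)$ solution combined with the global asymptotic stability of Theorem~\ref{thm:exis-uni-eq} to fix a deterministic time $t_\varepsilon$; the only cosmetic difference is that the paper takes a supremum of entry times over a compact set of initial conditions (using continuity of the flow in the initial condition), whereas you work directly with the fixed limit $\zbf(0)$ through Proposition~\ref{prop:cventempsfini}. Your caveat about possible zero coordinates of $\zbf(0)$ (so that $\zbf(0)$ must lie in the basin of $\zbf^*$, e.g.\ all coordinates positive, since vanishing coordinates stay at zero and the flow then converges to a subsystem equilibrium) is a genuine point that the paper's proof also leaves implicit.
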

\begin{proof}
To prove this result we use classical techniques developed in \cite{Ethier&Kurtz} (Chapter 11, Theorem 2.1) to obtain the convergence in probability uniformly on a time interval of the process $\Zbf^K$: $\forall T>0$, $\forall\varepsilon>0$
$$\lim_{K\to\infty} \Pro\Bigl( \sup_{t\in [0,T]} \norm{\Zbf^K(t)-\zbf(t)}>\varepsilon \Bigr) =1,
$$
where $\zbf(t)$ is the solution of $LVP(\xbf,\ybf)$. The difficulty relies in the fact that the \red{birth and death rates} are only locally Lipschitz functions of the state of the process. However, as the limit function $\zbf(t)$ takes values in a compact set of $\R^{d+m}$, we overcome this difficulty by regularizing the \red{birth and death rates} outside a sufficient large compact set. \\
Moreover there exists a compact set $C$ containing the sequence of initial conditions $(\Zbf^K(0))_{K\ge0}$ with probability converging to $1$. 
We set for every initial condition $z_0\in C$ the last time $t_{\varepsilon}(z_0)$ where the deterministic solution $\zbf(t)$ enters $\mathcal{B}_{\varepsilon}$. This time is finite according to Theorem \ref{thm:exis-uni-eq}.  Since the solutions of the $LVP(\xbf,\ybf)$ system are continuous with respect to their initial condition, the time $t_{\varepsilon}=\sup_{z_0\in C} t_{\varepsilon}(z_0)$ is finite and satisfies that
$
\forall t>t_{\varepsilon}$, $\sup_{\{z_0\in C\}} \norm{\zbf(t)-z^* }<\varepsilon.
$
Combining these two results, we conclude the proof of Proposition \ref{prop:cveq}.
\end{proof}
\bigskip
We then study the time spent by $\Zbf^K$ in the neighbourhood of $\zbf^*$. The estimate of \textit{the exit time of an attractive neighbourhood} gives a good scaling for the introduction of rare mutations in the next section.
This result relies usually on the large deviation theory. However, classical techniques cannot be applied in our setting since the \red{birth and death} rates of $\Zbf^K$ are not bounded uniformly away from zero.
We introduce here a different method which allows to extend the result to perturbations of the process $\Zbf^K$. In particular, we aim at considering small mutant populations that interact with the process $\Zbf^K$ or at modifying the \red{birth and death} rates introduced in Section~\ref{sec:Model}.
\red{Another interest in considering perturbations of the process is the study of the stability or resilience of this prey-predator network (see the seminal work of May \cite{may2001stability} or \cite{thebault2010stability},\cite{ives2007stability} for more recent references)}.\\
\noindent We define a perturbation \red{$\Zm^K=(\mathcal{N}^K_1,\cdots,\mathcal{N}^K_d,\mathcal{H}^K_1,\cdots,\mathcal{H}^K_m)$} of the process $\Zbf^K$ by $2$ families of $d+m$ real-valued random processes $(u^K_i)_{1\le i \le d+m}$ and $(v^K_i)_{1\le i \le d+m}$ predictable with respect to the filtration $\mathcal{F}_t$ generated by the sequence of processes $\Zbf^K$. 
\red{
The sequence $(u^K_i)_{1\le i \le d+m}$ describes the modifications of the birth rates of the prey and the predator populations while the sequence $(v^K_i)_{1\le i \le d+m}$ gives the modifications of the death rates. The modified process evolves as follows:
\begin{itemize}
\item For $1\le i\le d$, the perturbed prey population $\mathcal{N}^K_i$ evolves as a birth and death process with individual birth rate $b(x_i)+u_i^K(t)$ and individual death rate $\lambda(x_i,\Zm^K(t))+v^K_i(t)$ at time $t$.
\item For $1\le l\le m$, the perturbed predator population $\mathcal{H}^K_l$ evolves as a birth and death process with individual birth rate $r\sum_{i=1}^dB(x_i,y_l)\mathcal{N}^K_i+u_{d+l}^K(t)$ and individual death rate $D(y_l)+v^K_{d+l}(t)$ at time $t$.
\end{itemize}
In the case where $u^K_i=v^K_i=0$ for all $1\le i\le d+m$ the process $\Zm^K$ is the prey-predator community process $Z^K$.\\
We assume that the processes $(u^K_i)_{1\le i \le d+m}$ and $(v^K_i)_{1\le i \le d+m}$ are uniformly bounded by $\kappa$.
}
\begin{thm}
\label{thm:sortie_eq}
For every $\varepsilon$ small enough, there exist a constant $V_{\varepsilon}>0$ and $\varepsilon''<\varepsilon$ such that if $\kappa$ is small enough and $\Zm^K(0)\in\mathcal{B}_{\varepsilon''}$, then the probability that the process $
(\Zm^K(t); t\ge0)$ exits the neighbourhood $\mathcal{B}_{\varepsilon}$ after a time $e^{V_{\varepsilon}K}$
converges to $1$ as $K\to\infty$.
\end{thm}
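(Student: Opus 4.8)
The plan is to exploit the Lyapunov function $L$ constructed in the proof of Proposition~\ref{prop:stab} as a quasi-potential for the perturbed process $\Zm^K$, following the strategy of Champagnat, Jabin and M\'el\'eard \cite{CJM13}. The key point is that $L$ decreases strictly along the deterministic flow away from $\zbf^*$, with a derivative that is bounded above by a negative-definite quadratic form in $(\nbf-\nbf^*, \text{(the relevant combinations of }\hbf-\hbf^*))$ on the annulus $\mathcal{B}_\varepsilon \setminus \mathcal{B}_{\varepsilon'}$; this gives a uniform drift $-\delta_\varepsilon < 0$ there. First I would fix $\varepsilon$ small enough that $\mathcal{B}_\varepsilon$ is contained in a region where all coordinates stay bounded away from the spurious zero set (using the strict inequalities \eqref{cond-stab} to control the boundary terms involving $P$ and $Q$), and pick nested radii $\varepsilon'' < \varepsilon' < \varepsilon$.

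The core estimate is a semimartingale decomposition of $t \mapsto L(\Zm^K(t))$. Writing the generator of $\Zm^K$ applied to $L$ and comparing it with $\frac{d}{dt}L(\zbf(t))$, one gets $\mathcal{A}^K L(\zbf) = \nabla L(\zbf)\cdot F(\zbf) + O(1/K) + (\text{perturbation terms of size } O(\kappa))$, where $F$ is the vector field of $LVP(\xbf,\ybf)$: the $O(1/K)$ comes from the discreteness of the jumps (second-order Taylor terms, controlled because $\partial^2 L$ is bounded on $\mathcal{B}_\varepsilon$ where the $n_i, h_l$ are bounded below, so $\log$ is Lipschitz), and the $O(\kappa)$ from the $u^K_i, v^K_i$. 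Hence, for $K$ large and $\kappa$ small enough, $\mathcal{A}^K L(\zbf) \le -\delta_\varepsilon/2$ uniformly on the annulus $\mathcal{B}_\varepsilon \setminus \mathcal{B}_{\varepsilon''}$. The martingale part $M^K_t$ has predictable quadratic variation $\langle M^K\rangle_t = O(t/K)$ on this region (again using boundedness of rates and of $\nabla L$). From here one runs the standard argument: as long as $\Zm^K$ stays in $\mathcal{B}_\varepsilon$, $L(\Zm^K(t)) - L(\Zm^K(0)) + \frac{\delta_\varepsilon}{2} t \le M^K_t$, and on the interval $[0, e^{V_\varepsilon K}]$ an exponential (Bernstein-type) martingale inequality, with $V_\varepsilon$ chosen small relative to the gap $\min_{\partial\mathcal{B}_\varepsilon} L - \max_{\mathcal{B}_{\varepsilon''}} L > 0$, shows that $\sup_{t \le e^{V_\varepsilon K}} |M^K_t|$ exceeds that gap with probability $\to 0$. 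Combined with Proposition~\ref{prop:cveq} to bring the process into $\mathcal{B}_{\varepsilon''}$ first (here starting already in $\mathcal{B}_{\varepsilon''}$), this gives that $\Zm^K$ cannot have left $\mathcal{B}_\varepsilon$ before time $e^{V_\varepsilon K}$, with probability $\to 1$.

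Two technical wrinkles need care. First, $L$ involves $\log n_i$ and $\log h_l$, which are only well-behaved away from the axes; so one must argue that before the process reaches a neighbourhood of a coordinate hyperplane it must first exit $\mathcal{B}_\varepsilon$ (since $\zbf^*$ has all its ``surviving'' coordinates bounded below and the ``extinct'' coordinates $i \in P$, $l \in Q$ are handled by the strict negativity in \eqref{cond-stab}, which makes those coordinates contract toward $0$ and keeps the corresponding $L$-terms controlled — one may in fact restrict $L$ to the surviving coordinates and treat the extinct ones via a separate supermartingale comparison with a subcritical branching process). Second, one must verify that the perturbation terms genuinely enter only at order $\kappa$ in $\mathcal{A}^K L$: since $u^K_i, v^K_i$ are predictable and bounded by $\kappa$, and $\partial_{n_i} L, \partial_{h_l} L$ are bounded on $\mathcal{B}_\varepsilon$, this is immediate, but it forces the final quantifier order to be: choose $\varepsilon$, then $V_\varepsilon$ and $\varepsilon''$, then $\kappa_0(\varepsilon)$ small enough, then let $K \to \infty$ — exactly as in the statement.

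The main obstacle I anticipate is \emph{not} the large-deviation bookkeeping (which is routine once the drift estimate is in place) but establishing the uniform strict decay $\mathcal{A}^K L \le -\delta_\varepsilon/2$ on the annulus \emph{including the boundary terms indexed by $P$ and $Q$}: the derivative \eqref{deriv-L} of $L$ along the flow is negative-definite only in the $(\nbf-\nbf^*)$ directions and in the combinations $\sum_l B_{il}(h_l - h_l^*)$, so one must argue that, on $\mathcal{B}_\varepsilon \setminus \mathcal{B}_{\varepsilon''}$ and after the subsystem is genuinely non-degenerate (Assumption~\ref{hyp:independance}), these quantities together with the strictly negative invasion-fitness terms for $i \in P$, $l \in Q$ cannot all be simultaneously small — precisely the content that made the equilibrium in Proposition~\ref{prop:stab} unique. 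Turning that qualitative uniqueness into a quantitative lower bound $\delta_\varepsilon$ on $-\frac{d}{dt}L$ over the compact annulus, uniformly in the admissible perturbations, is the step requiring the most care.
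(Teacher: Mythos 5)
Your setup is the same as the paper's (the Lyapunov function $L$ from Proposition~\ref{prop:stab}, the semimartingale decomposition of $L(\Zm^K(t))$ with $O(1/K)$ Taylor remainders and $O(\kappa)$ perturbation terms, a martingale part with quadratic variation $O(t/K)$), and your worry about the $P,Q$ boundary terms is in fact handled by the chain \eqref{Prop1}--\eqref{Prop2} (note $L$ is \emph{linear} in the extinct coordinates, so there is no $\log$ issue there, and the surviving coordinates are bounded away from $0$ on $\mathcal{B}_\varepsilon$). The genuine gap is in your concluding step. You propose a single application of a Bernstein-type inequality to $M^K$ over the whole window $[0,e^{V_\varepsilon K}]$, together with the pathwise bound $L(\Zm^K(t))-L(\Zm^K(0))+\tfrac{\delta_\varepsilon}{2}t\le M^K_t$ ``as long as the process stays in $\mathcal{B}_\varepsilon$''. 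That inequality is false: your drift estimate $\mathcal{A}^KL\le-\delta_\varepsilon/2$ holds only on the annulus $\mathcal{B}_\varepsilon\setminus\mathcal{B}_{\varepsilon''}$, while inside $\mathcal{B}_{\varepsilon''}$ — where the process spends almost all of its time — the drift of $L(\Zm^K)$ is only bounded by $+C(\kappa+1/K)$, and over a time $e^{V_\varepsilon K}$ this accumulates to order $\kappa e^{V_\varepsilon K}$, which cannot be absorbed for fixed $\kappa$. Independently, the exponential martingale bound cannot be applied once over an exponentially long horizon: with $\langle M^K\rangle_t=O(t/K)$, the constant $V_{\alpha,T}$ of Lemma~\ref{lem:mart} degrades as $T$ grows (roughly like $1/T$), so for $T=e^{V_\varepsilon K}$ the bound $\exp(-KV_{\alpha,T})$ is not small at all. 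A one-shot supermartingale/Bernstein argument of Freidlin--Wentzell type simply does not reach exponential time scales.

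What is missing is the excursion (renewal) structure that the paper takes from \cite{CJM13}: one works with a \emph{fixed} horizon $T$ of order $\varepsilon'/\kappa$, uses the threshold $2C'''\kappa$ on $\norm{\Zm^K-\zbf^*}^2$ (via the stopping time $S_\kappa$ and the quantity $T_\kappa$ in \eqref{T_eta}) so that the $O(\kappa)$ error is dominated by the negative drift above the threshold, and then decomposes the trajectory into back-and-forth excursions between $\mathcal{B}_{\varepsilon''}$ and $\mathcal{B}_{\varepsilon'/2}$ before exiting $\mathcal{B}_\varepsilon$. Lemma~\ref{lem:mart} is applied on each excursion (fixed $T$), the strong Markov property restarts the estimate at each return to $\mathcal{B}_{\varepsilon''}$, the number of excursions $k_\varepsilon$ is then stochastically bounded below by a geometric variable with parameter $e^{-KV}$, and a separate argument shows each excursion lasts a time of order $1$, which is what converts the count $k_\varepsilon>e^{KV/2}$ into the exit-time estimate $\tau^K_\varepsilon>e^{KV}$. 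You introduce the nested radii $\varepsilon''<\varepsilon'<\varepsilon$ but never use them in this way, and you have neither the excursion counting nor the lower bound on excursion durations; without these, the argument as written does not yield an exponential exit time.
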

\noindent The results is obtained using the method developed by Champagnat, Jabin et M\'el\'eard (Proposition 4.2  \cite{CJM13}). We detail the proof in Appendix~\ref{app:proof_sortie_eq} and give hereby the main ideas in the non perturbed setting.
\begin{proof}[Ideas of the proof]
We recall the definition of $P$ and $Q$ in \eqref{PQ} and set 
$$\norm{\zbf-\zbf^*}_{PQ}=\sum_{i\notin P}|n_i-n_i^*|^2 +\sum_{i\in P}|n_i|+\sum_{l\notin Q}|h_l-h_l^*| +\sum_{l\in Q}|h_l|.$$
The Lyapunov function $L$ for the system \eqref{dpmp} defined by \eqref{Lyap-L} with an appropriate choice of $\gamma$ is smooth in the neighbourhood of $\zbf^*$. 
In particular we can define three non negative constants $C$, $C'$ and $C''$ such that
\bean
\label{Prop1}
  \norm{\zbf-\zbf^*}^2&\le  \norm{\zbf-\zbf^*}_{PQ}\le  C\Bigl(L(\zbf)-L(\zbf^*)\Bigr)\le
CC'  \norm{\zbf-\zbf^*}_{PQ},
\eean
and
\begin{equation}
  \label{Prop2}
\frac{d}{dt}L(\zbf(t)) \le -C''  \norm{\zbf-\zbf^*}^2,
\end{equation}
We introduce the stopping time $\tau^K_{\varepsilon} =\inf\{t\ge 0, \Zbf^K\notin\mathcal{B}_{\varepsilon}\}$.
\red{Let $T$ be a positive time to be chosen afterwards.} Thanks to the semi-martingale decomposition of the process $L(\Zbf^K(t))$ we can prove that for every $K$ large enough, there exists $C'''>0$ such that for all $ t\le T\wedge \tau_{\varepsilon}^K$:
\bean
\label{maj_norm}
\norm{\Zbf^K(t)-\zbf^*}^2\le
C&\Bigl[C'\norm{\Zbf^K(0)-\zbf^*}_{PQ}+ \sup_{[0,T]}|M^K_t| -C''\int_0^t\norm{\Zbf^K(s)-\zbf^*}^2-C'''\frac{1}{K}  ds\Bigr],
\eean  
where $M^K_t$ is a local martingale with zero mean which can be written explicitly using compensated Poisson point measures \red{(see Appendix \ref{app:proof_sortie_eq}).}\\
We define for every $\kappa>1/K$, $S_{\kappa}=\inf\{t\ge0, \norm{\Zbf^K(t)-\zbf^*}^2\le 2C'''\kappa\}$ and introduce
\begin{equation}
\label{T_eta}
 T_{\kappa}=\frac{C'(\norm{\Zbf^K(0)-\zbf^*}_{PQ})+
\sup_{[0,T]}|M^K(t)|}{C''C'''\kappa},
\end{equation}
which represents the maximal time that the process $\norm{\Zbf^K-\zbf^*}^2$ can spend above the threshold $2C'''\kappa$ before the time $T\wedge \tau^K_{\varepsilon}$.
The inequality \eqref{maj_norm} becomes for all $t\le S_{\kappa}\wedge T
\wedge\tau_{\varepsilon}^K$ 
\be
\norm{\Zbf^K(t)-\zbf^*}^2\le CC''C'''\kappa T_{\kappa}.
\ee
This equation connects the time spent by the process outside a ball, with the values taken by $\norm{\Zbf^K-\zbf^*}^2$ during this time interval. Therefore if we bound the values of $T_{\kappa}$, we control the process $\norm{\Zbf^K-\zbf^*}^2$ and consequently the exit time $\tau^K_{\varepsilon}$.
To estimate $T_{\kappa}$ we need to control exponentially the values of the martingale $M^K_t$ uniformly on a time interval. To this aim, we use the following lemma.
\begin{lemma}[Graham, M\'el\'eard - Proposition 4.1 \cite{GM97}]
\label{lem:mart}
For every $\alpha>0$ and $T>0$ there exists a constant $V_{\alpha,T}$ satisfying that for all $K$ large enough:
$$\Pro\bigl(\sup_{[0,T\wedge \tau_{\varepsilon}^K]} |M^K_t| >\alpha \bigr)\le
\exp(-KV_{\alpha,T})$$
\end{lemma}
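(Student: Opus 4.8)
This is exactly Proposition 4.1 of \cite{GM97}, and the strategy is a Bernstein-type exponential martingale estimate exploiting that the jumps of $M^K_t$ have size $O(1/K)$ while the total jump intensity on the relevant event is $O(K)$. First I would use the construction of Appendix \ref{app:poisson} to write $M^K_t$ explicitly as a sum of stochastic integrals against the compensated Poisson point measures driving $\Zbf^K$ (resp.\ $\Zm^K$ in the perturbed case): a jump moves the process by $\pm 1/K$ along one coordinate, and since the Lyapunov function $L$ is $C^1$ on a neighbourhood of $\zbf^*$, the corresponding jump $g$ of $M^K$ satisfies $|g|\le c_\varepsilon/K$ on the event $\{t\le\tau_\varepsilon^K\}$. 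I would then work with the stopped martingale $M^K_{t\wedge\tau_\varepsilon^K}$: all its jumps occur while the process is inside the compact ball $\mathcal{B}_\varepsilon$, where the per-individual birth and death rates are bounded (continuity of $b,d,c,B,D$ on compacts, the perturbations bounded by $\kappa\le 1$, and $N^K+H^K\le c_\varepsilon K$ on $\mathcal{B}_\varepsilon$), so the total jump intensity is bounded by $c_\varepsilon K$. This confinement is the sole reason for stopping at $\tau_\varepsilon^K$ and is what rescues the argument even though the rates of $\Zbf^K$ are not globally bounded.

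For a constant $\lambda\in(0,1]$ set $\theta=\lambda K$ and consider the exponential process
\[
\mathcal{E}^K_t=\exp\Bigl(\theta M^K_{t\wedge\tau_\varepsilon^K}-\int_0^{t\wedge\tau_\varepsilon^K}\!\!\int\bigl(e^{\theta g}-1-\theta g\bigr)\,\nu(ds,du)\Bigr),
\]
where $\nu$ is the intensity of the driving measures and $g=g(s,u)$ is the associated jump of $M^K$. By the standard exponential-formula argument $\mathcal{E}^K$ is a nonnegative local martingale, hence a supermartingale, with $\mathcal{E}^K_0=1$ (the rate bound would also give the integrability needed to make it a true martingale, but a supermartingale suffices). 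Since $|\theta g|\le c_\varepsilon\lambda\le c_\varepsilon$ is bounded and $e^x-1-x\le\tfrac12 e^{|x|}x^2$, each atom contributes at most $c'_\varepsilon\lambda^2$, with $c'_\varepsilon$ independent of $K$ and of $\lambda\in(0,1]$; integrating against $\nu$, whose rate is $\le c_\varepsilon K$, gives for $t\le T$
\[
\int_0^{t\wedge\tau_\varepsilon^K}\!\!\int\bigl(e^{\theta g}-1-\theta g\bigr)\,\nu(ds,du)\le c''_\varepsilon\lambda^2 K T .
\]

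Finally I would apply Doob's maximal inequality to the supermartingale $\mathcal{E}^K$. For $t\le T\wedge\tau_\varepsilon^K$ one has $e^{\theta M^K_t}\le e^{c''_\varepsilon\lambda^2 KT}\mathcal{E}^K_t$, hence
\[
\Pro\Bigl(\sup_{[0,T\wedge\tau_\varepsilon^K]}M^K_t>\alpha\Bigr)=\Pro\Bigl(\sup_{[0,T\wedge\tau_\varepsilon^K]}e^{\theta M^K_t}>e^{\theta\alpha}\Bigr)\le e^{-\theta\alpha+c''_\varepsilon\lambda^2 KT}\,\E[\mathcal{E}^K_0]=e^{-K(\lambda\alpha-c''_\varepsilon\lambda^2 T)} .
\]
Running the same computation for $-M^K$ (also a martingale with $O(1/K)$ jumps and the same intensity) and summing handles $|M^K_t|$, at the cost of a harmless factor $2$ absorbed for $K$ large. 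It remains to pick $\lambda\in(0,1]$ small enough, depending only on $\alpha$, $T$ and $c''_\varepsilon$ — e.g.\ $\lambda=\min\{1,\alpha/(2c''_\varepsilon T)\}$ — so that $\lambda\alpha-c''_\varepsilon\lambda^2 T>0$, and set $V_{\alpha,T}$ to be (half of) this positive number. Everything uses only the two structural facts "$O(1/K)$ jumps, $O(c_\varepsilon K)$ intensity on $\mathcal{B}_\varepsilon$", which hold verbatim for the perturbed process $\Zm^K$ when $\kappa\le 1$; the main (indeed only) obstacle is checking that stopping at $\tau_\varepsilon^K$ really delivers these bounds, which follows from Assumption \ref{hyp:existence} together with the boundedness of $\mathcal{B}_\varepsilon$ — so in the write-up I would state this verification and then invoke \cite{GM97}.
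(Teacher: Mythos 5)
Your proposal is correct, and it supplies precisely the argument that the paper itself omits: the paper proves nothing here, it only invokes Proposition 4.1 of \cite{GM97}, whose content is exactly the Bernstein-type exponential supermartingale estimate you reconstruct (jumps of size $O(1/K)$ via the $\mathcal{C}^1$ bound on $L$ near $\mathcal{B}_\varepsilon$, total jump intensity $O(c_\varepsilon K)$ up to $\tau_\varepsilon^K$, exponential parameter $\theta=\lambda K$, compensator bound $c''_\varepsilon\lambda^2 KT$, maximal inequality, optimization in $\lambda$, and the symmetric bound for $-M^K$). Two small points to state cleanly in a write-up: the compensator is not an integral against a deterministic intensity $\nu$ but against the predictable, state-dependent rates appearing in the indicators of \eqref{mart-taille}; since these rates are bounded by $c_\varepsilon K$ before $\tau_\varepsilon^K$ (population sizes at most $c_\varepsilon K$ on $\mathcal{B}_\varepsilon$, per-capita rates bounded by Assumption \ref{hyp:existence}, perturbations bounded by $\kappa$), your estimate is unaffected. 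Also, the final jump at time $\tau_\varepsilon^K$ may land just outside $\mathcal{B}_\varepsilon$, but only at distance $1/K$, and $L$ is smooth on a slightly larger ball for $\varepsilon$ small (the logarithmic terms of $V$ only involve coordinates bounded away from $0$ there), so the $O(1/K)$ jump bound for the stopped martingale still holds. With these verifications made explicit, your proof stands on its own and matches the route of the cited reference.
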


\noindent With this result and \eqref{maj_norm} we study for $\varepsilon''<\varepsilon'<\varepsilon$, the number of back and forth, $k_{\varepsilon}$ between the balls $\mathcal{B}_{\varepsilon''}$ and $\mathcal{B}_{\varepsilon'}$ before the exit of $\mathcal{B}_{\varepsilon}$. 
\red{With an appropriate choice of the parameters $\varepsilon'$ and $T$, w}e establish that $k_{\varepsilon}$ is smaller than a geometric random variable with parameter $\exp(-KV)$, thus
$$\Pro(k_{\varepsilon}>\exp(KV/2))=1-(1-\exp(-KV))^{\exp(KV/2)} \underset{K\to\infty}{\longrightarrow} 1
$$
To conclude it remains to show, using \eqref{maj_norm} again, that these back and forth require a time of order 1.
\end{proof}
\bigskip
Finally we study the behavior of the process while it remains close to the equilibrium $\zbf^*$. The equilibrium $\zbf^*$ can have zero components and we establish that the associated stochastic sub-populations become extinct in a time of order $\log K$. We introduce the stopping time
$$
S^K_{ext}=\inf\{t\ge0, \forall i\in P, N^K_i(t)=0 \text{ and } \forall l\in Q, H^K_l(t)=0\},
$$ 
and set $S^K_{ext}=0$ if both $P$ and $Q$ are empty.
\begin{prop}
\label{prop:extinction}
Let $\varepsilon>\varepsilon''>0$ small enough. If the initial condition $\Zbf^K(0)\in\mathcal{B}_{\varepsilon''}$,then there exists $a>0$ such that
$$\lim_{k\to\infty} \Pro(S^K_{ext}\le a\log K) =1.$$ 
\end{prop}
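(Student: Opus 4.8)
The plan is to combine the exit-time estimate of Theorem~\ref{thm:sortie_eq} with a stochastic domination of the vanishing sub-populations by subcritical birth and death processes. First I would use that, by Theorem~\ref{thm:exis-uni-eq}, the globally asymptotically stable equilibrium $\zbf^*$ satisfies the \emph{strict} inequalities \eqref{cond-stab}: for $i\in P$ the prey invasion fitness satisfies $s(x_i;\zbf^*)<0$ and for $l\in Q$ the predator invasion fitness satisfies $F(y_l;\zbf^*)<0$. Since $s$ and $F$ depend continuously on the community vector, one can fix $\varepsilon>0$ small enough that Theorem~\ref{thm:sortie_eq} applies (with its associated $\varepsilon''<\varepsilon$, and with $\kappa=0$ since there is no perturbation here) and, moreover, that there exists $\delta>0$ with $s(x_i;\zbf)\le-\delta$ for every $i\in P$ and $F(y_l;\zbf)\le-\delta$ for every $l\in Q$, uniformly over $\zbf\in\mathcal{B}_{\varepsilon}$. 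If $P=Q=\emptyset$ the statement is trivial by definition of $S^K_{ext}$, so assume otherwise.

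Next I would introduce $\tau^K_{\varepsilon}=\inf\{t\ge0,\ \Zbf^K(t)\notin\mathcal{B}_{\varepsilon}\}$ and note that, by Theorem~\ref{thm:sortie_eq}, $\Pro(\tau^K_{\varepsilon}>a\log K)\to1$ for any fixed $a>0$, because $e^{V_{\varepsilon}K}$ eventually dominates $a\log K$. On this event, on the interval $[0,a\log K]$ each sub-population indexed by $P$ or $Q$ evolves as a birth and death process whose individual death rate exceeds its individual birth rate by at least $\delta$: indeed $\lambda(x_i,\Zbf^K(t))-b(x_i)=-s(x_i;\Zbf^K(t))\ge\delta$ for $i\in P$, and the predator individual birth rate $r\sum_{j=1}^d B(x_j,y_l)N^K_j(t)/K=D(y_l)+F(y_l;\Zbf^K(t))\le D(y_l)-\delta$ for $l\in Q$. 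I would then couple, up to time $\tau^K_{\varepsilon}$, each $N^K_i$ ($i\in P$) and each $H^K_l$ ($l\in Q$) with a linear subcritical birth and death process of individual net growth rate $-\delta$, started respectively from $N^K_i(0)\le\varepsilon'' K$ and $H^K_l(0)\le\varepsilon'' K$ (using $n^*_i=0$ for $i\in P$, $h^*_l=0$ for $l\in Q$, and $\Zbf^K(0)\in\mathcal{B}_{\varepsilon''}$). Since both prey and predator reproduction rates are proportional to the size of the reproducing sub-population, a sub-population that hits $0$ stays at $0$ thereafter, so the event $\{S^K_{ext}\le a\log K\}$ coincides with the event that all these sub-populations are extinct at time $a\log K$.

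Finally, for a linear subcritical process $\xi$ of growth rate $-\delta$ one has $\E(\xi(t))=\xi(0)e^{-\delta t}$, hence $\Pro(\xi(t)\ge1)\le\varepsilon'' Ke^{-\delta t}$ by the Markov inequality; choosing $a>1/\delta$ and $t=a\log K$ makes this at most $\varepsilon'' K^{1-a\delta}\to0$. A union bound over the finitely many indices in $P\cup Q$, together with $\Pro(\tau^K_{\varepsilon}>a\log K)\to1$, then yields that with probability tending to $1$ all the sub-populations indexed by $P$ and $Q$ are extinct by time $a\log K$, i.e. $S^K_{ext}\le a\log K$. I expect the only delicate point to be the pathwise coupling itself, which must be carried along the random horizon $\tau^K_{\varepsilon}\wedge a\log K$: the subcritical domination is valid only while $\Zbf^K$ remains in $\mathcal{B}_{\varepsilon}$ (that is where the fitness bounds hold), and it is precisely the exit-time estimate of Theorem~\ref{thm:sortie_eq} that guarantees this for a time much longer than $a\log K$. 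Everything else is a routine first-moment argument for subcritical branching processes.
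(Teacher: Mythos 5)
Your proposal is correct and follows essentially the same route as the paper: strict negativity of the fitnesses $s(x_i;\zbf^*)$, $F(y_l;\zbf^*)$ for $i\in P$, $l\in Q$ (Theorem \ref{thm:exis-uni-eq}), domination of each vanishing sub-population by a subcritical linear birth and death process as long as $\Zbf^K$ stays in $\mathcal{B}_{\varepsilon}$, which Theorem \ref{thm:sortie_eq} guarantees for a time $e^{V_{\varepsilon}K}\gg a\log K$, and extinction of the dominating process started from at most $\varepsilon'' K$ individuals by a time of order $\log K$. The only (cosmetic) difference is the last step, where you use the first-moment bound $\E(\xi(t))=\xi(0)e^{-\delta t}$ and Markov's inequality instead of the explicit extinction-probability formula for linear birth and death processes used in the paper; both give the same $\log K$ threshold.
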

\begin{proof}
Fix $l\in Q$. We prove the result for the predator population holding trait $y_l$, and the same reasoning can be applied to a prey population holding trait $x_i$, for $i\in P$ (see Theorem 4 in \cite{Champagnat06}).\\ 
Theorem~\ref{thm:exis-uni-eq} ensures that the fitness $F(y_l;\zbf^*)$ is negative.
We define the constant $V_{\varepsilon}$ associated by Theorem~\ref{thm:sortie_eq} to the exit time $\tau^K_{\varepsilon}$ of the ball $\mathcal{B}_{\varepsilon}$.
For every $t\le \tau^K_{\varepsilon}$, the number of predators $H^K_l(t)$ is bounded from below by a continuous time birth and death process $H$ with birth rate $\lambda=r\sum_{i=1}^dB(x_i,y_l) (n^*_i+\varepsilon)$, death rate $\mu=D(y_l)$ and initial condition $H^K_l(0)\le K\varepsilon''$. We choose $\varepsilon$ small enough for the process $H$ to be sub-critical: $\varepsilon < -F(y_l,\zbf^*)/(r\sum_{i=1}^dB(x_i,y_l))$.\\
From classical results on branching processes (see \cite{AthreyaNey} p.109), we obtain that 
$$
\Pro(H(t)=0|H(0)=1 )=1-\frac{\mu-\lambda}{\mu\exp(-(\lambda-\mu)t) + \lambda}.
$$
Since $\forall h_0\in \N$, $\Pro(H(t)>0|H(0)=h_0)=1-\Pro(H(t)=0|H(0)=1)^{h_0}$, we deduce that for every initial condition $0\le h_0\le K\varepsilon''$, 
$$
\Pro(H(t)>0|H(0)=h_0)\le 1- \bigl(1-\frac{\mu-\lambda}{\mu\exp(-(\lambda-\mu)t) + \lambda}\bigr)^{K\varepsilon''}.
$$
We set $1>\delta>0$ and apply the previous inequality to the positive time $t_K^l= (\frac{\delta-1}{\lambda-\mu})\log(K)$. We obtain that $\forall 0\le h_0\le K\varepsilon''$, 
$$
\Pro(H(t_K^l)>0|H(0)=h_0)\le 1- \bigl(1-\frac{\mu-\lambda}{\mu K^{1-\delta} + \lambda}\bigr)^{K\varepsilon''}\underset{K\to\infty}{\longrightarrow} 0.
$$
We conclude the proof by choosing $a\log K$ as the maximal $t^l_K$ for $l\in Q\cup P$.
\end{proof}
 \section{Evolution of the process in a rare mutation time scale}
\label{sec:MUT}
In this section, mutations happen during the prey and predator reproduction events. We observe their impact on the dynamics of the community. The coevolution of the traits depends on the occurrence of mutations and the invasion of the mutant population. We seek conditions for the survival of a mutant population and study the consequences of the fixation of a mutation for the prey-predator community.\\
The individual birth and death rates are defined as in Section~\ref{sec:Model}. 
The mutation events are added as follows
\begin{itemize}
\item when a prey individual with trait $x$ gives birth, the trait of its offspring is affected by a mutation with probability $u_Kp(x)$. The newborn holds a trait $x+l$ where $l$ is distributed according to $\pi(x,l)dl$. Otherwise (with probability $1-u_Kp(x)$) the newborn inherits its parent trait $x$.
\item Similarly for each predator holding a trait $y$. At each reproduction event, with probability $u_KP(y)$ the trait of the offspring is affected by a mutation: it holds the trait $y+l$ where $l$ is distributed according to $\Pi(y,l)dl$. Otherwise the newborn inherits its parent trait $y$.
\end{itemize}
\red{The same parameter $u_K$ scales the mutation frequencies in both prey and predator populations. This assumption is consistent with the fact that the demographic dynamics of both populations happens on the same time scale (Section \ref{sec:sortie-eq}).
When the parameter $u_K$ is small, the mutations are rare. We assume in the sequel that $Ku_K\to0$ as $K\to \infty$. This assumption measures the rarity of the mutations and is consistent with the theory of adaptive dynamics (\cite{Metz92},\cite{DieckmannLaw96}).\\
} 
In subsection \ref{subsec:simu} we illustrate the impact of mutations on the example introduced in section \ref{subsec:example}. In subsection \ref{subsec:mutrare} we consider the limit of the community process under the assumptions of infinite population and rare mutations. We extend the results obtained by Champagnat \cite{Champagnat06} to the prey-predator coevolution. Finally in subsection \ref{subsec:eqcano} we consider a limit when the mutation steps are small. We prove that the coevolution of the prey and predator traits can be described by the deterministic coupled system of differential equations introduced by Dieckmann, Law and Marrow \cite{Marrow96}. This system extends the \textit{canonical equation of adaptive dynamics} to the coevolution of a prey-predator interaction.

\subsection{Simulations}
\label{subsec:simu}
Let us consider again the example introduced in section~\ref{subsec:example} in which prey individuals are characterized by a trait $x=(q_n,q_a)$ where $q_n$ is the quantity of quantitative defenses they produce and $q_a$ the type of qualitative defense they use. The predators are characterized by $y=(\rho,\sigma)$ where $\rho$ reflects the qualitative value they prefer and $\sigma$ is their range. The mutations are distributed according to gaussian distributions, centred in the trait of the parent with covariance matrices $\gamma$ and $\Gamma$ for prey and predators respectively.\\
We illustrate in different cases the impact of mutations on the community. We will observe the convergence on the rare mutation scale toward a pure jump process taking values in the set of couples of finite measures on the trait spaces $\Xcal $ and $\Ycal$ respectively.

\subsubsection{Co-evolution of the qualitative defense $q_A$ and the predator preference $\rho$}
\noindent We first consider the coevolution of the prey trait $q_a$  and of the predator trait $\rho$.  Both traits are associated through the predation function $B$, and the defense trait $q_a$ influences the competition among prey. In these simulations we assume that mutations do not affect the prey trait $q_n$ and the predator trait $\sigma$. We consider three cases: first we assume that no mutation occurs in the predator population (Figure \ref{fig:evolx1}), then the opposite case where mutations only occur in the predator population (Figure \ref{fig:evolx3}), finally we study the coevolution of the traits (Figure \ref{fig:evolx4}).

\noindent In the first case we assume that no mutation occurs in the predator population: $P=0$.
The initial community is composed of $K$ prey individuals holding trait $x=(0.3,0.4)$ and $K$ predators holding trait $y=(0.2,0.6)$. The mutation probability $u_K=5\cdot 10^{-5}$ is small.
Figure \ref{fig:evolx1-1} gives the different values of $q_a$ carried by prey and of $\rho$ carried by predators for all times. We observe that natural selection favours the values of $q_a$ far from $\rho$. The predator population dies out when the defense $q_a$ gets to far away from their preference. The extinction time is represented by a vertical line on the three graphs. As long as predators are present in the community, we observe that the prey traits are concentrated in a single value: the prey population remains monomorphic.\\
In the other graphs, we focus on the demographic dynamics. Figure~\ref{fig:evolx1-pred} gives the dynamics of the number of predators through time.
On Figure~\ref{fig:evolx1-proie} we represent the size of the prey sub-populations with the following traits: the initial trait value $(0.3,0.4)$ in green, $(0.3,0.664)$ in blue, and $(0.3,1.285)$ in pink (the same colors are used on Figure \ref{fig:evolx1-1}). On these graphs we observe the impact of the mutations on the community.
The mutation $(0.3, 0.664)$ is the first to invade the initial community and to replace the resident prey holding trait $(0.3,0.4)$. e observe that before the appearance of this mutation the respective numbers of predators and prey $(0.3,0.4)$ remain stationary. Some mutations have appeared but their population remained small (less than 10 individuals). \red{This phenomenon illustrates the stationarity of the prey and predator population sizes near the deterministic equilibria of the $LVP$ system stated in Theorem \ref{thm:sortie_eq}.}\\
The invasion of the mutation $(0.3,0.664)$ is characterized by a fast extinction of the resident prey population and a fast growth of the mutant population. Meanwhile, the number of predators diminishes to another stationary value. \red{The extinction speed of the resident population is given by Proposition \ref{prop:extinction}}.\\
The invasion of a mutant prey holding trait $(0.3,1.285)$ in the resident community composed of prey holding trait $(0.3,0.664)$ and predators, drives the predators to extinction. The extinction of predators is a direct consequence of the prey phenotypic evolution: it is called an evolutionary murder (see \cite{Dercole06}). Afterwards both prey populations survive. Note that their respective population sizes are similar: they have indeed the same natural birth and death rates and similar ability for competition. In this simulation, the prey population remains dimorphic after the predator extinction and both traits are driven apart by the competition.
 \begin{figure}[h!]
 \begin{minipage}{0.3\linewidth}
 \subfigure[][]{
 \scalebox{0.38}{
 \includegraphics{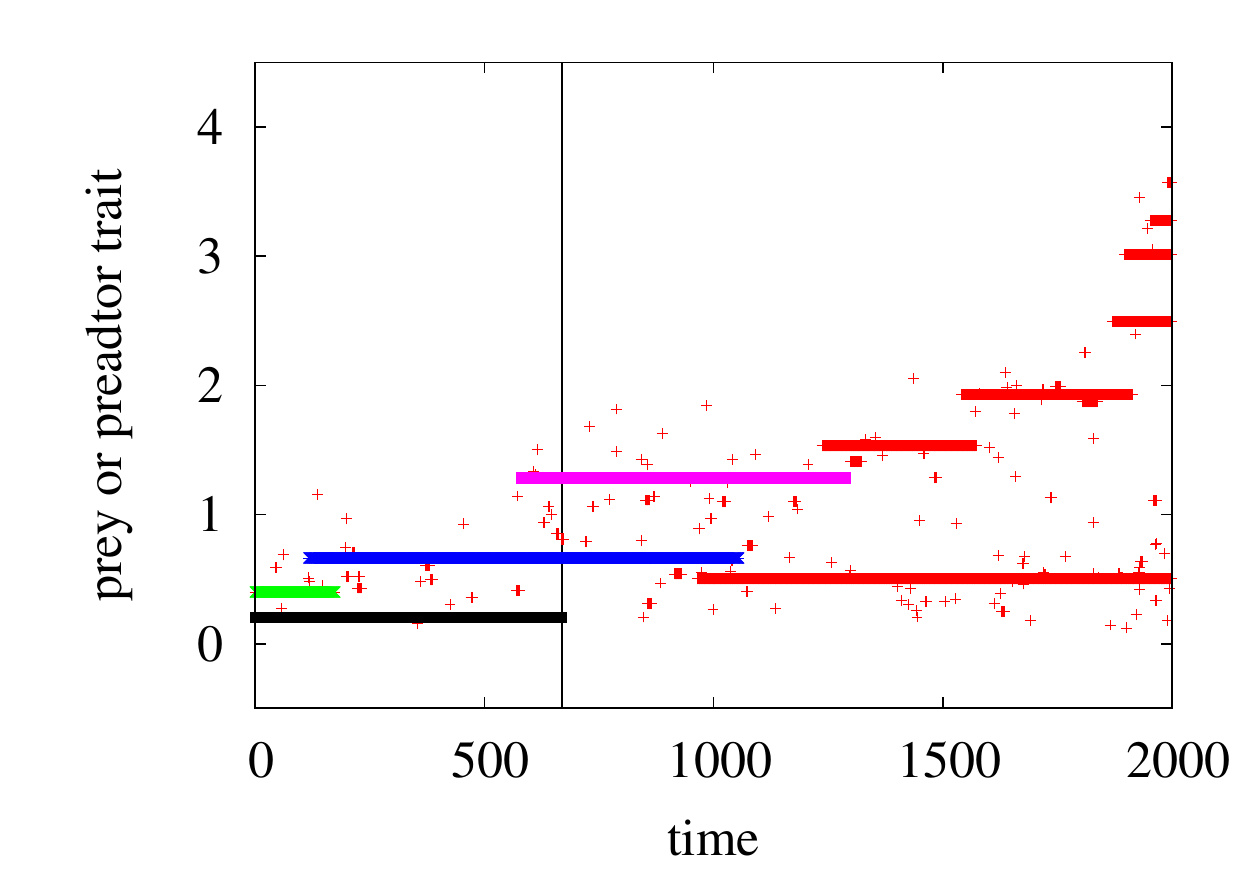}}
 \label{fig:evolx1-1}}
 \end{minipage}
 \hfill
 \begin{minipage}{0.3\linewidth}
 \subfigure[][]{
 \scalebox{0.38}{
 \includegraphics{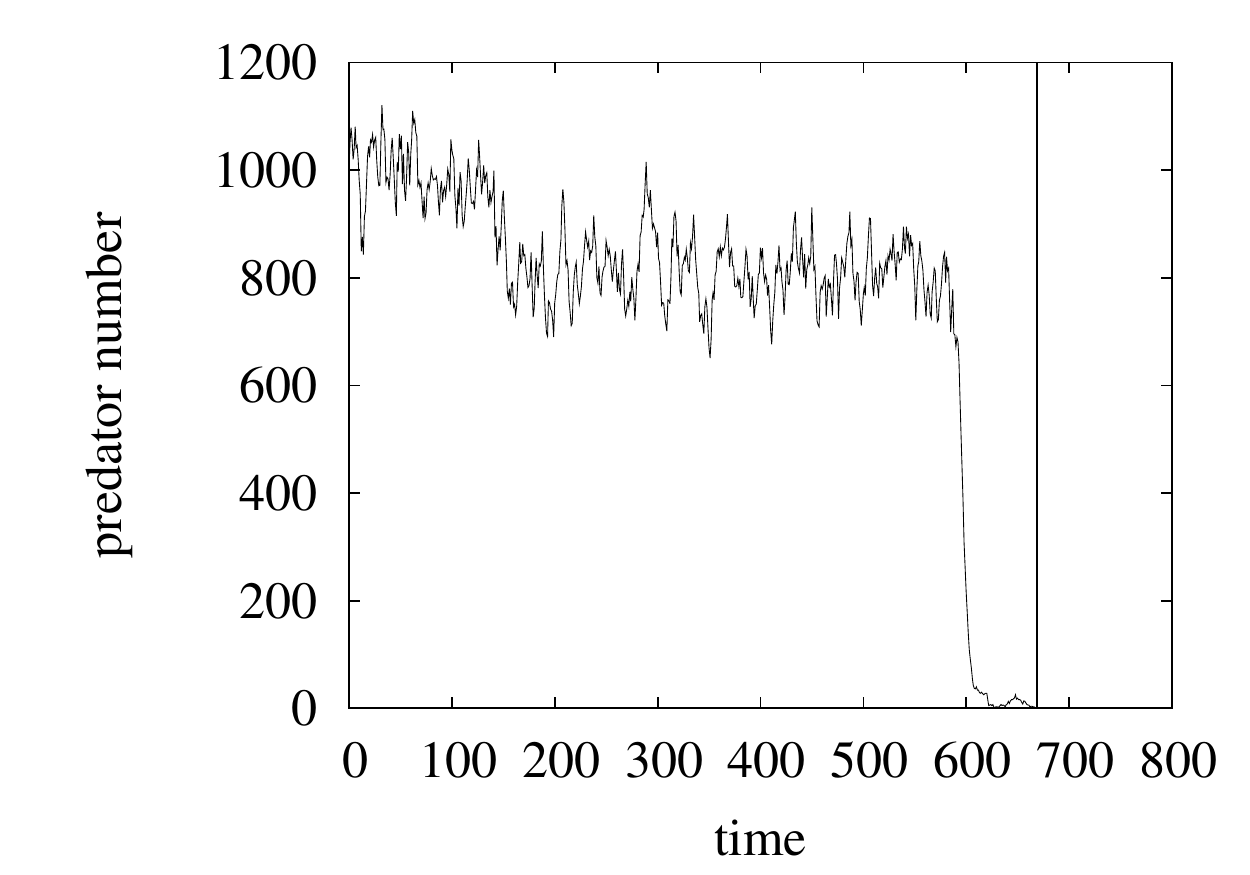}}
 \label{fig:evolx1-pred}
 }
 \end{minipage}
 \hfill
 \begin{minipage}{0.3\linewidth}
 \subfigure[][]{
 \scalebox{0.38}{
 \includegraphics{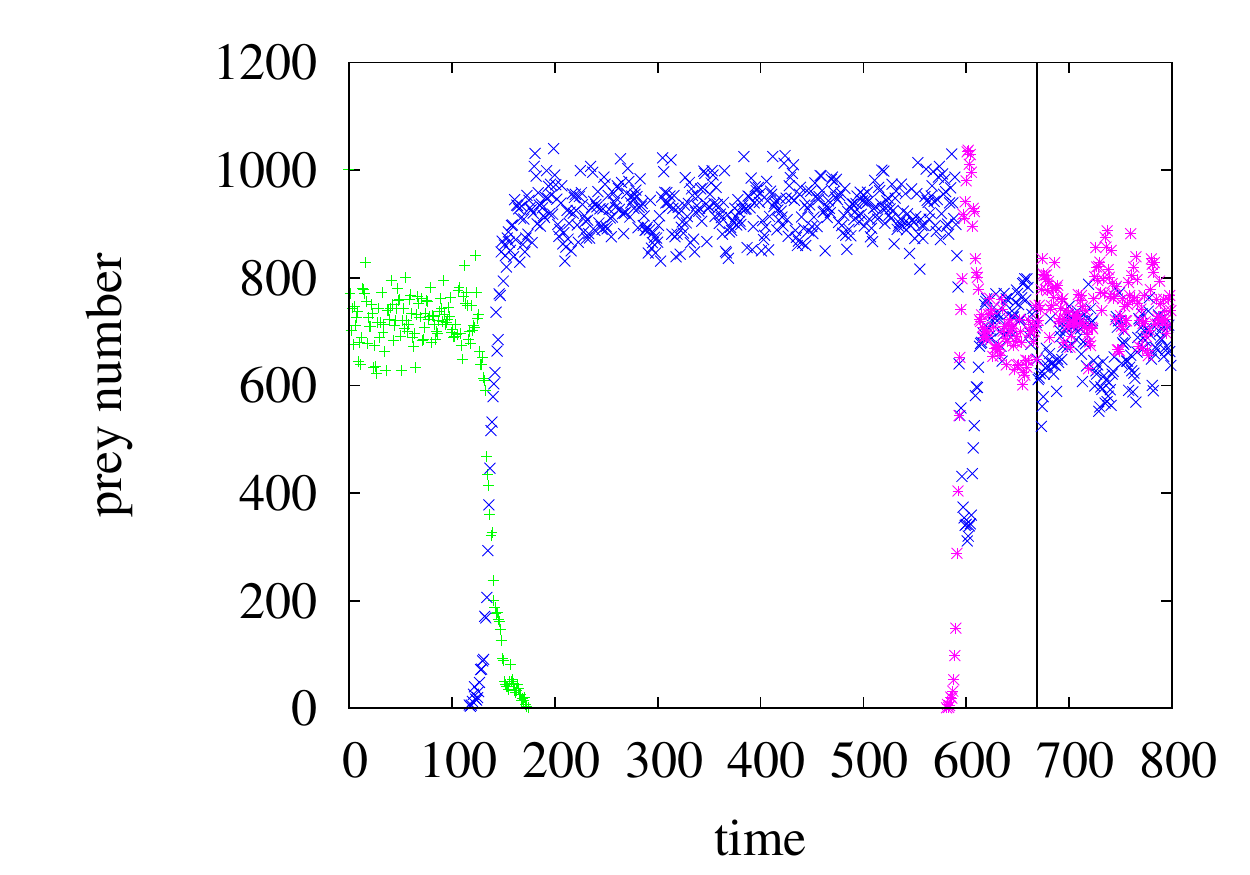}}
 \label{fig:evolx1-proie}
 }
 \end{minipage}
\caption{{Figure \subref{fig:evolx1-1} represents the traits $q_a$ (\textcolor{red}{$+$}, \textcolor{green}{$+$}, \textcolor{blue}{$+$}, \textcolor{magenta}{$+$}) and $\rho$ ($\times$) present in the community through time. Figure \subref{fig:evolx1-pred} gives the dynamics of the number of predators on the time interval $[0:800]$ and Figure \subref{fig:evolx1-proie} gives the dynamics of size of the prey populations holding trait $(0.3,0.4)$ in green, $(0.3,0.664)$ in blue and $(0.3,1.285)$ in pink (the same colors are associated on Figure \subref{fig:evolx1-1}). The vertical line corresponds to the extinction time of predators. The other parameters are $K=1000$, $u_K=5\cdot 10^{-5}$, $p=1$, $P=0$, $\pi(q_a,l)\sim \mathcal{N}(q_a,0.1)$ $b_0=2$, $d_0=0$, $c_0=1.5$, $D=0.5$, $r=0.8$, $\alpha_n=0.1$, $\beta_n=2$.}}
\label{fig:evolx1}
\end{figure}
\noindent This simulation is characteristic of the behavior of the process when the population is large and mutations are rare. As introduced by Champagnat \cite{Champagnat06} there exist two phases: a long phase where the sizes of the sub-populations remain stable, close to the equilibrium values of the deterministic system; a short phase corresponding to the invasion of a mutant trait in the resident population. The successive mutant invasions induce jumps in the traits present in the community as well as in respective sizes of each sub-population.  \red{We describe this jump process in Section \ref{subsec:mutrare}}\\

\noindent We then consider the opposite case where mutations only affect the predator preference $\rho$ and not the prey population (see Figure \ref{fig:evolx3}).
As before, Figure \ref{fig:evolx3-1} represents the traits $q_a$ and $\rho$ in the population. Figure \ref{fig:evolx3-pred} corresponds to the rescaled number of predators holding the traits $(0.2,0.6)$ in black, $(0.339,0.6)$ in green, $(0.531,0.6)$ in pink and $(0.597,0.6)$ in blue (represented with the same colors on Figure \subref{fig:evolx3-1}). The rescaled size of the prey population is drawn on Figure \ref{fig:evolx3-proie}. The initial population is composed of $K$ prey individuals with trait $(0.3,0.6)$ and $K$ predators with trait $(0.2,0.6)$.
We recall that the predator preference corresponds to the value of the qualitative defense that they can avoid or the prey type that they are specifically able to consume (see \cite{Muller04,courtois2012differences}). Predators whose preference $\rho$ is closer to the prey qualitative defense $q_a=0.6$ have an advantage in terms of relative fitness. We observe that the predator population remains monomorphic and that the trait jumps closer to $q_a$ accordingly to the successive invasions of mutants. At each invasion, the sizes of the prey and predator populations jump to the stable equilibrium of the associated $LVP$ system. The last invasion phase is very slow (see Figure \ref{fig:evolx3-pred}). It is due to a very slow convergence toward the equilibrium, of the solutions to the $LVP$ system associated with the traits $x=(0.3,0.6)$, $y_1=(0.531,0.6)$ and $y_2=(0.597,0.6)$. We observe in a general manner that the invasion times of successive mutations increase as $\rho$ comes closer to $q_a$. This reflects the flatenning of the fitness landscape for predators: through time, advantageous mutations become less beneficial with respect to the resident population.\\

 \begin{figure}[h!]
 \begin{minipage}{0.3\linewidth}
 \subfigure[][]{
 \scalebox{0.38}{
 \includegraphics{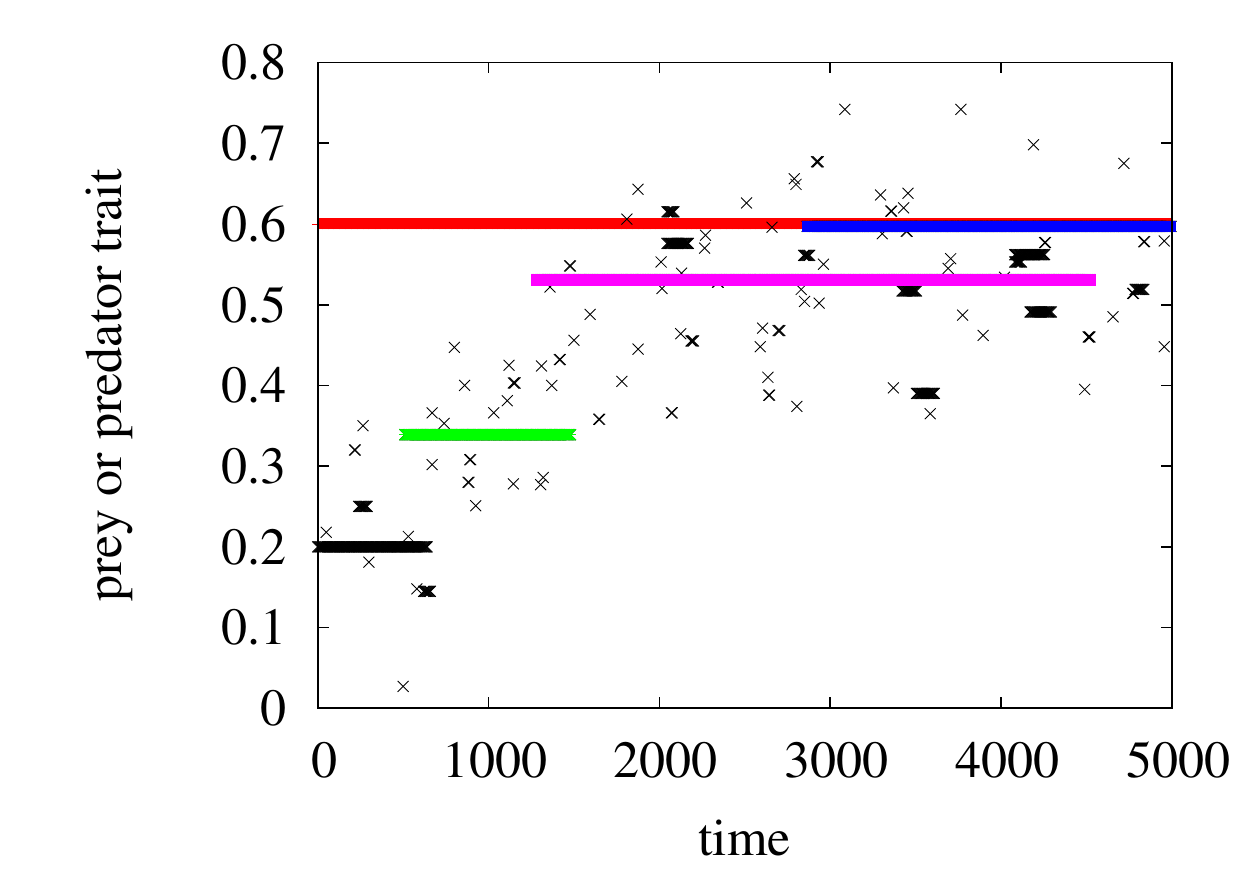}}
 \label{fig:evolx3-1}
 }
 \end{minipage}
 \hfill
 \begin{minipage}{0.3\linewidth}
 \subfigure[][]{
 \scalebox{0.38}{
 \includegraphics{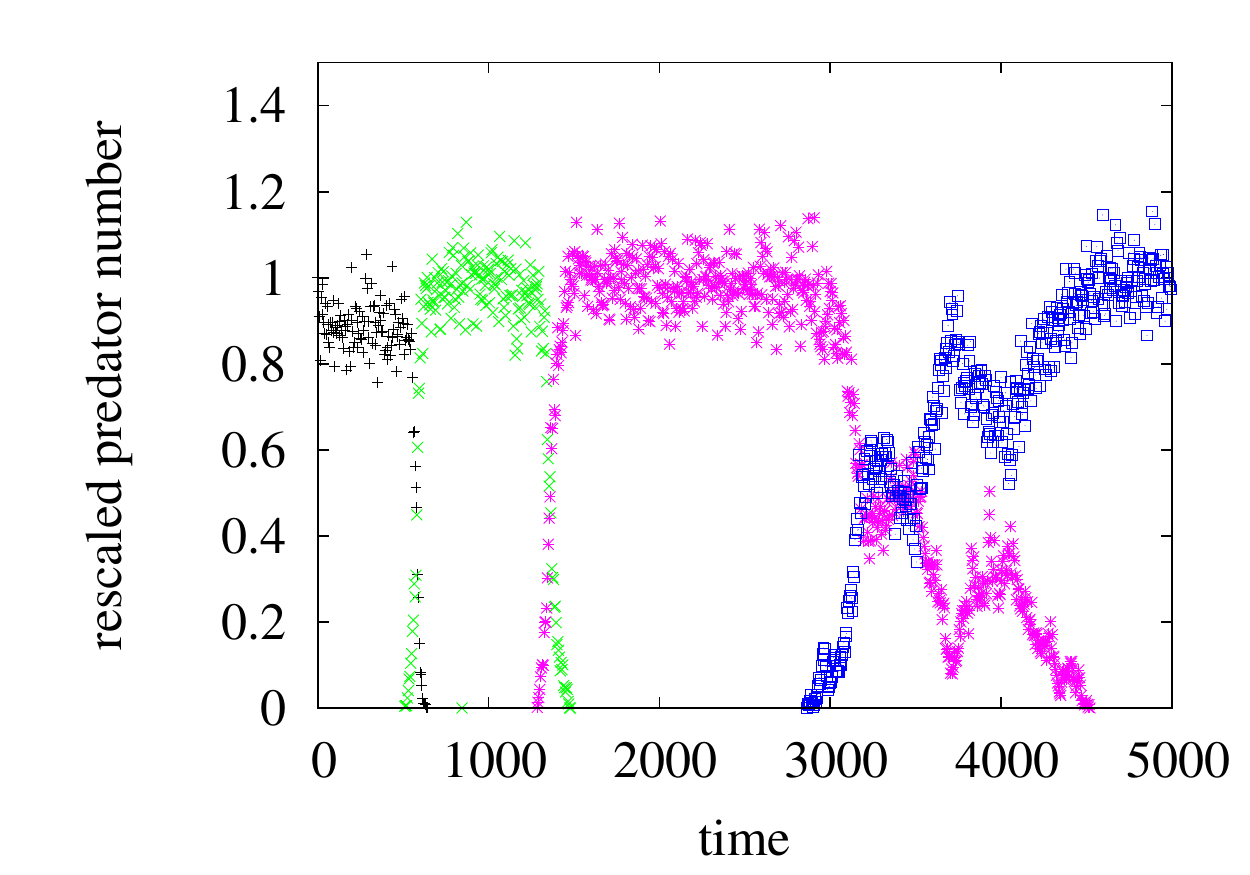}
 \label{fig:evolx3-pred}
 }}
 \end{minipage}
 \hfill
 \begin{minipage}{0.3\linewidth}
 \subfigure[][]{
 \scalebox{0.38}{
 \includegraphics{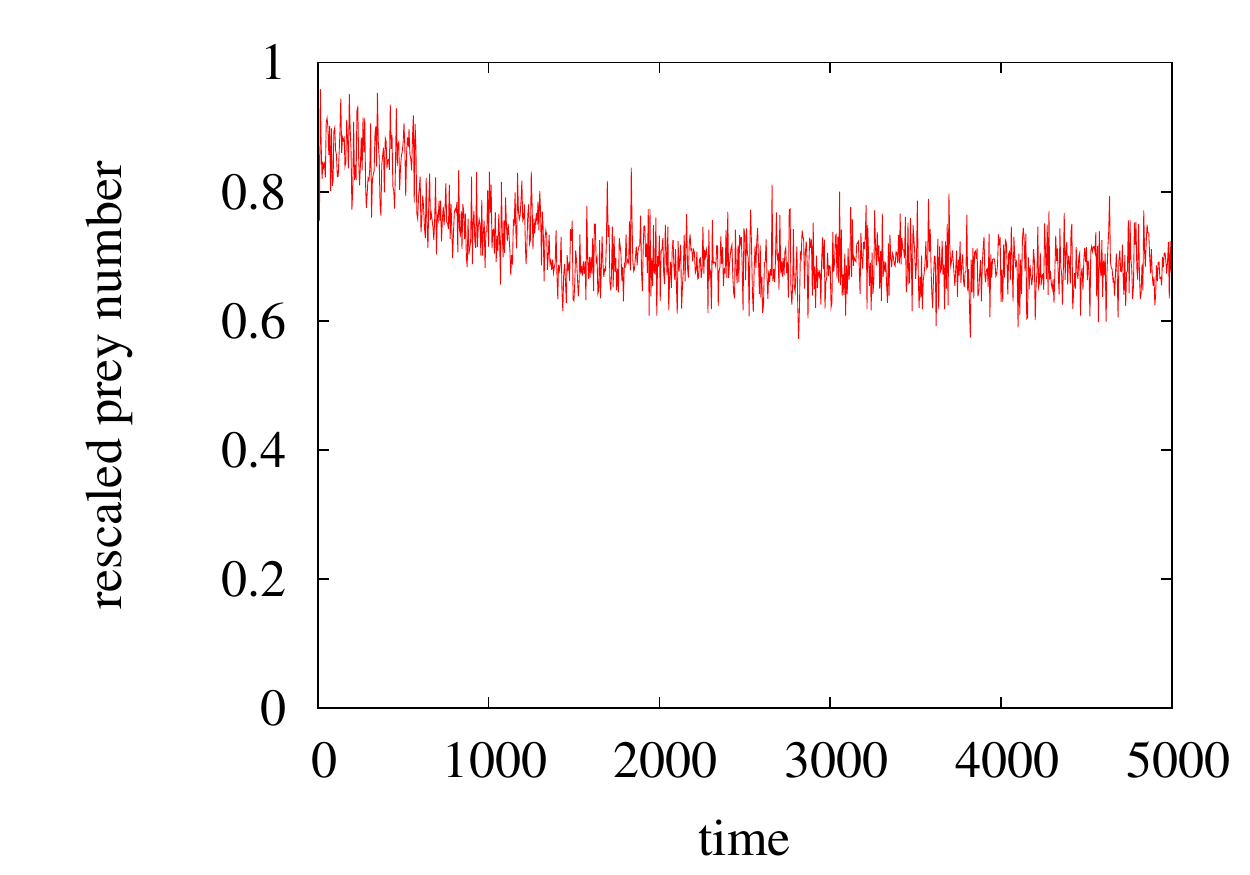}
 \label{fig:evolx3-proie}
 }}
 \end{minipage}
\caption{{Figure \subref{fig:evolx3-1} represents the traits $q_a$ (\textcolor{red}{$+$}) and $\rho$ ($\times$,\textcolor{green}{$\times$}, \textcolor{magenta}{$\times$}, \textcolor{blue}{$\times$}) present in the community through time. Figure \subref{fig:evolx3-pred} gives the dynamics of the rescaled number of predators holding trait $(0.2,0.6)$ in black, $(0.339,0.6)$ in green and $(0.531,0.6)$ in pink and $(0.597,0.6)$ in blue. Figure \subref{fig:evolx3-proie} represents the rescaled size of the prey population through time. The other parameters are $K=1000$, $u_K=1\cdot 10^{-4}$, $p=0$, $P=1$, $\Pi(\rho,l)\sim \mathcal{N}(\rho, 0.01)$, $b_0=2$, $d_0=0$, $c_0=1.5$, $D=0.5$, $r=0.8$, $\alpha_n=0.1$, $\beta_n=2$.}}
\label{fig:evolx3}
\end{figure}
\noindent To observe coevolution, we introduce mutations in both the prey and the predator populations. The prey evolution is constrained by two forces: the intra-specific competition that favours diversification and the predation pressure that drives prey phenotypes away from the predator preferences. We investigate the effect of these two forces on the community when the relative mutation speeds $p$ and $P$ vary. On Figure \ref{fig:evolx4}, we represent the traits $q_a$ (\textcolor{red}{$+$}) and $\rho$ ($\times$) present in the community through time.
On Figure \ref{fig:evolx4-2} $p=P$, we observe that the predator trait jumps close to the value of the defense of the prey population. Afterwards, the prey population becomes polymorphic. This diversity is due to the competition interaction. Finally, as predators do not adapt their preference fast enough, their population dies out. In this case, the competitive force has more impact than the predation pressure and induces a diversification of the prey phenotypes (see \cite{Loeuille02}).
On Figure \ref{fig:evolx7}, we raise the mutation probability of predators: $P=5p$ and choose smaller mutations steps. We observe two phases: in the first one (for $t\in[0:4000]$) the distance between the prey qualitative defense and preference of predators decreases. After this time, both traits seem to evolve simultaneously. This phenomenon recalls the \textit{Red Queen} or \textit{Arm races} observed by biologists (see \cite{Marrow92,Abrams00,Dercole06,becerra2009macroevolutionary}), which corresponds to a parallel variation of the traits of partner species in time.
 
 \begin{figure}[h!]
 
 \begin{minipage}{0.45\linewidth}
 \subfigure[][]{
 \scalebox{0.5}{
 \includegraphics{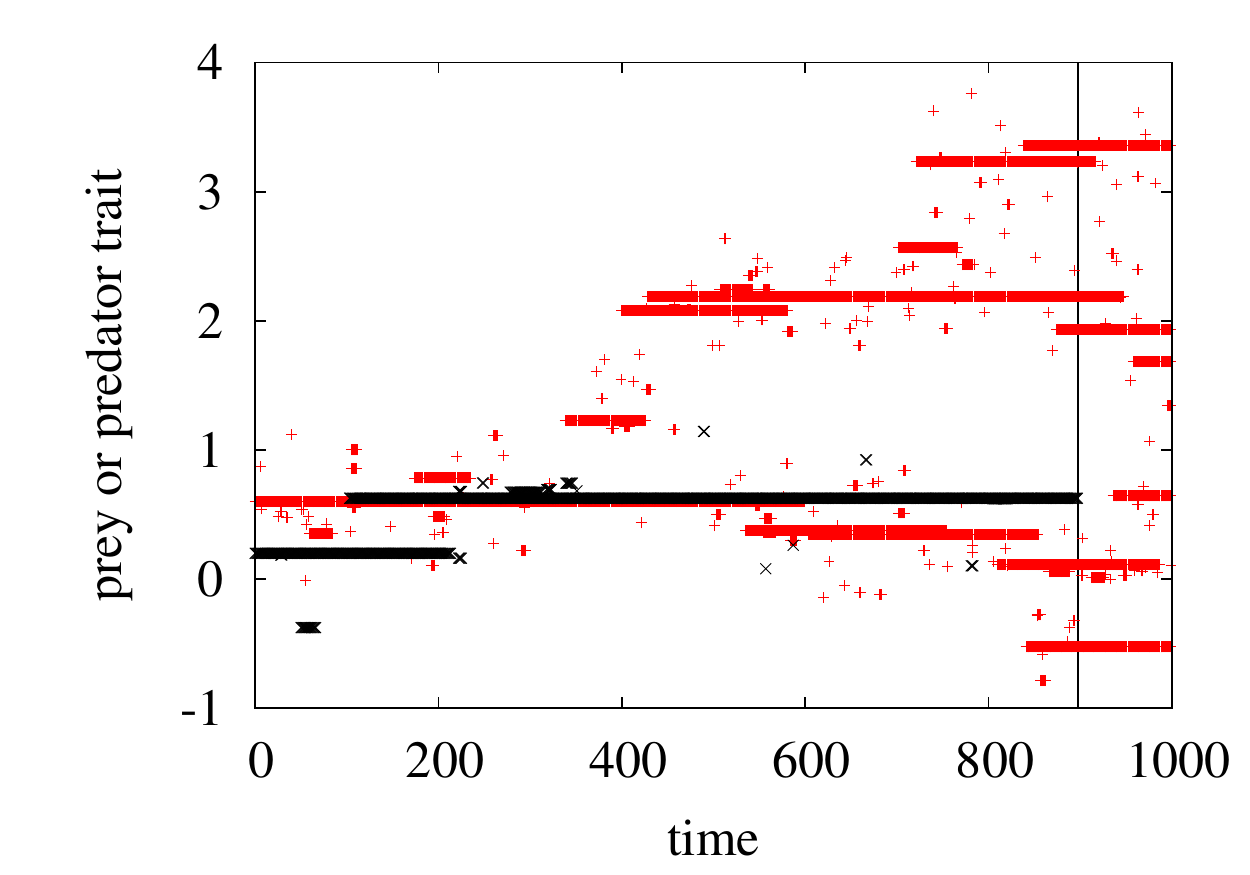}}
 \label{fig:evolx4-2}
 }
 \end{minipage}
 \hfill
 \begin{minipage}{0.45\linewidth}
 \subfigure[][]{
 \scalebox{0.5}{
 \includegraphics{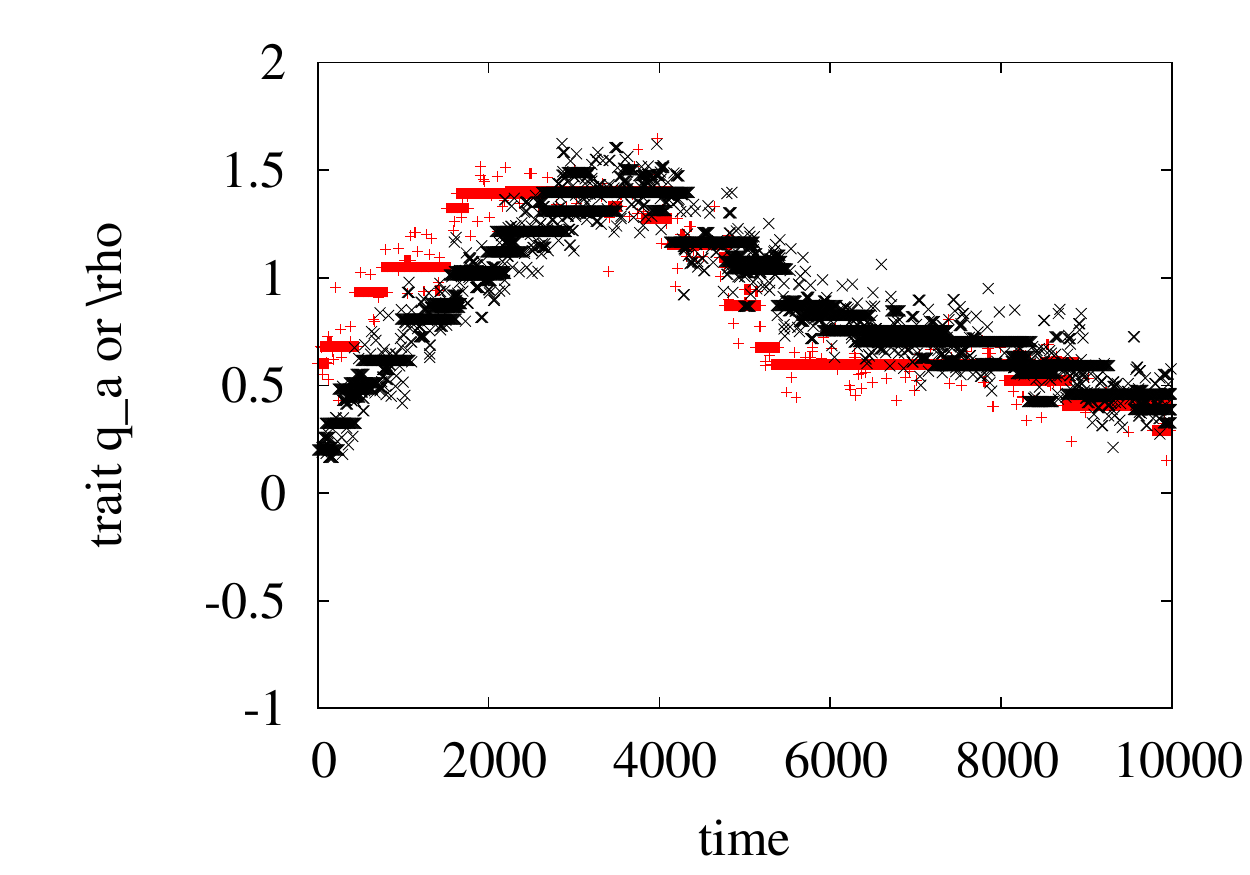}}
 \label{fig:evolx7}
 }
 \end{minipage}
\hfill
\caption{{Both Figures represent the traits $q_a$ (\textcolor{red}{$+$}) and $\rho$ ($\times$) present in the community through time. The mutation probabilities vary: on Figure \subref{fig:evolx4-2}  $p=P=1$, $\pi(q_a,\cdot)\sim \mathcal{N}(q_a,0.1)$, $\Pi(\rho,\cdot)\sim \mathcal{N}(\rho,0.1)$, and on Figure \subref{fig:evolx7} $P=5$, $p=1$ $\pi(q_a,\cdot)\sim \mathcal{N}(q_a,0.01)$, $\Pi(\rho,\cdot)\sim \mathcal{N}(\rho,0.01)$.
The other parameters are $K=1000$, $u_K=10^{-4}$, $b_0=2$, $d_0=0$, $c_0=1.5$, $D=0.5$, $r=0.8$, $\alpha_n=0.1$, $\beta_n=2$.}}
\label{fig:evolx4}
\end{figure}%
\subsubsection{Evolution of the quantitative defense}
\noindent We now model the variations in the quantity $q_n$ of quantitative defense. Unlike the qualitative defenses considered above,  quantitative defenses impact the prey birth rate and not their competitive ability. In these simulations the mutations do not affect the prey trait $q_a$ and the mutation probability of predators is null again. The initial community is composed of $K$ prey individuals holding trait $(0,0.6)$ and of $K$ predators holding trait $(0.2,0.6)$. Figure \ref{fig:evoly-1} represents the traits $q_n$ borne by prey through time. Figure \ref{fig:evoly-proie} gives the dynamics of the rescaled sizes of the prey sub-populations associated with the initial trait $(0,0.6)$ in red, $(0.189,0.6)$ in green, $(0.311,0.6)$ in blue, $(0.703,0.6)$ in pink and $(0.260,0.6)$ in light blue. These traits are represented using the same colors on Figure \ref{fig:evoly-1}. The remaining traits, in black on Figure \ref{fig:evoly-1}, correspond to mutations which did not invade the community. The dynamics of the rescaled number of predators is given on Figure \ref{fig:evoly-pred}. The vertical line corresponds to the predator extinction.\\
Note that the quantity of defense produced by prey increases in the presence of predators and that the number of predators decreases when prey increase their defenses. When prey holding trait $(0.311, 0.6)$ and $(0.703,0.6)$ coexist, the number of predators decreases quickly. We observe long time oscillations that correspond to the behavior of the dynamical systems associated to these three populations. As the competition is constant in the prey population, these simulations do not enter the mathematical framework we described (Assumption \ref{hyp:mat}). These oscillations illustrate that evolution can induce instability in the interaction networks (e.g. \cite{Loeuille10}).  After the extinction of predators, prey producing many defenses are penalized because their reproduction is weaker. The direction of natural selection changes with the extinction of the predators. 
We observe here what is called \textit{apparent competition}: the coexistence of two prey traits with predators relies on the fact that the predation pressure is stronger on the most competitive prey population (see \cite{Amstrong80}).\\
This change in the direction of evolution illustrates a new difficulty induced by coevolution: the same mutation will not have the same impact on the community depending on the presence or the absence of predators. It is thus necessary to consider the coevolution of both populations.

 \begin{figure}[h!]
 \begin{minipage}{0.3\linewidth}
 \subfigure[][]{
 \scalebox{0.38}{
 \includegraphics{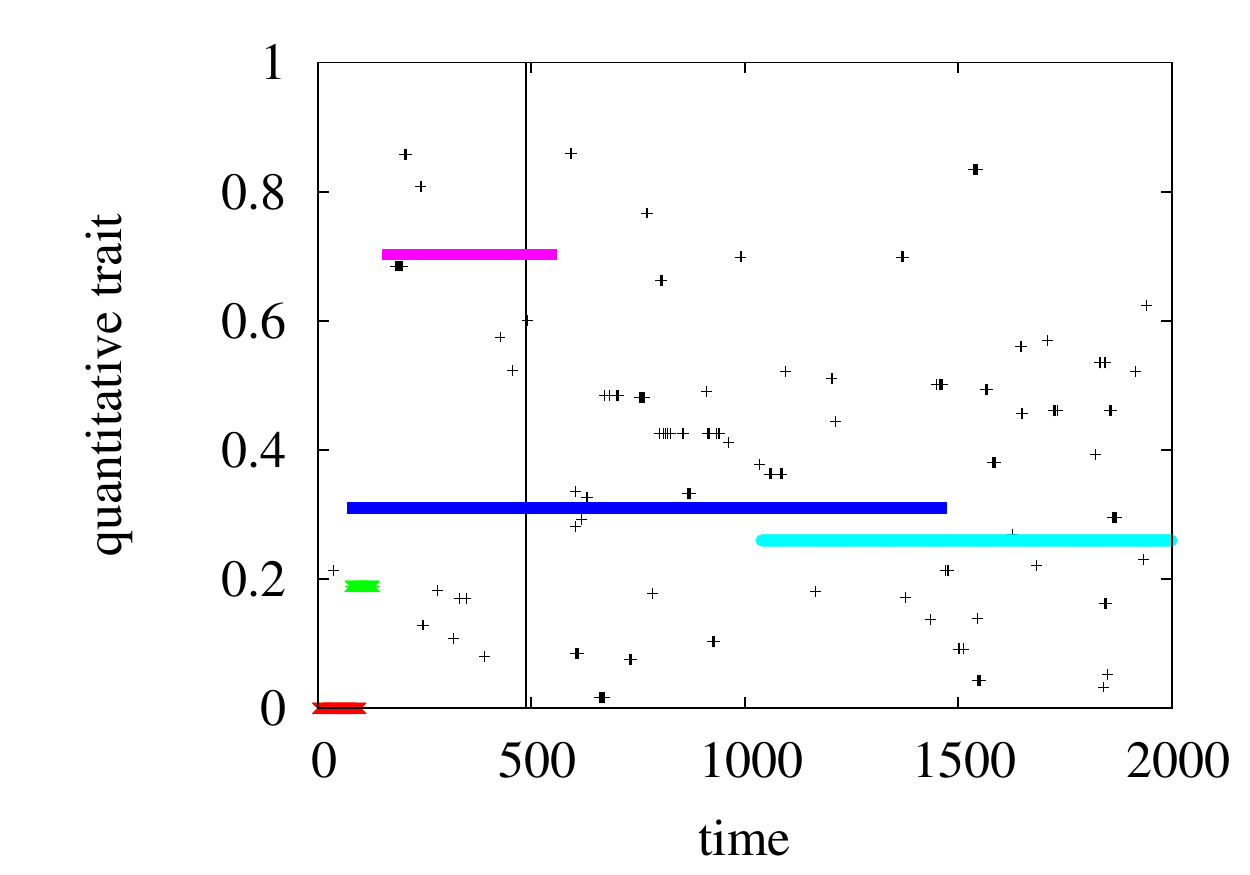}
 \label{fig:evoly-1}
 }}
 \end{minipage}
 \hfill
 \begin{minipage}{0.3\linewidth}
  \subfigure[][]{
 \scalebox{0.38}{
 \includegraphics{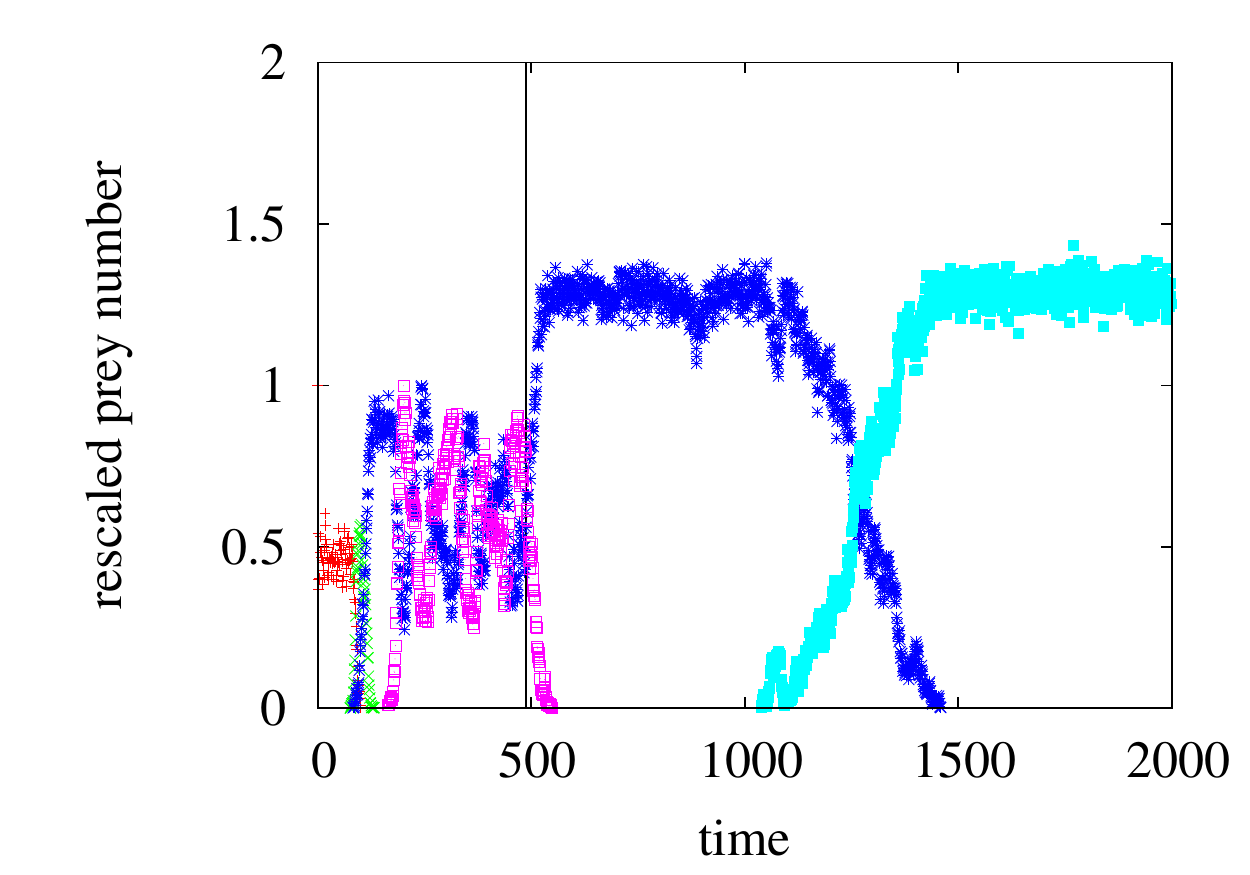}
 \label{fig:evoly-proie}
 }
 }
 \end{minipage}
 \hfill
 \begin{minipage}{0.3\linewidth}
 \subfigure[][]{
 \scalebox{0.38}{
 \includegraphics{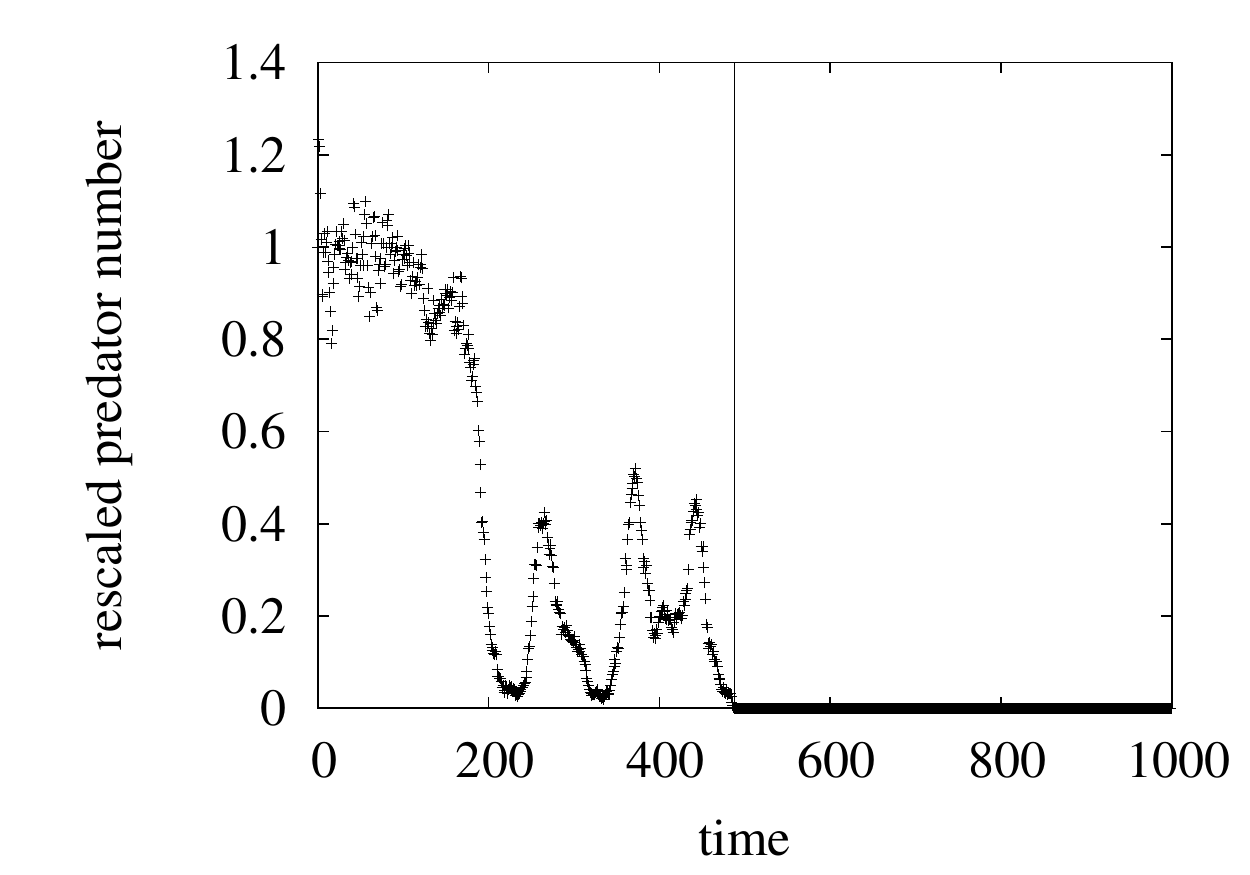}
 \label{fig:evoly-pred}
 }
 }
 \end{minipage}
\caption{{ Figure \subref{fig:evoly-1} gives the values of $q_n$ borne by prey through time. On Figure \subref{fig:evoly-proie} we draw the rescaled sizes of the prey population holding trait $(0,0.6)$ in red, $(0.189,0.6)$ in green, $(0.311,0.6)$ in blue, $(0.703,0.6)$ in pink and $(0.260,0.6)$ in light blue. The dynamics of the rescaled number of predators is given on Figure \subref{fig:evoly-pred}. 
Other parameters are given by  $K=1000$, $u_K= 10^{-4}$, $p=1$, $P=0$, $\pi(q_n,l)\sim \mathcal{N}(q_n,0.1)$ $b_0=2$, $d_0=0$, $c_0=1.5$, $D=0.5$, $r=0.8$, $\alpha_n=0.1$, $\beta_n=2$.}}
\label{fig:evoly}
\end{figure}
\subsection{Limit in the rare mutation time scale and jump process}
\label{subsec:mutrare}
We consider the limit of the community process in a large population scaling with rare mutations. The number of traits present in the community varies when mutations appear in the community. We represent the community by a couple of empirical measures $(\nu^K(t), \eta^K(t))$:
$$
\nu^K(t)=\frac{1}{K}\sum_{i=1}^{N^K(t)}\delta_{x_i},
\quad \eta^K(t)=\frac{1}{K}\sum_{l=1}^{H^K(t)}\delta_{y_l},
$$
where $\delta_x$ is the Dirac measure at point $x$. This process takes values in the set $\mathcal{M}_F(\Xcal)\times\mathcal{M}_F(\Ycal)$ of couples of finite measures on $\Xcal$ and $ \Ycal$ respectively.\\
We recall that the mutation frequencies in both populations are scaled by a parameter $u_K$ such that $Ku_K\to 0$. \red{This assumption is consistent with the adaptive dynamics framework in which mutations occur when the resident population is at equilibrium (\cite{Metz92},\cite{DieckmannLaw96}). Further assumptions will be given in Theorem \ref{thm:PES} on the exact scaling of the mutation frequency.\\
The fact that the mutation frequency decreases with the population size is not unexpected, considering population genetics arguments.
Indeed, the genetic variation among a population increases with respect to the number of individuals. However, assuming that mutant effects are distributed around 0 with a given variance, large numbers of mutations in large populations eventually produce very similar mutants. Due to this redundancy, the amount of variation produced by the mutation process saturates in large populations (see \cite{frankham1996relationship},\cite{soule1976allozyme},\cite{leimu2006general}). This saturation can be interpreted as a decrease of the outcome of new mutants.\\
}

\noindent The next proposition states that mutations cannot occur in a bounded time interval. 
\begin{lemma}
\label{lem:nonaccumulation}
Let us assume Assumptions \ref{hyp:existence}, \ref{hyp:moment}  and that the mutation densities satisfy:
\bean
\label{hyp:mut}
&\forall x\in\Xcal, \forall u\in\R^p,\quad
\pi(x,u)\le \bar{m}(u),\quad
 \int_{\R^p}\bar{m}(u)du<\infty,\\
&\forall y\in\Ycal, \forall v\in\R^P,\quad
\Pi(y,v)\le \bar{M}(v),\quad
 \int_{\R^P}\bar{M}(v)dv<\infty,
\eean
then for every $\delta>0$, there exists $\varepsilon>0$ such that for all $t>0$,
 $$\limsup_{K\to\infty}\Pro\Bigl(\text{a mutation occurs in }\left[\frac{t}{Ku_K},\frac{t+\varepsilon}{Ku_K}\right]\Bigr) \le\delta.
$$
\end{lemma}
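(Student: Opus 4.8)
The plan is to control directly the point process that counts mutation events and to show that its compensator, integrated over a window of length $\varepsilon/(Ku_K)$, is bounded by a constant times $\varepsilon$ uniformly in $K$ and in the position of the window; a Markov inequality then finishes. Let $A^K_t$ denote the total number of mutations (in the prey or the predator population) that have occurred in $[0,t]$. This is a non-decreasing pure-jump process, and the event that a mutation occurs in $[\tfrac{t}{Ku_K},\tfrac{t+\varepsilon}{Ku_K}]$ is exactly $\{A^K_{(t+\varepsilon)/(Ku_K)}-A^K_{t/(Ku_K)}\ge 1\}$. The three steps are: bound the $\mathcal{F}_t$-intensity of $A^K$; integrate this bound over the window using the uniform moment estimate of Proposition~\ref{thm:majunif}(ii); conclude.

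For the intensity: by the definition of the mutation mechanism, the predictable intensity of $A^K$ at time $s$ is
\[
\sum_i N^K_i(s^-)\,b(x_i)\,u_K p(x_i)\;+\;\sum_l H^K_l(s^-)\Bigl(r\sum_i\tfrac{B(x_i,y_l)}{K}\,N^K_i(s^-)\Bigr)u_K P(y_l).
\]
Using Assumption~\ref{hyp:existence} and the boundedness of the mutation functions $p,P$ (say by $\bar p,\bar P$), this is at most
\[
\rho_K(s):=(Ku_K)\Bigl[\bar b\,\bar p\,\tfrac{N^K(s)}{K}+r\,\bar B\,\bar P\,\tfrac{N^K(s)}{K}\tfrac{H^K(s)}{K}\Bigr].
\]
Since $ab\le(a+b)^2$ for $a,b\ge 0$, Proposition~\ref{thm:majunif}(ii) supplies a finite constant $\bar G$, independent of $K$ and $s$, with $\E[\rho_K(s)]\le(Ku_K)\,\bar G$ for all $s\ge 0$. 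As the compensator of $A^K$ is dominated by $\int_0^{\cdot}\rho_K(u)\,du$, a standard localisation and monotone convergence argument gives, for every $t>0$ and $\varepsilon>0$,
\[
\E\Bigl[A^K_{(t+\varepsilon)/(Ku_K)}-A^K_{t/(Ku_K)}\Bigr]\le\int_{t/(Ku_K)}^{(t+\varepsilon)/(Ku_K)}\E[\rho_K(u)]\,du\le(Ku_K)\,\bar G\cdot\tfrac{\varepsilon}{Ku_K}=\bar G\,\varepsilon,
\]
whence by Markov's inequality $\Pro(\text{a mutation occurs in }[\tfrac{t}{Ku_K},\tfrac{t+\varepsilon}{Ku_K}])\le\bar G\,\varepsilon$ for every $K$ and $t$. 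Choosing $\varepsilon=\delta/\bar G$ yields the claim, in fact with a bound uniform in $K$ rather than merely a $\limsup$.

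The only genuinely non-trivial ingredient is the \emph{uniform-in-time} second moment bound of Proposition~\ref{thm:majunif}(ii); a bound valid only on compact time intervals would not suffice, since the window $[\tfrac{t}{Ku_K},\tfrac{t+\varepsilon}{Ku_K}]$ recedes to infinity as $K\to\infty$. The main place to be careful is the predator term: the quantity $\tfrac{N^K H^K}{K}$ must be rewritten as $K\cdot\tfrac{N^K}{K}\cdot\tfrac{H^K}{K}$ so that the extra power of $K$ is absorbed by $u_K$ and then matched against the window length $\varepsilon/(Ku_K)$. Finally, note that the domination hypotheses~\eqref{hyp:mut} on the mutation kernels play no role here — they enter only the convergence results of the next subsection — since counting mutation \emph{events} involves only the birth rates and the mutation probabilities $u_K p$, $u_K P$, not the laws of the mutation steps.
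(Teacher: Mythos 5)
Your argument is correct, but it follows a different route from the paper. The paper does not give a self-contained proof: it invokes an adaptation of Corollary 2.2 of Champagnat--Jabin--M\'el\'eard \cite{CJM13}, based on a coupling of the community process (before the first mutation time) with a multi-type birth and death process whose rates depend on $u_K$ and which is independent of the mutation events, the conclusion being that mutations, occurring at rate proportional to $Ku_K$, are unlikely in a window of length $\varepsilon/(Ku_K)$. You instead bound the predictable intensity of the mutation-counting process by $(Ku_K)$ times a trait-uniform functional of the rescaled population sizes, integrate it over the receding window via the uniform-in-time second moment bound of Proposition \ref{thm:majunif}(ii), and conclude by Markov's inequality. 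This is more elementary and more quantitative: it gives a bound $\bar G\varepsilon$ valid for every $K$ and every $t$, not just a $\limsup$, and it makes explicit that the key ingredient is precisely the long-time moment estimate that the paper announces as the tool ``to justify convergence results on long time scales''; you are also right that the domination hypotheses \eqref{hyp:mut} on the mutation kernels are not needed for counting mutation events. One small point you should make explicit: Proposition \ref{thm:majunif} is stated for the community process with a fixed family of traits, whereas here mutations keep introducing new traits; the estimate nevertheless carries over verbatim because its proof (Appendix \ref{app:majunif}) only involves the total population sizes $N^K,H^K$ and the trait-uniform bounds $\bar b,\cu,\Bu,\bar B,\Du$ of Assumption \ref{hyp:existence}, which hold on the compact trait spaces independently of how many types are present. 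With that remark, and the localisation/monotone convergence step you already mention to justify taking expectations of the compensator, your proof is complete.
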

\noindent The proof of this Lemma can be easily adapted from the proof of Corollary 2.2 in \cite{CJM13}. It is based on a coupling of the community process before the first mutation time with a multi-type birth and death process whose birth and death rates depend on $u_K$. This process is independent of the mutation events occurring in $(\nu^K,\eta^K)$. As these mutations occur at a rate proportional to $Ku_K$, the probability to observe a mutation in a time interval of length $\varepsilon/Ku_K$ is negligible. \\

\noindent We state the main result of this section. It describes the convergence of the community process in the mutational scale toward a pure jump process. This process extends the \textit{Polymorphic Evolutionary Sequence} introduced by Champagnat and M\'el\'eard \cite{CM11} to a prey-predator network.  \\
\noindent As we have seen in the simulations, the limiting process takes values in the set of the stable equilibria $(\nbf^*(\mathbf{x},\mathbf{y}), \hbf^*(\xbf,\ybf))$ of the deterministic system $LVP(\xbf,\ybf)$ (introduced in \eqref{dpmp} for $\xbf\in\Xcal^d$ and $\ybf\in\Ycal^m$) as long as predators survive. Remark that after the extinction of predators, the behavior of the prey population is well known (see \cite{Champagnat06,CM11}) and the limiting process takes values in the set of equilibria $\bar{\nbf}(\xbf)$ of the $LVC(\xbf)$ system defined in \eqref{dp}.\\
The process describing the successive states of the community is a Markovian jump process $\Lambda=(\Lambda^1,\Lambda^2)$ taking values in $\mathcal{E}$:
\be
\mathcal{E}=\Bigl\{ (\sum_{i=1}^dn_i\delta_{x_i},\sum_{l=1}^mh_l\delta_{y_l}); \mathbf{x}\in\Xcal^d,\mathbf{y}\in\Ycal^m, (\nbf,\hbf)\in\bigl\{(\nbf^*(\mathbf{x},\mathbf{y}), \hbf^*(\xbf,\ybf)),
(\mathbf{\bar{n}}(\mathbf{x}),0)\bigr\} \Bigr\}.
\ee
The dynamics of $\Lambda$ depends on the arrivals of mutations in the prey and the predator populations. A successful mutant invasion modifies both the prey and the predator populations (see Figures \ref{fig:evolx1}, \ref{fig:evolx3} and \ref{fig:evoly}). From any state $(\sum_{i=1}^dn^*_i(\xbf,\ybf)\delta_{x_i},\sum_{l=1}^mh^*_k(\xbf,\ybf)\delta_{y_l})$ where predators are alive
\begin{itemize}
\item
for every $j\in\{1,...,d\}$ the process jumps to the equilibrium associated with the modified vector of traits $((\xbf,x_j+u),\ybf)$ 
at infinitesimal rate:
$$
p(x_j)n^*_j(\xbf,\ybf)b(x_j)\frac{[s(x_j+u;(\nbf^*(\mathbf{x},\mathbf{y}), \hbf^*(\xbf,\ybf))]_+}{b(x_j+u)}\pi(x_j,u)du.
$$
This corresponds to the invasion of a mutant prey population with trait $x_j+u$ in the community.
\item
for every  $k\in\{1,...,m\}$ the process jumps to the equilibrium associated with the modified vector of traits $(\xbf,(\ybf,y_k+v))$ 
at infinitesimal rate:
$$
P(y_k)h^*_k(\xbf,\ybf)\left(\sum_{i=1}^drB(x_i,y_k)n^*_i(\xbf,\ybf)\right)
\frac{[F(y_k+v;(\nbf^*(\mathbf{x},\mathbf{y}), \hbf^*(\xbf,\ybf))]_+}{\sum_{i=1}^drB(x_i,y_k+v)n^*_i(\xbf,\ybf)}\Pi(y_k,v)dv.
$$
This corresponds to the invasion of a  predator population holding the mutant trait $y_k+v$.
\end{itemize}
We recall that the fitness functions $s$ and $F$ are defined in \eqref{fit-Wp} and \eqref{fit-WP} respectively.
\begin{remark} As in Figures \ref{fig:evolx1} and \ref{fig:evoly}, the community jump process $\Lambda$ can reach a state where the predator population dies out. Since the invasion of mutant predators requires the positivity of their invasion fitness (see Theorem \ref{thm:exis-uni-eq}), the predator extinction can only result from the invasion of a mutant prey which diminishes the growth rate of the resident predator. The behavior of the community after the predator extinction is described by the PES introduced in Theorem 2.7 \cite{CM11}. We recall that the infinitesimal jump rate from a state $(\sum_{i=1}^d\bar{n}_i(\xbf,\ybf)\delta_{x_i},0)$  to
$(\sum_{i=1}^d\bar{n}_i((\xbf,x_j+u))\delta_{x_i}+\bar{n}_{d+1}((\xbf,x_j+u))\delta_{x_j+u}),0),
$
is given by
$$
p(x_j)b(x_j)\bar{n}_j(\xbf)\frac{[s(x_j+u;(\bar{\nbf}(\xbf),0)]_+}{b(x_j+u)}\pi(x_j,u)du.
$$
\end{remark}
We now formulate the limiting theorem.
\begin{thm}
\label{thm:PES}
Fix $\xbf\in\Xcal^d$ and $\ybf \in\Ycal^m$.
Let us assume Assumptions \ref{hyp:existence}, \ref{hyp:moment}, \ref{hyp:ODE}, \ref{hyp:fitness}, 
\eqref{hyp:mut} and that the initial condition $(\sum_{i=1}^{d}n^K_i\delta_{x_i},\sum_{l=1}^m h^K_l\delta_{y_l})$ converges in probability toward $ (\sum_{i=1}^{d}n^*_i\delta_{x_i},\sum_{l=1}^m h^*_l\delta_{y_l})$.  If furthermore
\ben
\label{hyp:mutrare}
\log(K)\ll \frac{1}{Ku_K} \ll \exp(VK), \quad \forall V>0,
\een
then the process $\bigl(\nu^K(\frac{t}{Ku_K}),\eta^K(\frac{t}{Ku_K})\bigr)_{t\ge0}$ converges toward the pure jump process
$\Lambda=((\Lambda^1_t,\Lambda^2_t);t\ge0)$ defined above and whose initial condition is given by $(\sum_{i=1}^dn^*_i(\xbf,\ybf)\delta_{x_i},\sum_{l=1}^mh^*_l(\xbf,\ybf)\delta_{y_l}).$\\
This convergence takes place in the sense of convergence of the finite dimensional distributions for the topology on $\mathcal{M}_F(\Xcal\times\Ycal)$ induced by the total variation norm.
\end{thm}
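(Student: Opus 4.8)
The plan is to adapt to the prey-predator community the three-phase analysis of a mutant invasion developed by Champagnat \cite{Champagnat06} and Champagnat--M\'el\'eard \cite{CM11}, feeding it with the deterministic results of Section~\ref{sec:ODE} and the long-time stochastic estimates of Section~\ref{sec:sortie-eq}. By the strong Markov property and Lemma~\ref{lem:nonaccumulation}, only finitely many mutations occur in any bounded interval of the rescaled time $t/(Ku_K)$, so it suffices to describe the community from a resident state $(\sum_i n^*_i(\xbf,\ybf)\delta_{x_i},\sum_l h^*_l(\xbf,\ybf)\delta_{y_l})$ up to and including the first successful invasion, and then argue by induction on the number of jumps.

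First I would treat the \emph{resident phase}. Starting from $\Zbf^K(0)$ close to $\zbf^*(\xbf,\ybf)$, Proposition~\ref{prop:extinction} shows that the sub-populations with indices in $P$ and $Q$ vanish in time of order $\log K$, after which Theorem~\ref{thm:sortie_eq} keeps the resident community inside a small ball $\mathcal{B}_\varepsilon$ for a time of order $\exp(V_\varepsilon K)$. The point of having proved the \emph{perturbed} version of that theorem is exactly that it still applies while the transient mutant sub-populations (of mass $O(1/K)$, contributing a perturbation $\kappa$ of order $u_K$) interact with the residents. Since $1/(Ku_K)\ll \exp(V_\varepsilon K)$, on the scale $t/(Ku_K)$ the resident community stays frozen at $\zbf^*(\xbf,\ybf)$ between mutations. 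A mutation producing a prey offspring with trait in $x_j+du$ arrives at rate $N^K_j\,b(x_j)\,u_K p(x_j)\,\pi(x_j,u)\,du \approx Ku_K\, n^*_j(\xbf,\ybf) b(x_j) p(x_j)\pi(x_j,u)du$, and a predator mutant with trait in $y_k+dv$ at rate $\approx Ku_K\, h^*_k(\xbf,\ybf)\bigl(\sum_i rB(x_i,y_k)n^*_i(\xbf,\ybf)\bigr)P(y_k)\Pi(y_k,v)dv$; hence in the time variable $t/(Ku_K)$ mutations arrive at rate $O(1)$.

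Next I would analyse a single invasion in three phases. In Phase~1 the mutant sub-population is of order $1$: it is squeezed between two linear birth-and-death processes whose rates converge, for a prey mutant, to $b(x_j+u)$ and $d(x_j+u)+\sum_i c(x_j+u,x_i)n^*_i+\sum_l B(x_j+u,y_l)h^*_l$ — so its growth rate is the invasion fitness $s(x_j+u;\zbf^*)$ — and, for a predator mutant, to $r\sum_i B(x_i,y_k+v)n^*_i$ and $D(y_k+v)$, with growth rate $F(y_k+v;\zbf^*)$; in the predator case one needs the prey populations to stay near $\zbf^*$ throughout, which is supplied by the resident-phase estimate. By classical branching-process theory, using Assumption~\ref{hyp:fitness} to exclude zero fitness, a mutant of non-positive fitness dies out in time of order $\log K$, while a mutant of positive fitness survives with probability converging to $[s(x_j+u;\zbf^*)]_+/b(x_j+u)$ (resp. $[F(y_k+v;\zbf^*)]_+/(r\sum_i B(x_i,y_k+v)n^*_i)$) and, conditionally on survival, reaches a macroscopic size $\delta K$ in time of order $\log K$. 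In Phase~2 the whole system, now with the enlarged trait vector, follows the $LVP$ flow by Propositions~\ref{prop:cventempsfini}--\ref{prop:cveq} and converges in time $O(1)$ to the new globally stable equilibrium $\zbf^*((\xbf,x_j+u),\ybf)$ (resp. $\zbf^*(\xbf,(\ybf,y_k+v))$), whose existence and uniqueness are granted by Theorem~\ref{thm:exis-uni-eq}, or to a $LVC$ equilibrium $(\bar{\nbf}(\cdot),0)$ if the resident predators are driven extinct. In Phase~3, Proposition~\ref{prop:extinction} removes the sub-populations that have become non-resident, again in time of order $\log K$. Altogether an invasion costs a time of order $\log K$, hence $o(1)$ on the rescaled clock since $\log K\ll 1/(Ku_K)$.

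Assembling the pieces, the infinitesimal rate of a jump to a given new equilibrium is the product of the mutation arrival rate of the resident phase and the survival probability of Phase~1, which reproduces exactly the rates displayed before the theorem; convergence of the finite-dimensional distributions then follows from the convergence of the successive jump times (asymptotically exponential with these rates) and jump destinations, together with the fact that the process is constant up to $o(1)$ between jumps. I expect the main obstacle to be the asymmetry of the prey-predator interaction: unlike in the purely competitive model, the community process cannot be compared with a slightly perturbed one through a coupling of rates on long time scales, so all the long-time control of the residents while a mutant grows or disappears must pass through the Lyapunov function of Proposition~\ref{prop:stab} and the perturbed exit-time estimate of Theorem~\ref{thm:sortie_eq}; a related technical point is that the branching approximation for a mutant predator is only valid while the prey stay near $\zbf^*(\xbf,\ybf)$, which forces Phases~1 and~3 to be run with uniform-in-$K$ control of both populations simultaneously.
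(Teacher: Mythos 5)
Your proposal follows essentially the same route as the paper: the Champagnat-style three-phase invasion analysis (branching approximation with survival probability $[s]_+/b$ or $[F]_+/(r\sum_i B n^*_i)$, then the finite-time limit of Proposition~\ref{prop:cventempsfini} to reach the new equilibrium given by Theorem~\ref{thm:exis-uni-eq}, then extinction of non-resident types via Proposition~\ref{prop:extinction}), with the perturbed exit-time estimate of Theorem~\ref{thm:sortie_eq} replacing the coupling arguments that fail for the asymmetric prey-predator interaction, and the scale assumption \eqref{hyp:mutrare} ensuring the $\log K$ phases are negligible while the residents stay frozen between mutations. The only slip is your parenthetical claim that the perturbation $\kappa$ is of order $u_K$ and the mutant mass $O(1/K)$: during Phase~1 a successful mutant grows to density of order $\varepsilon$, so $\kappa$ is of order $\varepsilon$ (small but fixed), which is exactly the regime Theorem~\ref{thm:sortie_eq} covers and does not affect your argument.
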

Assumption \eqref{hyp:mutrare}, introduced by Champagnat \cite{Champagnat06}, reflects the separation between the demographic and the mutational time scales (see Figure \ref{fig:evolx1-pred}, \ref{fig:evolx1-proie}, \ref{fig:evolx3-pred} and \ref{fig:evolx3-proie}). The demographic time scale is of order $\log K$. It corresponds to the evolution of the stochastic process close to its deterministic approximation. The process $\Zbf^K$ enters a neighbourhood of the attractive deterministic equilibrium and the deleterious traits die out (Proposition~\ref{prop:cveq} and Proposition~\ref{prop:extinction}). The mean time between two mutations is of order $1/Ku_K$, therefore the resident population is close to the equilibrium of the associated $LVP$ system when a mutant appears in the community (Theorem \ref{thm:sortie_eq}). \\
The proof derives from the proof of Theorem 1 in \cite{Champagnat06} and from the results obtained in Section \ref{sec:sortie-eq}. The main idea is to study the invasion of a mutant trait in the community. Starting from an initial condition  $(\sum_{i=1}^dn^*_i(\xbf,\ybf)\delta_{x_i},\sum_{l=1}^mh^*_l(\xbf,\ybf)\delta_{y_l})$ at the deterministic equilibrium, the next mutation occurs after an exponential time of parameter
$$
E(\xbf,\ybf)=\sum_{i=1}^d p(x_i)b(x_i)n^*_i(\xbf,\ybf)+\sum_{l=1}^mP(y_l)h^*_l(\xbf,\ybf)\Bigl(r\sum_{i=1}^d B(x_i,y_l)n^*_i(\xbf,\ybf)\Bigr).
$$
The mutant individual then comes from the prey population with trait $x_j$ ($1\le j \le d$) with probability
$$
\frac{p(x_j)b(x_j)n^*_j(\xbf,\ybf)}{E(\xbf,\ybf)},
$$
or from the population of predators holding trait $y_k$ ($1\le k\le m$) with probability
$$
\frac{P(y_k)h^*_k(\xbf,\ybf)\sum_{i=1}^drB(x_i,y_k)n^*_i(\xbf,\ybf)}{E(\xbf,\ybf)}.
$$
In the sequel we consider a mutant trait $y_k+v$ where $v$ is distributed according to $\Pi(y_k,v)dv$. While the number of individuals holding the mutant trait is small, we compare thanks to Theorem \ref{thm:sortie_eq} the size of the mutant population with a continuous time birth and death process with birth rate $r\sum_{i=1}^d B(x_i,y_k+v)n^*_i(\xbf,\ybf)$ and death rate $D(y_k+v)$. Its growth rate is then given by the invasion fitness $F(y_k+v; (\nbf^*(\xbf,\ybf),\hbf^*(\xbf,\ybf)))$ of the mutant trait in the resident population. If the fitness is negative we prove that the mutant population goes extinct similarly as in Lemma \ref{prop:extinction}. Otherwise, the probability that the mutant population reaches a positive density $\varepsilon$ is close to the survival probability of the supercritical branching process which is given by
$$\frac{F(y_k+v; (\nbf^*(\xbf,\ybf),\hbf^*(\xbf,\ybf)))}{r\sum_{i=1}^d B(x_i,y_k+v)n^*_i(\xbf,\ybf)},$$(see \cite{AthreyaNey}, p102). Moreover this phase lasts a time of order $\log K$ (see the proof of Lemma 3 in \cite{Champagnat06}). 
Then using the large population approximation on a finite time interval (Proposition \ref{prop:cventempsfini}), we establish that the process $\Lambda$ jumps to the equilibrium of the system $LVP(\xbf,(\ybf,y_k+v))$.

\subsection{Small mutations: a canonical equations system for coevolution}
\label{subsec:eqcano}
In this subsection we consider a different scaling for the jump process $\Lambda$ where the mutation steps of both populations are of order $\varepsilon$ (see \cite{CM11,CJM13}). We study the limit of the sequence $\Lambda_{\varepsilon}$ in a long time scale $\frac{t}{\varepsilon^2}$ to observe global evolutionary dynamics. We establish that the limiting behavior of the prey and predator traits satisfies a coupled system of differential equations. These equations were heuristically introduced by  Dieckmann, Law and Marrow (1996) \cite{Marrow96}. They extend the \textit{canonical equation of adaptive dynamics} to the coevolution of a prey and predator interaction.\\
In the sequel we assume that every couple of a prey and a predator trait can coexist although two  prey traits cannot coexist. 

\begin{Hyp}
\label{hyp:eqcan}
\begin{enumerate}[a)]
\item For every $(x,y)\in\Xcal\times\Ycal$, predators survive in the equilibrium of $LVP(x,y)$:
\ben
\label{survie}
\frac{b(x)-d(x)}{c(x,x)}>\frac{D(y)}{rB(x,y)}.
\een
\item 
\label{IIF}
Invasion implies fixation: For every $(x,x')\in\Xcal^2$ and $y\in\Ycal$, we have 
\bea
& s(x';(n^*(x,y),h^*(x,y)))<0,\\ 
 \text{or } &s(x';(n^*(x,y),h^*(x,y)))>0\text{ and }s(x;(n^*(x',y),h^*(x',y)))<0.
\eea
 \item The mutation densities $\pi$ and $\Pi$ are Lipschitz continuous on $\Xcal\times\R^p$ and $\Ycal\times\R^P$.
\item The functions $g$ and $G$ defined for $x,x'\in\Xcal$ and $y,y'\in\Ycal$ by
\bean
\label{g_p}
&g(x';(x,y))=p(x)n^*(x,y)b(x)\frac{s(x';(x,y))}{b(x')},\\
&G(y';(x,y))=P(y)h^*(x,y)B(x,y)\frac{F(y';(x,y))}{B(x,y')},
\eean
are continuous and $\mathcal{C}^1$ with respect to their first variable.
\end{enumerate}
\end{Hyp}
\begin{remark}Condition \eqref{survie} compares the equilibrium sizes of the prey populations evolving in the presence or the absence of predators. The predator survival requires that the prey population size decreases in the presence of predators.
\end{remark}
\noindent For every couple of traits $(x,y)$, the equilibrium $(n^*(x,y),h^*(x,y))$ of the system $LVP(x,y)$ given by Theorem \ref{thm:exis-uni-eq} equals
\ben
\label{nheq}
n^*(x,y)=\frac{D(y)}{rB(x,y)}  \text{ and } h^*(x,y)=\frac{1}{B(x,y)}\bigl(b(x)-d(x)-c(x,x)\frac{D(y)}{rB(x,y)}\bigr).
\een
To ease notation, we denote in this section $s(x';(x,y))=s(x';(n^*(x,y),h^*(x,y)))$ and $F(y';(x,y))=F(y';(n^*(x,y),h^*(x,y)))$.\\
Assumption \ref{hyp:eqcan}.\ref{IIF}) and Theorem \ref{thm:exis-uni-eq} entail that two prey types cannot coexist in the equilibrium of the deterministic system $LVP$. Together with the competitive exclusion principle introduced in Section \ref{sec:ODE}, this ensures that two predator populations cannot coexist either. Therefore each mutant invasion  (prey or predator) leads to the replacement of the resident trait. The community is then always composed of a monomorphic prey population and a monomorphic predator population:
\ben
\label{PEStrait}
\Lambda^1_{\varepsilon}(t)=n^*(X_{\varepsilon}(t),Y_{\varepsilon}(t))\delta_{X_{\varepsilon}(t)},\quad \Lambda^2_{\varepsilon}(t)=h^*(X_{\varepsilon}(t),Y_{\varepsilon}(t))\delta_{Y_{\varepsilon}(t)}.
\een
The trait process $(X_{\varepsilon}(t),Y_{\varepsilon}(t))$ is a Markovian jump process taking values in $\Xcal\times\Ycal$ whose infinitesimal generator is given for any measurable bounded function $\phi$ by
\bea
\label{gen:trait}
\mathcal{L}_{\varepsilon}\Phi(x,y)=&
\int_{\Xcal} \Bigl(\Phi(x+\varepsilon u,y)-\Phi(x,y)\Bigr) [g(x+\varepsilon u;(x,y))]_+\pi(x,u)du\\
&+\int_{\Ycal} \Bigl(\Phi(x,y+\varepsilon v)-\Phi(x,y)\Bigr)[G(y+\varepsilon v;(x,y))]_+\Pi(y,v)dv.
\eea

\noindent The following Theorem states the limiting behavior of the process $(X(t/\varepsilon^2),Y(t/\varepsilon^2))$ as $\varepsilon$ goes to $0$. The proof relies on a classical compactness-uniqueness argument that can be immediately extended from \cite{CM11} (Appendix C.).
\begin{thm}
\label{thm:eqcan}
Let us assume Assumptions \ref{hyp:existence}, \ref{hyp:moment}, \ref{hyp:ODE}, \ref{hyp:fitness}, \ref{hyp:eqcan} and that the sequence of initial conditions $(X_{\varepsilon}(0),Y_{\varepsilon}(0))$ is bounded in $\mathbb{L}^2$ and converges in law toward a deterministic vector $(x_0,y_0)$, then for every $T>0$ the process $(X(t/\varepsilon^2),Y(t/\varepsilon^2))$ converges in law in $\mathbb{D}([0,T],\Xcal\times\Ycal)$ toward a couple of deterministic functions $(x(t),y(t))_{t\in[0,T]}$ unique solution of the system of differential equations
\ben
\label{eqcan}
 \left\{
\begin{aligned}
&\frac{d}{dt}x(t)=\int_{R^p} u[u \cdot \nabla_1 g(x(t);(x(t),y(t)))]_+ \pi(x(t),u) du,
\\
&\frac{d}{dt}y(t)=\int_{R^P} v[v \cdot \nabla_1 G(y(t);(x(t),y(t)))]_+ \Pi(y(t),v) dv,
\end{aligned}
\right.
\een
with initial condition $(x_0,y_0)$.
\end{thm}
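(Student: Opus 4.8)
The plan is to follow the compactness--uniqueness scheme used in \cite{CM11} (Appendix~C) for the single-species canonical equation and adapt it to the coupled prey--predator jump process $(X_{\varepsilon}(\cdot/\varepsilon^2),Y_{\varepsilon}(\cdot/\varepsilon^2))$, whose generator is $\varepsilon^{-2}\mathcal L_{\varepsilon}$ with $\mathcal L_{\varepsilon}$ the jump generator $\mathcal L_{\varepsilon}\Phi$ written above. The first thing I would record is the structural fact that makes the $\varepsilon^2$ time-change the correct one: from the explicit equilibrium \eqref{nheq} and the definitions \eqref{fit-Wp}--\eqref{fit-WP} one checks that the resident traits always sit at a \emph{zero of the invasion fitness}, $s(x;(x,y))=0$ and $F(y;(x,y))=0$ for all $(x,y)$, hence by \eqref{g_p} also $g(x;(x,y))=0$ and $G(y;(x,y))=0$. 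Therefore a mutant of step $\varepsilon u$ has an invasion fitness of order $\varepsilon$, and since $g$ is $\mathcal C^1$ in its first variable (Assumption~\ref{hyp:eqcan}~d)) and $\Xcal\times\Ycal$ is compact,
\[
\tfrac1\varepsilon\,[\,g(x+\varepsilon u;(x,y))\,]_+\ \xrightarrow[\varepsilon\to0]{}\ [\,u\cdot\nabla_1 g(x;(x,y))\,]_+
\]
uniformly in $(x,y)$, for each $u$, with an $\varepsilon$-independent majorant of the form $C|u|$. The rate of jumps of size $\varepsilon$ in the accelerated process is then of order $\varepsilon^{-1}$, producing an $O(1)$ drift and an $O(\varepsilon)$ fluctuating part: a law-of-large-numbers situation whose limit should be the flow of \eqref{eqcan}.

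Concretely I would carry out three steps. \textbf{(1) Generator convergence.} For $\Phi\in\mathcal C^1_b(\Xcal\times\Ycal)$, expand $\Phi(x+\varepsilon u,y)-\Phi(x,y)=\varepsilon\,u\cdot\nabla_x\Phi(x,y)+O(\varepsilon^2|u|^2)$, combine with the expansion of $g$ and $G$ above, divide by $\varepsilon^2$, and use dominated convergence together with the continuity of $z\mapsto[z]_+$ to obtain that $\varepsilon^{-2}\mathcal L_{\varepsilon}\Phi$ converges uniformly to $\mathcal A\Phi$, the first-order operator associated with the vector field on the right-hand side of \eqref{eqcan}. \textbf{(2) Tightness.} Since the jump amplitudes are $O(\varepsilon|u|)$ and $O(\varepsilon|v|)$ and the total jump rate is $O(\varepsilon^{-1})$ on the compact trait space, the predictable quadratic variation of the martingale part of $\Phi(X_{\varepsilon},Y_{\varepsilon})$ is $O(\varepsilon)$ while the finite-variation part has uniformly bounded derivative; the Aldous--Rebolledo criterion then gives tightness of $(X_{\varepsilon}(\cdot/\varepsilon^2),Y_{\varepsilon}(\cdot/\varepsilon^2))_{\varepsilon}$ in $\mathbb D([0,T],\Xcal\times\Ycal)$, and the vanishing of the jump sizes forces every limit point to have continuous sample paths. \textbf{(3) Identification and uniqueness.} By Step~(1) any limit point $(x(\cdot),y(\cdot))$ solves the martingale problem for $\mathcal A$; as its quadratic variation is zero, that martingale vanishes, so $(x(\cdot),y(\cdot))$ is a solution of \eqref{eqcan} with the prescribed deterministic initial datum. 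Finally, the right-hand side of \eqref{eqcan} is Lipschitz on the compact $\Xcal\times\Ycal$ — $z\mapsto[z]_+$ is $1$-Lipschitz, $\nabla_1 g(\cdot;(\cdot,\cdot))$, $\nabla_1 G(\cdot;(\cdot,\cdot))$ and $\pi,\Pi$ are Lipschitz by Assumption~\ref{hyp:eqcan}~c)--d), and one integrates against $|u|\,\bar m(u)$ and $|v|\,\bar M(v)$ — so Cauchy--Lipschitz gives a unique solution, whence the whole sequence converges to it.

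The main obstacle is Step~(1): one must justify the Taylor expansion of $g$ and $G$ (and of $\Phi$) \emph{uniformly} in $(x,y)$ while simultaneously controlling the unbounded mutation steps $u,v$, i.e. exhibit an $\varepsilon$-independent, $\pi$- and $\Pi$-integrable dominating function for $\tfrac1\varepsilon\,|u|\,[\,g(x+\varepsilon u;(x,y))\,]_+$ and its predator analogue. This is precisely where the $\mathcal C^1$ regularity of $g,G$ in the first variable and suitable first/second moment control on the mutation kernels must be used (implicitly so that the limiting integrals in \eqref{eqcan} converge), and it is the point that does not transcribe completely mechanically from the logistic-competition case of \cite{CM11}, because of the prey--predator coupling entering $g$ and $G$ through the equilibrium densities $n^*(x,y),h^*(x,y)$ of \eqref{nheq}. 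A secondary point deserving care is checking that the limiting vector field in \eqref{eqcan} is genuinely Lipschitz (not merely continuous), since this is what makes the limit ODE well-posed and thereby upgrades the tightness of Step~(2) into full convergence.
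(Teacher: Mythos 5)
Your proposal is correct and follows essentially the same route as the paper, which simply invokes the classical compactness--uniqueness argument of \cite{CM11} (Appendix C): your key observation that $s(x;(x,y))=F(y;(x,y))=0$ at the interior equilibrium \eqref{nheq}, so that mutant fitnesses are of order $\varepsilon$ and the $\varepsilon^{-2}$ time change yields an $O(1)$ drift, is exactly the structural fact underlying that extension, and your three steps (generator expansion, Aldous--Rebolledo tightness with vanishing jumps, martingale-problem identification plus Cauchy--Lipschitz uniqueness) are the standard implementation. The points you flag for care (domination of $\tfrac1\varepsilon[g(x+\varepsilon u;(x,y))]_+$ uniformly in $(x,y)$ and the Lipschitz character of the limiting vector field, which under Assumption~\ref{hyp:eqcan} and the explicit formulas \eqref{nheq} does hold on the compact trait spaces) are precisely where the paper relies on the regularity hypotheses of Assumption~\ref{hyp:eqcan}.
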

\noindent This system is strongly coupled through the functions $g$ and $G$.\\
\noindent In the specific case where the mutation measures $\pi$ and $\Pi$ are symmetrical, with covariance matrices $\gamma$ and $\Gamma$, the system \eqref{eqcan} becomes
\be
\label{eqcan-sym}
\left\{
\begin{aligned}
&\frac{d}{dt}x(t)= \frac{1}{2}p(x)\gamma(x)n^*(x,y) \nabla_1 s(x;(x,y)) ,
\\
&\frac{d}{dt}y(t)=\frac{1}{2}  P(y)\Gamma(y) h^*(x,y)\nabla_1 F(y;(x,y)).
\end{aligned}
\right.
\ee
\begin{remark}In the large population limit with rare and small mutations, diversification events of the population can be observed. These evolutionary branching are well understood in the case of the evolution of a single population (see \cite{CM11,CJM13}). They rely on the behavior of the jump process $\Lambda$ when coexistence of two traits occurs.
The prey-predator coevolution make the evolutionary branching properties of the trait processes complex to study. In particular, if two prey traits coexist, the next mutation can lead to the coexistence of two predator traits as well.
 \end{remark}

\subsubsection{Application}
We apply those results to the example introduced in Section \ref{subsec:example} where prey individuals hold a trait $(q_n,q_a)\in\R\times\R_+$ and predators a trait $(\rho,\sigma)\in\R\times\R_+$. We recall that the rate functions are given in \eqref{para} and that the mutation measures are gaussian with respective variance $\gamma$ and $\Gamma$.\\
Derivating the fitness functions with respect to the mutant trait, we obtain 
\bean
\label{grad-W}
\nabla_1s((q_n,q_a);(q_n,q_a,\rho,\sigma))=&
\left(
\begin{array}{c}
\bigl( -\alpha_n b_0\exp(-\alpha_n q_n)+\beta_n\frac{h^*(q_n,q_a,\rho,\sigma)}{B(q_n,q_a,\rho,\sigma)}\bigr) \un_{q_n>0}\\
\frac{q_a-\rho}{\sigma^2}\frac{h^*(q_n,q_a,\rho,\sigma)}{B(q_n,q_a,\rho,\sigma)}
\end{array}
 \right).
\eean
and
\ben
\label{grad-Wp}
\nabla_1F((\rho,\sigma);(q_n,q_a,\rho,\sigma))=
\left(
\begin{array}{c}
\frac{q_a-\rho}{\sigma^2}D\\
\bigl(\frac{(q_a-\rho)^2}{\sigma^3}-\frac{1}{\sigma}\bigr)D\un_{\sigma>0}
\end{array}
 \right)
\een
In particular
$\nabla_1F((\rho,\sigma);(q_n,q_a,\rho,\sigma))=0$ if and only if $\rho=q_a$ and $\sigma= |q_a-\rho|$.\\

\noindent We first study the coevolution of the traits $q_a$ and $\rho$, the values of $\sigma$ and $q_n$ being fixed, as in Figure \ref{fig:evolx4}. The system of differential equations governing the dynamics of $q_a$ and $\rho$ is then
\bean
&\frac{d}{dt}q_a(t)=\frac{\gamma\pi(q_a)}{r} \phi(q_a,\rho)\\
&\frac{d}{dt}\rho(t)=\Gamma\Pi(\rho)\phi(q_a,\rho),
\eean
where
$$
\phi(q_a,\rho)=\frac{D(q_a-\rho)}{\sigma^2}h^*(q_a,\rho),
$$
and the equilibrium $h^*(q_a,\rho)$ is given in \eqref{nheq}.
The function $\phi$ vanishes if $q_a=\rho$ or if the predator population dies out ($h^*(q_a,\rho)=0$).\\
We deduce from the specific form of the system that for all $t\ge0$, $h^*(q_a(t),q_n(t))>0$. Moreover there exist three cases depending on the respective values of the mutation probabilities and variances and on the parameter $r$:
\begin{itemize}
 \item If $r\Gamma\Pi(\rho) >\gamma\pi(q_a)$, the difference $|q_a(t)-\rho(t)|$ decreases with time. This phenomena was observed on Figure \ref{fig:evolx7} on the first part of the graph. 
\item If $r\Gamma\Pi(\rho)=\gamma\pi(q_a)$, both derivatives are equal for all times. The evolution then follows an \textit{arm race dynamics} : both traits evolve continuously and $|q_a(t)-\rho(t)|$ remains constant (see \cite{Marrow92, Abrams00, Dercole06}).
\item If $r\Gamma\Pi(\rho)<\gamma\pi(q_a)$, prey escape the predator influence as the distance between $q_a$ and $\rho$ increases. When $t\to\infty$, the solution converges toward a vector $(q_a^*,\rho^*)$ that doesn't satisfy \eqref{survie}. However the extinction of the predator population is not possible in finite time unlike in the process $\Lambda$ (see Figure~\ref{fig:evolx4-2})
\end{itemize}

\noindent Then we consider, as in Figure \ref{fig:evoly}, the prey strategies for the quantitative defense $q_n$, when the other traits are not affected by mutations:
$$
\frac{d}{dt}q_n(t)=\pi(q_n)\gamma\frac{D}{rB(q_n)}\bigl( (\alpha_n-\beta_n)b_0\exp(-\alpha_n q_n)-\beta_n c_0\frac{D}{rB(q_n)}\bigr)\un_{q_n>0}
$$
In the case where $\alpha_n\ge \beta_n$, meaning that the allocative trade-off between producing a large quantity of defense and having a good reproduction is important, the quantitative defense $q_n(t)$ decreases to $0$.
If $\alpha_n<\beta_n$, the derivative of $q_n$ vanishes at the point 
$$q_n^*= \frac{-1}{\alpha_n+\beta_n} \ln\left(\frac{\beta_n c_0 D}{rB(0)(\beta_n-\alpha_n)b_0}\right).
$$ 
Then either $q_n^*$ is negative and again $q_n(t)\to 0$ or $q_n^*\ge0$ and $q_n(t)$ converges to $q_n^*$ when $t\to\infty$. With the parameters of Figure \ref{fig:evoly}, $q_n^*\approx0.58$. We observed first an increase of $q_n$ and then the extinction of predators. Thus, an important question is whether or not the predator population dies out as $t\to\infty$. An easy calculation gives that
$$h^*(q_n^*)>0 \iff \frac{\beta_n}{\beta_n-\alpha_n}>1,$$ which is always true if $0<\alpha_n<\beta_n$. We deduce that the evolution of the quantitative defense does not drive the predators to extinction. This prediction contradicts the extinction observed in Figure \ref{fig:evoly}. However, in this simulation Assumption E.\ref{IIF} is not satisfied and the predator extinction is due to the coexistence of two prey types.

\section{Discussion}
We introduced three different objects to describe the prey-predator community: a deterministic system $LVP$ in \eqref{dpmp}, a stochastic jump process $\Lambda$ in Section \ref{sec:MUT} and a couple of two canonical equations in \eqref{eqcan}. These processes correspond to three different limits of the individual based process introduced in Section \ref{sec:Model}.\\
The jump process $\Lambda$ describes the dynamics of the community when mutations are rare. It describes the successive equilibria of the community. In this sense, it justifies a simulation method developped in Ecology to study the phenotypic evolution of communities (see \cite{Loeuille05, Loeuille08b,brannstrom10}). In these articles, the community evolves as the solution of a system of differential equations. Each equation of the system describes the dynamics of a sub-population. When a mutation occurs (at a very low rate), it increases the number of sub-populations and thus a new equation is added to the system. Their method gives the successive equilibria of the community similarly to the jump process $\Lambda$, however, it does not take into account the demographic stochasticity as every mutant with a positive fitness invades the community. \\
Our model highlights the implications of coevolutionary dynamics for the ecological dynamics of the community and its maintenance in time (see Section \ref{subsec:simu}). Particularly, we show that such consequences depend on the trait under scrutinity and on the costs that are associated to these traits. For instance, the two categories of defenses have different implications in this regard.
If the evolution of qualitative defenses is fast enough, it can lead to the disappearance of the predators as in Figure \ref{fig:evolx1} and \ref{fig:evolx4}, a phenomenon called "evolutionary murders" (as the evolution of a species in the community eventually kills another species). We note that such evolutionary murders do not happen when one considers the evolution of quantitative defenses. Likewise, the evolution of predators does not lead to the extinction of prey. Therefore, our study highlights how evolutionary murder phenomenons, already known in ecology (\cite{brannstrom2012modelling, georgelin14, Dercole06})
depend evolving species and types of traits that evolve.
Even in the absence of species extinction, we note that the coevolution also modifies the strength of the interactions between species and can thus lead to the reinforcement of an interaction (see Figure \ref{fig:evolx3}). Or as observed in Figure \ref{fig:evolx1}, evolution can induce the disappearance of an interaction (through diminishing the competition between two plants).
Interactions then progressively weaken and become "ghosts from the past", as commonly observed in phylogenetic or evolutionary studies (e.g. \cite{tobias2013species,bennett2013increased}).\\
Such variations in interaction strength can have important consequences for the overall stability of the system. Indeed, in food webs, stability analyses suggest that distributions of interaction strengths including weak interactions have a stabilizing effect on the dynamics of the community \cite{mccann1998weak} with important implications for the conservation of species and for the delivery of ecosystem services. 
The question of the links between evolution and stability of the network is therefore crucial. As shown in Figure \ref{fig:evoly}, evolution can induce instability in the network so that small perturbations of a population may lead to the extinction of one or several populations (cf \cite{Loeuille10}). \\
The jump process contains the various behaviors present in ecological communities however we only have little predictive information on the composition of the community at all times. Therefore it can be interesting to consider the canonical system \eqref{eqcan}. This process represents the dynamics of the traits present in the community under strong assumptions on the small size of the mutation steps and on the non-coexistence of different traits of prey and predators. 
The strong influence of prey on predators and vice versa can be well understood when we consider the equilibria of this system.
We only consider one-dimensional traits ($p=P=1$).  If we consider the specific case of an equilibrium $(x^*,y^*)$ such that 
\ben
\label{equilibre-can}
\partial_1s(x^*;(x^*,y^*))=0 \text{ and }\partial_1F(y^*;(x^*,y^*))=0 
\een
This equilibrium corresponds to a two-dimensional version of the \textit{Evolutionary strategies} introduced in \cite{metz1996adaptive} for the one-dimensional canonical equation. A natural question about this equilibrium is a condition for its stability. The Jacobian matrix at a point $(x,y)$ is given by:
\ben
\label{jac-can}
\left( \begin{array}{cc}
n^*(x,y)(\partial_{11}s(x;(x,y))+\partial_{12}s(x;(x,y))) & n^*(x,y)(\partial_{13}s(x;(x,y))\\
h^*(x,y)\partial_{12}F(y;(x,y)) &h^*(x,y)(\partial_{11}F(y;(x,y))) +\partial_{13}F(y;(x,y)) )\\
\end{array}\right)
\een
Note that the conditions
$$\partial_{11}s(x^*;(x^*,y^*))+\partial_{12}s(x^*;(x^*,y^*)<0 ,$$
and
$$ \partial_{11}F(y^*;(x^*,y^*))) +\partial_{13}F(y^*;(x^*,y^*)) <0,
$$are not sufficient nor necessary to ensure the stability of the equilibrium $(x^*,y^*)$. These two conditions correspond to the local stability of the equilibrium $x^*$ when we consider the evolution of the prey trait in the presence of a fixed predator trait $y^*$ and conversely for the evolution of the predator trait in the presence of prey individuals holding the fixed trait $x^*$ (see \cite{CM11}). \\
The branching properties of the community are complex to study. Indeed, they rely on a precise study of the jump process $\Lambda$ after the first coexistence of two traits. As we have seen in the simulations, the coexistence in the prey population can lead to extinction of predators  (see Figure \ref{fig:evolx1} and \ref{fig:evoly}), or to the coexistence of different trait of predators (see Figure \ref{fig:evolx7}).

Throughout this work we considered the same time scales for both prey and predators. Note that while this hypothesis of similar evolutionary time scales allows a first grasp on the effects of coevolution on the ecological dynamics of such interactions, strong asymmetries actually occur in nature. Taking again the example of plant-herbivore interactions, large asymmetries of demographic and evolutionary time scales can arise when the two partners have large differences in terms of body size and generation time (eg, tree-insect interactions such as \cite{Umea} or, at the other extreme, grass-large herbivore interactions \cite{bakker2006herbivore}). 
We will consider such asymmetries of time scales in a future work.

\appendix
\red{
\section{Construction of a trajectory of the prey-predator community process}
\label{app:poisson}
We construct a trajectory of the prey-predator community process as solution of a system stochastic differential equations driven by Poisson point measures (see \cite{FM04},\cite{CFM08}). We introduce two families of independent Poisson point measures on $(\R_+)^2$ with intensity $dsd\theta$: $(R_j)_{1\le j\le d+m}$ for the prey and predators reproduction events and $(M_j)_{1\le j\le d+m}$ for the death events. Then, $\forall 1\le i\le d$ and $\forall 1\le l\le m$
\bean
\label{pop-poisson}
N^K_i(t)&=N^K_i(0) +\int_0^t\int_{\R_+} 
  \un_{\theta\le b(x_i)N^K_i(s-)}
  R_i(ds,d\theta) 
\\
&-\int_0^t\int_{\R_+}
  \un_{\theta\le  \lambda(x_i,\Zbf(s-))N^K_i(s-) }
  M_i(ds, d\theta) ,\\
H^K_l(t)&=H^K_l(0) +  \int_0^t\int_{\R_+} 
  \un_{\theta\le rH^K_l(s-)\bigl(\sum_{i=1}^d \frac{B(x_i,y_l)}{K}N^K_i(s-) \bigr)}
  R_{d+l}(ds,d\theta) \\
&-\int_0^t\int_{\R_+}
  \un_{\theta\le  D(y_l)H^K_l(s-) }
  M_{d+l}(ds, d\theta).
\eean
Let us explain briefly these equations. We focus on the prey population $N^K_i$ with trait $x_i$. A trajectory is constructed using two Poisson point measures $R_i$ and $M_i$. The measure $R_i$ handles the reproduction events and $M_i$ the death events. A Poisson point measure $R$ on $(\R_+)^2$ with intensity $dsd\theta$ charges a countable set of points $\Omega=\{(s_u,\theta_u),u\in\N  \}$ (with  mass $1$ on each point) (e.g. \cite{watanabe1981stochastic} Chapter I.8 for a complete definition).
Then
$ \int_0^t\int_{\R_+} \un_{\theta\le b(x_i)N^K_i(s-)} R_i(ds,d\theta) $ only counts the points $(s^i_u,\theta^i_u)_{u\in\N}$ such that $s_u^i\le t$ and $\theta^i_u \le b(x_i)N^K_i(s^i_u-)$. Thus, we select the points of $R_i$ which correspond to birth events of the prey population. The other integrals have similar interpretations.
\medskip\\
The existence of solutions of \eqref{pop-poisson} is justified by Proposition \ref{thm:majunif}i).
From this construction, we deduce the expression of the prey and the predator population sizes:
\be
\label{taille-poiss}
\begin{aligned}
N^K(t)&=N^K(0) +\sum_{i=1}^d \left[ \int_0^t\int_{\R_+} 
  \un_{\theta\le b(x_i)N^K_i(s-)}
  R_i(ds,d\theta)\right.\\
& -\left.\int_0^t\int_{\R_+}
  \un_{\theta\le  \lambda(x_i,\Zbf(s-))N^K_i(s-) }
  M_i(ds, d\theta)\right] ,\\
H^K(t)&=H^K(0) +\sum_{l=1}^m \Bigl[ \int_0^t\int_{\R_+} 
  \un_{\theta\le rH^K_l(s-)\bigl(\sum_{i=1}^d \frac{B(x_i,y_l)}{K}N^K_i(s-) \bigr)}
  R_{d+l}(ds,d\theta) \\
&-\int_0^t\int_{\R_+}
  \un_{\theta\le  D(y_l)H^K_l(s-) }
  M_{d+l}(ds, d\theta)\Bigr].
\end{aligned}
\ee}
\section{Proof of Proposition \ref{thm:majunif}}
\label{app:majunif}
(i) For the first part, we compare the prey population with a population evolving in the absence of predators. Let us denote by $(\widetilde{N}^K_1,\dots,\widetilde{N}^K_d)$ the sizes of the prey sub-populations evolving without predators and set $\widetilde{N}^K=\sum_{i=1}^d\widetilde{N}^K_i$. Using the description given in Appendix \ref{app:poisson}, we construct the processes $N^K$ and $\widetilde{N}^K$ on the same probability space in such a way that $\forall t\ge0$, $\widetilde{N}^K(t)\ge N^K(t)$ almost surely.
Fournier, M\'el\'eard (2004) (Theorem 5.3) and Champagnat (2006) (Lemma 1) established that
\ben
\label{moment-champ-mod}
\sup_K\E\Bigl(\sup_{t\in[0,T]} \Bigl(\frac{\widetilde{N}^K(t)}{K} \Bigr)^3 \Bigr)<\infty \quad \text{and }\quad \E\Bigl(\sup_K  \sup_{t\in[0,T]} \Bigl(\frac{\widetilde{N}^K(t)}{K} \Bigr)^3 \Bigr)<\infty.
\een
The process $N^K$ then satisfies the same moment properties.
To study the number of predators, we define $\tau_n=\inf\{t\ge0,H^K(t)\ge n\}$. By neglecting the death events, we obtain that
$$
\E\Bigl(\sup_{t\in[0,T\wedge\tau_n]}(\frac{H^{K}(t)}{K})^3 \Bigr)\le \E\Bigl((\frac{H^{K}(0)}{K})^3\Bigr)
+4\E\Bigl(\int_0^{T\wedge\tau_n}  (1+(\frac{H^{K}(s)}{K})^{2})\frac{H^{K}(s)}{K} \frac{N^K(s)}{K} r\bar{B}ds \Bigr) ,
$$
where we used that $(1+x)^3-x^3\le 4(1+x^2)$, $\forall x\ge0$. Since the process $\widetilde{N}^K$ is independent of the number $H^K$ of predators we get that
$$
\begin{aligned} 
 \E\Bigl(\sup_{t\in[0,T\wedge\tau_n]}(\frac{H^{K}(t)}{K})^3\Bigr) 
&\le \phi(T) +2r\bar{B}\int_0^{T} \E\Bigl(\sup_{t\in[0,s\wedge \tau_n]}\bigl(\frac{H^{K}(t)}{K}\bigr)^{3}\Bigr)\E\Bigl(\sup_{t\in[0,s]} \frac{\widetilde{N}^K(t)}{K}\Bigr) ds,\\
\end{aligned}
$$
where $\phi(T)=\E\Bigl((\frac{H^{K}(0)}{K})^3\Bigr)+2r\bar{B}T\E\left( \sup_{t\in[0,T]}\frac{\widetilde{N}^K(t)}{K}\right)$.
By Gronwall's Lemma and \eqref{moment-champ-mod}, we obtain that
\ben
\label{gron}
\E\Bigl(\sup_{t\in[0,T\wedge\tau_n]}(\frac{H^{K}(t)}{K})^3\Bigr) \le C(T)
\een
which concludes point (i) and proves the existence of $\Zbf^K$ for all times.\\
(ii) The second part is much more difficult since using such a coupling is not possible: the constant $C(T)$ obtained in \eqref{gron} goes to $\infty$ as $T\to\infty$.
In the sequel we study the behavior of the time derivative of $\E\Bigl(\bigl(  \frac{N^K(t)+H^K(t)}{K} \bigr)^2\Bigr)$. We gather together the terms related to predation and bound the other terms using Assumption \ref{hyp:existence} to obtain
\bean
\label{E-carre-maj}
\frac{d}{dt}\E&\Bigl(\bigl( \frac{N^K(t)}{K}+\frac{H^K(t)}{K} \bigr)^2\Bigr)\le\E\bigl(K\Psi(\Zbf^K(t))\bigr),
\eean
where 
\bean
\label{psi}
&\Psi(\Zbf^K)= \sum_{i=1}^d\sum_{l=1}^m \frac{H^K_l}{K}\frac{N^K_i}{K}B(x_i,y_l)\\
&\quad \times \Bigl[(\frac{N^K+H^K-1}{K})^2-(\frac{N^K+H^K}{K})^2+r(\frac{N^K+H^K+1}{K})^2-r(\frac{N^K+H^K}{K})^2 \Bigr]\\
&+\frac{N^K}{K}\bar{b} \Bigl[(\frac{N^K+H^K+1}{K})^2-(\frac{N^K+H^K}{K})^2\Bigr]+\cu\frac{ (N^K)^2 }{K^2} \Bigl[(\frac{N^K+H^K-1}{K})^2-(\frac{N^K+H^K}{K})^2\Bigr]\\
&+\frac{H^K}{K} \Du \Bigl[(\frac{N^K+H^K-1}{K} )^2-(\frac{N^K+H^K}{K})^2\Bigr].
\eean
The function $\Psi$ is the sum of three terms that we handle separately. The first term gathers together all the predation effects. The second term (sum of the second and third terms) only depends on the prey population. The last term is related to the death of predators.
We start with the first term. To remove the dependence on the traits, we search for conditions on the term between square brackets to be non positive. This is equivalent to consider the sign of
$(1-\frac{1}{n+h})^2-1+r(1+\frac{1}{n+h})^2-r,$ for $(n,h)\in\N^2\setminus\{(0,0)\}$.
It is non positive as soon as $n+h\ge \frac{(1+r)}{2(1-r)}=n_1$. Thus if $N^K>n_1$,
\bean
\label{1}
\sum_{i=1}^d\sum_{l=1}^m \frac{H^K_l}{K}\frac{N^K_i}{K}&B(x_i,y_l)\left(\frac{N^K+H^K}{K}\right)^2 \Bigl[(1-\frac{1}{N^K+H^K })^2-1+r(1+\frac{1}{N^K+H^K})^2 -r \Bigr]\\
\le&\frac{N^KH^K}{K^2}\left(\frac{N^K+H^K}{K}\right)^2\Bu \Bigl[(1-\frac{1}{N^K+H^K })^2-1+r(1+\frac{1}{N^K+H^K})^2 -r \Bigr],
\eean
which is non positive.\\
\noindent For the second term, let us remark that if $N^K>K \frac{2\bar{b}}{\cu}=Kn_2$, then
\bean
\label{2}
N^K\bar{b}\Bigl(\bigl(1&+\frac{1}{N^K+H^K}\bigr)^2-1\Bigr) +\frac{\cu}{K}(N^K)^2\Bigl(\bigl(1-\frac{1}{N^K+H^K}\bigr)^2-1\Bigr)
\\&\le N^K\bar{b}\Bigl[\Bigl(\bigl(1+\frac{1}{N^K+H^K}\bigr)^2-1\Bigr) +2\Bigl(\bigl(1-\frac{1}{N^K+H^K}\bigr)^2-1\Bigr)\Bigr],
\eean
We set $n_0=\max(n_1,n_2)$. If $N^K\ge Kn_0$, we obtain by combining \eqref{1} and \eqref{2} that:
\bean
\label{3}
\Psi(\Zbf^K)&\le \frac{1}{K}\Bigl(\frac{N^K+H^K}{K}\Bigr)^2 \left[
N^K\bar{b}\Bigl[\Bigl(\bigl(1+\frac{1}{N^K+H^K}\bigr)^2-1\Bigr) +2\Bigl(\bigl(1-\frac{1}{N^K+H^K}\bigr)^2-1\Bigr)\Bigr]\right.\\
&+\left.H^K\Du \Bigl[(1 -\frac{1}{N^K+H^K})^2-1\Bigr] \right] .
\eean
Finally the term between square brackets in \eqref{3} is smaller than $-\min(\bar{b},\Du)$, as soon as $N^K\ge Kn_0$ for $n_0$ large enough. Thus $\forall t\ge0$,
\bea
\frac{d}{dt}\E&\Bigl(\Bigl( \frac{N^K(t)+H^K(t)}{K}\Bigr)^2\Bigr)
\\&\le
\E\Bigl(-\min(\bar{b},\Du)\Bigl( \frac{N^K(t)+H^K(t)}{K}\Bigr)^2\un_{\{N^K(t)>Kn_0\}} +K\Psi(\Zbf^K(t))\un_{\{N^K\le Kn_0\}} \Bigr).
\eea
We now consider the event $\{N^K\le Kn_0\}$. On this event we aim at bounding from above the function $\Psi$ with 
\bean
\Psi&(\Zbf^K)\le \frac{1}{K}(\frac{N^K+H^K}{K})^2\Phi^K(\frac{N^K}{K},\frac{H^K}{K}).
\eean
Since for $(n,h)\in\N^2\setminus\{(0,0)\}$,
\bea
(1-\frac{1}{n+h})^2-1 +r(1+\frac{1}{n+h})^2-r
= -2(1-r)\frac{1}{n+h}+(1+r)\frac{1}{(n+h)^2},
\eea
and Assumption \ref{hyp:existence}, we set for every $(u,v)\in (\R_+)^2\setminus\{(0,0)\}$,
\ben
\label{Phi}
\Phi^K(u,v)= \frac{2u}{u+v}(\bar{b}-\cu u)-\frac{2v}{u+v}((1-r)\Bu u+\Du)+\frac{u}{K(u+v)^2}(\bar{b}+\cu u+(1+r)\Bb v)+\Du\frac{v}{K(u+v)^2}.
\een
We seek a condition on $v$ to obtain that $\Phi^K(u,v)\le -D$,  $\forall K\ge0$, $\forall 0\le u\le n_0$. This inequality can be written as a polynomial
\ben
\label{poly}
v^2 \alpha(u)+ v\beta(u,K)+\gamma(u,K)\le 0,
\een
where the coefficients are given by
\be \left\{
\begin{aligned}
\alpha(u)&=-2(1-r)\Bu u-\Du,\\
\beta(u,K)&=2u(\bar{b}-\cu u)-2u^2(1-r)\Bu +\frac{u}{K}(1+r)\bar{B}+\frac{\Du}{K},\\
\gamma(u,K)&=\frac{u}{K}(\bar{b}+\cu u +\Du u^2+2u^2(\bar{b}-\cu u)).\\
\end{aligned}\right.
\ee
As $\alpha(u)<0$, this polynomial remains negative for every $v$ greater than its largest real root. If the polynomial \eqref{poly} has real roots, then we can bound from above the largest one with
$$
\frac{|\beta(u,K)|+\sqrt{\beta(u,K)^2-4\alpha(u)\gamma(u,K)}}{-2\alpha(u)}.
$$  
The coefficient $\beta(u,K)$ decreases with $K$, thus for every $K\ge1$ and $0\le u\le n_0$,
\bea
2u(\bar{b}-\cu u)-2u^2(1-r)\Bu\le \beta(u,K)&\le2u(\bar{b}-\cu u)-2u^2(1-r)\Bu +u(1+r)\bar{B}+\Du\\
-2(1+\cu) n_0^2\Bu\le \beta(u,K)&\le 2n_0 \bar{b}+n_0(1+r)\Bu +\Du.
\eea
In the case where  $\gamma(u,K)<0$, the discriminant $\Delta(u,K)=\beta(u,K)^2-4\alpha(u)\gamma(u,K)$ is bounded by $|\beta(u,K)|$.
Otherwise $\Delta\le \beta(u,K)^2+8((1-r)\Bu u+\Du )u (\bar{b}+\cu u +\Du u^2+2u^2(u-\cu u))$ which can be bounded uniformly for $u\in[0,n_0]$.
Thus there exists $h_0$ independent on $K$ such that 
$$
\forall n\le Kn_0,\quad \forall h>Kh_0,\quad \Phi^K\bigl(\frac{n}{K},\frac{h}{K}\bigr)\le -D.
$$
Finally
\bea
\frac{d}{dt}\E\Bigl(\bigl(&\frac{N^K(t)+H^K(t)}{K} \bigr)^2\Bigr)\le\E\Bigl(K\Psi(\Zbf^K(t))\un_{\{N^K(t)\le Kn_0,H^K(t)\le Kh_0\}} \Bigr)\\
&+\E\Bigl(-C( \frac{N^K(t)+H^K(t)}{K})^2(\un_{\{N^K(t)>Kn_0\}}+\un_{\{N^K(t)\le Kn_0, H^K(t)>Kh_0\}}\Bigr),
\eea
with $C>0$.
To conclude it remains to bound the expectation of $\Psi$ on the event $\{N^K\le Kn_0 \text{ and }H^K\le Kh_0\}$. Keeping only the positive terms we obtain that
\bea
\E\Bigl(&K\Psi(\Zbf^K(t))=\un_{\{N^K(t)\le Kn_0,H^K(t)\le Kh_0\}} \Bigr)
\le \E\Bigl(\un_{\{N^K(t)\le Kn_0,H^K(t)\le Kh_0\}}\\
&\times(\bar{b}N^K(t)+r\bar{B} N^K(t)\frac{H^K(t)}{K})\Bigl((\frac{N^K(t)+H^K(t)}{K}+\frac{1}{K} )^2-(\frac{N^K(t)+H^K(t)}{K} )^2\Bigr)\Bigr)\\
&\le\sum_{n=0}^{Kn_0}\sum_{h=0}^{Kh_0}( \frac{n+h}{K})^2(\bar{b}n+r\bar{B} n\frac{h}{K})\Bigl(( 1+\frac{1}{n+h})^2-1)\\
&\le \sum_{n=0}^{Kn_0}\sum_{h=0}^{Kh_0}( \frac{n+h}{K})^2 3(\bar{b}+r\bar{B} \frac{h}{K}),
\eea
where the last inequality derives from $(1+u)^2-1\le 3u$, for all $u\in[0,1]$.\\
Combining all these results
\bea
\frac{d}{dt}\E\left(( \frac{N^K(t)+H^K(t)}{K})^2\right)
& \le \E\Bigl(-C( \frac{N^K(t)+H^K(t)}{K})^2)\Bigr)\\
&+ \int_0^{n_0}\int_{0}^{h_0}\Bigl(3( n+h)^2(C+\bar{b})+ (n+h)^3r\Bu \Bigr)dh dn \\
&\le C'-C\E\Bigl((\frac{N^K(t)+H^K(t)}{K} )^2\Bigr),
\eea
with $C'>0$.
We solve this inequality to get that
\bea
\E\left(( \frac{N^K(t)+H^K(t)}{K})^2\right)&\le C' +\left(\E\Bigl( ( \frac{N^K(0)+H^K(0)}{K} )^2\Bigr)-C'\right)e^{-Ct}.
\eea
which gives the uniform bound.

\section{Proof of Theorem~\ref{thm:ALCP}}
\label{app:proofALCP}
The proof relies on the expression of Linear Complementarity Problems as variational inequality problems.
\begin{definition}
The variational inequality problem associated with a function $f:\R^u\to\R^u$ and a subset $E\subset\R^u$ seeks a vector $z\in E$ such that
\ben
\label{VI}
\forall a\in E,\quad (a-z)^T f(z)\ge0.
\een
\end{definition}
\noindent The existence of solutions is not true in a general setting but we are interested in a specific framework where the subset $E$ is compact and convex. 
\begin{thm}
\label{thm:proj}
Let $E$ be a non empty compact convex of $\mathbb{R}^u$ and $f$ continuous function, then the variational inequality problem associated to $(f,E)$ admits a solution.
\end{thm}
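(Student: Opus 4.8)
The plan is to deduce the existence of a solution of the variational inequality \eqref{VI} from Brouwer's fixed point theorem, using the Euclidean projection onto $E$ as the bridge; this is the finite-dimensional Hartman--Stampacchia (or Browder) existence statement. First I would recall the standard facts about this projection. Since $E$ is a non-empty closed convex subset of $\R^u$, for every $x\in\R^u$ there is a unique $P_E(x)\in E$ achieving $\min_{z\in E}\norm{x-z}$: existence follows because one may restrict the minimization to the compact set $E\cap\{z:\norm{x-z}\le\norm{x-z_0}\}$ for any fixed $z_0\in E$, and uniqueness follows from the strict convexity of the Euclidean norm together with the convexity of $E$. The projection is characterized by
\be
(a-P_E(x))^{T}\bigl(x-P_E(x)\bigr)\le 0\qquad\text{for all }a\in E,
\ee
and from this inequality one deduces in the usual way that $P_E$ is nonexpansive, hence continuous, on all of $\R^u$.

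Next I would consider the map $\Phi:E\to E$ defined by $\Phi(z)=P_E\bigl(z-f(z)\bigr)$. It is well defined because $P_E$ takes values in $E$, and it is continuous as the composition of the continuous map $z\mapsto z-f(z)$ with the continuous map $P_E$. Since $E$ is non-empty, compact and convex, Brouwer's fixed point theorem yields a point $z^{*}\in E$ with $\Phi(z^{*})=z^{*}$, i.e. $z^{*}=P_E(z^{*}-f(z^{*}))$. Applying the characterization of the projection with $x=z^{*}-f(z^{*})$ then gives, for every $a\in E$,
\be
(a-z^{*})^{T}\bigl((z^{*}-f(z^{*}))-z^{*}\bigr)\le 0,
\ee
that is $(a-z^{*})^{T}f(z^{*})\ge 0$ for all $a\in E$, which is exactly \eqref{VI}. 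Hence $z^{*}$ solves the variational inequality problem associated with $(f,E)$.

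The argument is elementary and I do not expect a genuine obstacle: the only substantial input is Brouwer's theorem, and the only point deserving a line of verification is that $\Phi$ maps the compact convex set $E$ continuously into itself, which rests on the nonexpansiveness of $P_E$ on a closed convex set; an alternative route avoiding the projection would be the Knaster--Kuratowski--Mazurkiewicz lemma, but the fixed-point argument is shorter. In Appendix~\ref{app:proofALCP} this theorem is then invoked with $f(z)=\widetilde{M}z+\widetilde{q}$ on a suitably large compact box contained in the orthant $(\R_{+})^{u}$, and the remaining work, carried out there, is to exploit the block sign structure of $\widetilde{M}$ to guarantee that, once the box is large enough, any solution of the truncated variational inequality in fact solves the full problem $LCP(\widetilde{M},\widetilde{q})$.
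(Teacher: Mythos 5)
Your proof is correct and follows essentially the same route as the paper, which only sketches the argument by citing Cottle et al.\ (Theorem 3.7.1): a solution is obtained as a fixed point of the map $z\mapsto P_E(z-f(z))$ via Brouwer's theorem, using the variational characterization and nonexpansiveness of the Euclidean projection onto the compact convex set $E$. (Your closing remark about how the theorem is used afterwards is slightly off --- in Appendix \ref{app:proofALCP} the paper applies it directly to $E=\{\nbf\in(\R_+)^d:\ (D-B^T\nbf)_l\ge 0\ \forall l\}$ with $f(\nbf)=q+M\nbf$ and then uses Lagrange multipliers, not a truncation to a large box --- but this does not affect the proof of the statement itself.)
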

The proof of Theorem \ref{thm:proj} is rather classical and requires to express a solution as a fix point of a projection of the subset $E$ (see \cite{cottleLCP} Theorem 3.7.1).
With this result we can prove the Theorem \ref{thm:ALCP}.
\begin{proof}[Proof of Theorem~\ref{thm:ALCP}]
Let us recall that a solution to the Linear complementarity problem associated to the couple $(\tM,\tq)$ defined in \eqref{ALCP} is a vector $\zbf=(\nbf,\hbf)\in\mathbb{R}^d\times\mathbb{R}^m$ such that: for every $1\le i\le d$ and $1\le l\le m$,
\ben
\label{cond-n}
n_i\ge0,\quad (q+M\nbf+B\hbf)_i\ge0,\quad (\nbf)^T(q+M\nbf+B\hbf)=0 
\een
and
\ben
\label{cond-h}
h_l\ge0,\quad (D-B^T\nbf)_l\ge0, \quad (\hbf)^T(D -B^T\nbf)=0
\een
These conditions \eqref{cond-n} entail that the vector $\nbf$  is a solution to $LCP(M,q+B\hbf)$. \\
Note that if $\nbf\in\R^d$ is solution to the restricted problem $LCP(M,q)$ satisfying moreover $(-B^T\nbf+D)_l\ge0$ for all $1\le l\le m$, then the vector $(\nbf,0)$ is solution to $LCP(\tM,\tq)$.
Similarly we seek a suitable vector $\nbf$ and adjust it thanks to the vector $\hbf$.\\
We consider the variational inequality problem associated to the set
$$
E=\{\nbf\in (\mathbb{R}_+)^d,\quad \forall 1\le l\le m \quad (D-B^T\nbf)_l\ge0   \},
$$
and the continuous function $f(\nbf)=q+M\nbf$. \\
Since $D$ is non negative, the set $E$ is not empty. Moreover $E$ is convex, closed and bounded thus compact. Theorem~\ref{thm:proj} ensures the existence of a solution $\nbf^*$ to this problem. Note that \eqref{VI} can be written as
 $$\forall a\in E, \quad a^Tf(\nbf^*)\ge (\nbf^*)^Tf(\nbf^*).$$ 
Thus $\nbf^*$ minimizes the function $a\to a^Tf(\nbf^*)$ on $E$.
Therefore
\begin{itemize}
 \item either $\nbf^*$ is in the interior of $E$ and is therefore a global minimizer of the function $a\to a^Tf(\nbf^*)$ on $\R^d$ and $(\nbf^*,0)$ is a solution to $LCP(\widetilde{M},\widetilde{q})$. 
\item otherwise we can define the Lagrange multipliers for this problem. There exist $d+m$ non negative real $h_1,....,h_{d+m}$ such that $\forall 1\le i\le d $, $\forall 1\le l\le m $,
$$ 
(q+M\nbf^*)_i=h_i-\sum_{l=1}^mB_{il}h_{d+l},\quad h_in_i^*=0,\text{ and } h_{d+l}(-B^T \nbf^*+D)_k=0.
$$
The first condition entails that $h_i=(q+M\nbf^*)_i+\sum_{l=1}^mB_{il}h_{d+l}$ and therefore the vector $(\mathbf{n}^*,h_{d+1},\dots,h_{d+m})$ is a solution to $LCP(\widetilde{M},\widetilde{q})$. 
\end{itemize}
\end{proof}

\section{Proof of Theorem~\ref{thm:sortie_eq}}
\label{app:proof_sortie_eq}
\red{A perturbation $\Zm^K=(\mathcal{N}^K_1,\cdots,\mathcal{N}^K_d,\mathcal{H}^K_1,\cdots,\mathcal{H}^K_m)$ of the prey-predator community process is defined by $2$ families of $d+m$ real-valued random processes $(u^K_i)_{1\le i \le d+m}$ and $(v^K_i)_{1\le i \le d+m}$ which are predictable with respect to the filtration $\mathcal{F}_t$ generated by the processes $\Zbf^K$.  Both families are uniformly bounded by a parameter $\kappa>0$.\\
The perturbation $\Zm^K$ is solution of the following system of stochastic differential equations driven by the Poisson point measures $R_i$ and $M_i$ introduced in Appendix \ref{app:poisson}.
\bean
 \label{Z-mod}
\mathcal{Z}^K(t)&=\mathcal{Z}^K(0) +\sum_{i=1}^d \Bigl[ \int_0^t\int_{\R_+} \frac{e_i}{K}
  \un_{\theta\le b(x_i)\mathcal{N}^K_i(s-) +u^K_i(s)}
  R_i(ds,d\theta) \\
&-\int_0^t\int_{\R_+}\frac{e_i}{K}
  \un_{\theta\le  \mathcal{N}^K_i(s-))\lambda(x_,\Zm^K(s-)) +v^K_i(s)}
  M_i(ds, d\theta)\Bigr] \\
& +\sum_{l=1}^m \Bigl[ \int_0^t\int_{\R_+} \frac{e_{d+l}}{K}
  \un_{\theta\le r\mathcal{H}^K_l(s-)\Bigl(\sum_{i=1}^d \frac{B(x_i,y_l)}{K}\mathcal{N}^K_i(s-) \Bigr)+u^K_{d+l}(s)} 
  R_{d+l}(ds,d\theta) \\
&-\int_0^t\int_{\R_+}\frac{e_{d+l}}{K}
  \un_{\theta\le  D(y_l)\mathcal{H}^K_l(s-) +v^K_{d+l}(s)}
  M_{d+l}(ds, d\theta)\Bigr].
\eean
where $(e_1,\dots,e_d,e_{d+1},\dots,e_{d+m})$ is the canonical basis of $\R^{d+m}$.}\\
The proof relies on the study of the stochastic process $L(\Zm^K)$ where $L$ is the Lyapunov function for the system $LVP(\xbf,\ybf)$ introduced in \eqref{Lyap-L} with an appropriate choice of $\gamma$.
The function $L$ is the sum of two functions $V$ and $W$. $V$ defined in \eqref{Lyap-V} is linear in the coordinate $n_i$, $i\in P$ and $h_l$, $l\in Q$ and strictly convex in the other coordinates. Moreover, its Hessian matrix at $\zbf^*$ is diagonal. $W$ defined \eqref{Lyap-G} is a quadratic form in $(\zbf-\zbf^*)$. This justifies the inequality \eqref{Prop1}:
\bea
 \norm{\zbf-\zbf^*}^2&\le\sum_{i\notin P}|n_i-n_i^*|^2 +\sum_{i\in P}|n_i|+\sum_{l\notin Q}|h_l-h_l^*| +\sum_{l\in Q}|h_l|\\
&\le  C\Bigl(L(\zbf)-L(\zbf^*)\Bigr)\le
CC'(\sum_{i\notin P}|n_i-n_i^*|^2 +\sum_{i\in P}|n_i|+\sum_{l\notin Q}|h_l-h_l^*| +\sum_{l\in Q}|h_l|),
\eea
where $P$ and $Q$ have been defined in \eqref{PQ}. We set in the following 
$$\norm{\zbf-\zbf^*}_{PQ}=\sum_{i\notin P}|n_i-n_i^*|^2 +\sum_{i\in P}|n_i|+\sum_{l\notin Q}|h_l-h_l^*| +\sum_{l\in Q}|h_l|
$$
The derivative of $L(\zbf(t))$ given in \eqref{deriv-L} can be bounded from above in the neighbourhood of $\zbf^*$ by
\bea
\frac{d}{dt}L(\zbf(t)) \le -C_1\norm{\nbf(t)-\nbf^*}^2 -C_1\Bigl(\sum_{i\in P} n_i(t) +\sum_{l\in Q} h_l(t) \Bigr)-C_1 \sum_{i\notin P}\Bigr( \sum_{l\notin Q} B_{il} (h_l(t)-h^*_l)\Bigl)^2,
\eea
for a positive real number $C_1$.
If we set 
$$C_2=\inf\{ \sum_{i\notin P}\Bigr( \sum_{l\notin Q} B_{il} (h_l-h^*_l)\Bigl)^2, \hbf\in(\R_+)^m, \norm{\hbf-\hbf^*}=1\}>0,
$$
then
\bea
\frac{d}{dt}L(\zbf(t)) \le -C_1\norm{\nbf(t)-\nbf^*}^2 -C_1\Bigl(\sum_{i\in P} n_i(t) +\sum_{l\in Q} h_l(t) \Bigr)-C_1C_2 \sum_{l\notin Q}(h_l(t)-h^*_l)^2.
\eea
We then obtain \eqref{Prop2}:
\begin{equation*}
\frac{d}{dt}L(\zbf(t)) \le -C''  \norm{\zbf-\zbf^*}^2.
\end{equation*}
We introduce $\tau_{\varepsilon}^K=\inf\{t\ge0, \Zm^K(t)\notin B_{\varepsilon}\}$. In the sequel we prove that there exist $\varepsilon''<\varepsilon$ and $V>0$ such that if $\Zm^K(0)\in\mathcal{B}{_{\varepsilon''}}$, then
\ben
\label{but}
\lim_{K\to\infty} \Pro\Bigl( \tau^K_{\varepsilon}>e^{KV}\Bigr) =1.
\een
For every $t\le \tau_{\varepsilon}^K$,
 \bea
 L(&\Zm^K(t))=L(\Zm^K(0))  +M^K(t)\\
 &+\int_0^t\sum_{i=1}^d \Bigl( L(\Zm^K(s)+\frac{e_i}{K})-L(\Zm^K(s))
 \Bigr)\Bigl(\Nm^K_i(s)b(x_i)+u^K_i(s)\Bigr)ds \\
 &+\int_0^t\sum_{i=1}^d\Bigl( L(\Zm^K(s)-\frac{e_i}{K})-L(\Zm^K(s)) \Bigr)\Bigl(\Nm^K_i(s)\lambda(x_i,\Zm^K(s))+v^K_i(s)\Bigl)
    ds\\
 &+ \int_0^t\sum_{l=1}^m\Bigl( L(\Zm^K(s)+\frac{e_{d+l}}{K})-L(\Zm^K(s))
 \Bigr) \Bigl(\Hm^K_l(s)\Bigl(r\sum_{i=1}^dB(x_i,y_m)\frac{\Nm^K_i(s)}{K}\Bigr)+u^K_{d+l}\Bigr)ds 
  \\
 &+ \int_0^t\sum_{l=1}^m\Bigl( L(\Zm^K(s)-\frac{e_{d+l}}{K})-L(\Zm^K(s)) \Bigr)\Bigl(\Hm^K_l(s)D(y_l)+v^K_{d+l}\Bigr)ds  .
 \eea
where $M^K_t$ is a local martingale which can be expressed with respect to the compensated Poisson point measures $(\widetilde{R}_i)_{1\le i\le d+m}$ and $(\widetilde{M}_{i})_{1\le i\le d+ m}$:
\bean
\label{mart-taille}
&M^K(t)=\sum_{i=1}^d \left[\int_0^t\int_0^{\infty}\Bigl[L(\Zm^K(s-)+\frac{\delta_{e_i}}{K})-L(\Zm^K(s-)) \Bigr]\right.
\un_{\theta\le b(x_i)\Nm^K_i(s-) + u^K_i(s)} \widetilde{R}_i(ds,d\theta) \\
&+\int_0^t\int_0^{\infty}
\Bigl[L(\Zm^K(s-)-\frac{\delta_{e_i}}{K})-L(\Zm^K(s-)) \Bigr] \left.
  \un_{\theta\le\Nm^K_i(s-)\lambda(x_i,\Zm^K(s-)) +v^K_i(s)}
 \widetilde{M}_i(ds,d\theta)\right]\\
&+\sum_{l=1}^m\left[\int_0^t\int_0^{\infty}
\Bigl[L(\Zm^K(s-)+\frac{\delta_{e_{d+l}}}{K})-L(\Zm^K(s-)) \Bigr]\right.
 \un_{\theta\le \Hm^K_l(s-)\Bigl(r\sum_{i=1}^d\frac{B(x_i,y_l)}{K}\Nm^K_i(s-)\Bigr) +u^K_{d+l}(s)}
  \widetilde{R}_{d+l}(ds,d\theta)\\
 &+\int_0^t\int_0^{\infty}
\Bigl[L(\Zm^K(s-)-\frac{\delta_{e_{d+l}}}{K})-L(\Zm^K(s-))  \Bigr]
\left.\un_{\theta\le D(y_l)\Hm^K_l(s-) +v^K_{d+l}(s)} \widetilde{M}_{d+l}(ds,d\theta)\right].
\eean
For every $t\le \tau_{\varepsilon}^K$ and $1\le i\le d$ we give the second order expansion of the terms
$$
 L(\Zm^K(t)+\frac{e_i}{K})-L(\Zm^K(t))=\frac{1}{K}\frac{\partial}{\partial e_i} L(\Zm^K(t))+\frac{1}{2}
 \int_{0}^{\frac{1}{K}}(\frac{\Nm^K_i(t)}{K}+\frac{1}{K}-u)\frac{\partial^2}{\partial e_i^2} L\Bigl(\Zm^K(t)-(\frac{\Nm^K_i(t)}{K}-u )e_i\Bigr)du.
  $$
We obtain a similar equality for the derivative with respect to $e_{d+l}$ for $1\le l\le m$. \\
Let us remark that $\sup\{\frac{\partial^2}{\partial e_j^2} L(u,v) ,(u,v)\in\mathcal{B}_{\varepsilon}\}<\infty$ for $\varepsilon$ small enough, for all $1\le j \le d+m$. Therefore the integrated term is of order $1/K^2$ for large $K$. The impact of the perturbed terms can be bounded similarly using the first derivative. Thus
 \bea
 L&(\Zm^K(t))= L(\Zm^K(0))  +M^K(t)\\
 &+\int_0^t \sum_{i=1}^d \frac{\partial L(\Zm^K(s))}{\partial e_i}
\frac{\Nm^K_i(s)}{K}\Bigl[b(x_i)-d(x_i)-\sum_{j=1}^dc(x_i,x_j)\frac{\Nm^K_j(s)}{K}-\sum_{l=1}^m\frac{\Hm^K_l(s)}{K}B(x_i,y_l)
 \Bigr] 
    ds\\
 &+ \int_0^t \sum_{l=1}^m  \frac{\partial L(\Zm^K(s))}{\partial e_{d+l}} \frac{\Hm^K_l(s)}{K} \Bigl[r\sum_{i=1}^dB(x_i)\frac{\Nm^K_i(s)}{K}-D(y_l)\Bigr]ds  +
 \mathcal{O}\Bigl(\frac{t}{K}\Bigr)+\mathcal{O}\bigl(\kappa t\bigr).
 \eea
Note that if $\zbf(t)$ is a solution of $LVP(\xbf,\ybf)$ then:
 \bea
 \pderiv{L(\zbf(t))}{t}&=\sum_{i=1}^d \frac{\partial}{\partial e_i} L(\zbf(t))n_i(t)
 \Bigl[b(x_i)-d(x_i)-\sum_{j=1}^dc(x_i,x_j)n_j(t)-\sum_{k=1}^mB(x_i,y_k)h_k(t) \Bigr]\\
 & + \sum_{l=1}^m \frac{\partial}{\partial e_{d+l}}L(\zbf(t))
 h_l(t)\Bigl[r\sum_{i=1}^dB(x_i,y_l)n_i(t)-D(y_l)\Bigr].
 \eea
We denote by $\pderiv{L(\Zm^K(t))}{t}$ the derivative along the solution $\zbf$ such that $\zbf(t)=\Zm^K(t)$. Then for $\kappa\ge 1/K$:
 \begin{align*}
 L(\Zm^K(t))= &L(\Zm^K(0))  +M^K(t)+ \int_0^t \pderiv{L(\Zm^K(s))}{t}ds +\mathcal{O}\bigl(\kappa t\bigr).
  \end{align*}
Using inequalities \eqref{Prop1} and \eqref{Prop2} we obtain that there exists $C'''>0$, such that if $ t\le T\wedge
 \tau_{\varepsilon}^K$ then
 \bean
\label{maj_normbis}
 \norm{\Zm^K(t)-\zbf^*}^2&\le
 C\Bigl[C'(\norm{\Zm^K(0)-\zbf^*}_{PQ}+ \sup_{t\in[0,T]}|M^K(t)|
 -C''\int_0^t\bigl(\norm{\Zm^K(s)-\zbf^*}^2-C'''\kappa\bigr) ds\Bigr].\\
 \eean
This inequality is the main tool of the proof. It connects the time spent by the process above a given threshold with the values it takes during this time interval.\\
\noindent
We define $S_{\kappa}=\inf\{t\ge0,\norm{\Zm^K(t)-\zbf^*}^2 \le 2C'''\kappa\}$.
Then for every $t\le S_{\kappa}\wedge T \wedge \tau_{\varepsilon}^K$:
 \be
 \norm{\Zm^K(t)-\zbf^*}^2\le C\Bigl[C'(\norm{\Zm^K(0)-\zbf^*}_{PQ})+
 \sup_{[0,T]}|M^K(t)|-C''C'''\kappa t \bigr) \Bigr].
 \ee
 As the l.h.s. is nonnegative we define
 \begin{equation}
 \label{Teta-bis}
  T_{\kappa}=\frac{C'(\norm{\Zm^K(0)-\zbf^*}_{PQ}+
 \sup_{[0,T]}|M^K(t)|}{C''C'''\kappa}\ge0,
 \end{equation}
which can be seen as the maximal time spent by the process $\norm{\Zbf^K(t)-\zbf^*}^2$ above $2C'''\kappa$ before the time $T\wedge \tau_{\varepsilon}^K$.
Therefore for every $t\le S_{\kappa}\wedge T
 \wedge\tau_{\varepsilon}^K$:
 \be
 \norm{\Zm^K(t)-\zbf^*}^2\le CC''C'''\kappa T_{\kappa}.
 \ee
To control the norm $\norm{\Zm^K(t)-\zbf^*}^2$ it remains to control $T_{\kappa}$ and thus the martingale $M^K$. To obtain the uniform bound, we use the exponential bound given by Lemma~\ref{lem:mart}.
On the event
\ben
\label{evenement}
\Bigl\{ T_{\kappa}\le T\wedge\frac{\varepsilon^2}{2CC''C'''\kappa} \Bigr\},
\een
then $\sup_{[0,S_{\kappa}]}(\norm{\Zm^K(t)-\zbf^*}^2) \le \frac{\varepsilon^2}{2},$ and in particular
 $S_{\kappa}\le \tau_{\varepsilon}^K\wedge T_{\kappa}.$\\
Moreover applying \eqref{maj_normbis} on the same event we get 
\begin{equation}
\label{maj_norm2}
\sup_{[0,T\wedge \tau_{\varepsilon}^K]}(\norm{\Zbf^K(t)-\zbf^*}^2)\le
CC''C'''\kappa(T+T_{\kappa})\le \frac{\varepsilon^2}{2}+CC''C'''\kappa T.
\end{equation}
Thus if furthermore $\kappa<\varepsilon^2/(2CC''C'''T)$ then $\tau_{\varepsilon}^K> T$.\\
\newline
\noindent These results lead to the Theorem. Let $\varepsilon'>0$ such that
 $\varepsilon''<\varepsilon'/2<\varepsilon'<\varepsilon$. \\
We introduce a sequence of stopping times that describes the back and forth of the process $\Zm^K$ between the balls $\mathcal{B}_{\varepsilon''}$ and $\mathcal{B}_{\varepsilon'/2}$ (see Figure
 \ref{fig:tempsarret}).
Set $\tau_0=0$ and for every $k\ge 1$ such that $\tau_k<\tau_{\varepsilon}^K$:
 \bean
 \tau'_k=&\inf\bigl\{t\ge \tau_{k-1}: \Zm^K(t)\notin B_{\varepsilon'/2} 
 \bigr\},\\
 \tau_k=&\inf\bigl\{t\ge \tau'_{k}: \Zm^K(t)\in B_{\varepsilon''} \text{ ou
 } \Zm^K(t)\notin B_{\varepsilon}   \bigr\}.
 \eean
We denote by $k_{\varepsilon}$ the number of back and forths before the exit: 
$$
k_{\varepsilon}=\inf\{k\in \N,
 \tau_k=\tau_{\varepsilon}^K\}.
 $$
In the sequel we bound $k_{\varepsilon}$ from below.
\begin{figure}
\centering
\includegraphics[scale=0.7]{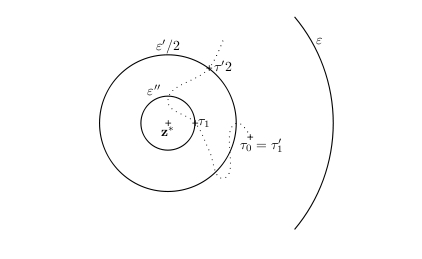}
\caption{\small{A trajectory of $\Zm^K$ in the neighbourhood of $\mathbf{z}^*$ for $d=m=1$.}}\label{fig:tempsarret}
\end{figure}

\noindent We consider an initial condition $\Zm^K(0)\in \mathcal{B}_{\varepsilon'}$. We set $\kappa=(\varepsilon'')^2/2C'''$ and apply the previous results. The time $\tau_1$ corresponds to the first return in $\mathcal{B}_{\varepsilon''}$ therefore it is equal to the time $S_{\kappa}$ introduced before. We deduce from the previous computations that on the event \eqref{evenement}
\begin{align*}
\Pro\bigl( \tau_{1}<\tau_{\varepsilon}^K\bigr)
& = \Pro\bigl( \sup_{[0,\tau_{1}]} \norm{\Zm^K(t)-\zbf^*}^2 <\varepsilon^2 \bigr)\ge\Pro\bigl( T_{\kappa}\le T\wedge\frac{\varepsilon^2}{2CC''C'''\kappa} \bigr).
\end{align*}
We replace $T_{\kappa}$ by its value \eqref{Teta-bis} to get that
 \bea
  \Pro\bigl( T_{\kappa}>T\wedge\frac{\varepsilon^2}{2CC''C'''\kappa} \bigr)
 &=\Pro\Bigl(  \sup_{[0,T]}|M^K(t)| >\bigl(C''C'''\kappa T\wedge\frac{\varepsilon^2}{2C}
 \bigr)-C'(\norm{\Zm^K(0)-\zbf^*}_{PQ})\Bigr)\\
 &\le\Pro\Bigl(  \sup_{[0,T]}|M^K(t)| >\bigl(C''C'''\kappa
 T\wedge\frac{\varepsilon^2}{2C} \bigr)-C'\varepsilon'\Bigr),
 \eea
where we used that $\Zm^K(0)\in\mathcal{B}_{\varepsilon'}$ to obtain the last inequality.\\
If we choose $T=2C'\varepsilon'/C''C'''\kappa$ and $\varepsilon'$ such that
 $2C'\varepsilon'<\frac{\epsilon^2}{2C}$ then the inequality becomes
 \begin{align*}
  \Pro\bigl( T_{\kappa}>T\wedge\frac{\varepsilon^2}{2CC''C'''\kappa} \bigr)
 &\le\Pro\bigl(  \sup_{[0,T]}|M^K(t)|
 >C'\varepsilon'\bigr).
 \end{align*}
We finally use Lemma \ref{lem:mart} to obtain
 \begin{align*}
  \Pro\bigl( T_{\kappa}>T\wedge\frac{\varepsilon^2}{2CC''C'''\kappa} \bigr)
 &\le \exp(-KV),
 \end{align*}
where $V>0$ only depends on $\varepsilon'$ and $\varepsilon''$.\\
Since this inequality remains true as long as the initial condition is in $B_{\varepsilon'}$ we deduce that
 \begin{align}
  \sup_{\Zm^K(0)\in B_{\varepsilon'}}\Pro\Bigl( 
 \tau_1<\tau_{\varepsilon}^K\Bigr)\ge 1-\exp(-KV).
 \end{align}
Applying the strong Markov property at the stopping time $\tau_k$ for $k\ge1$
 \begin{align*}
  \sup_{\Zm^K(0)\in B_{\varepsilon'}}\Pro\Bigl( 
 \tau_k<\tau_{\varepsilon}^K|\tau_{k-1}<\tau_{\varepsilon}^K\Bigr)\ge
 1-\exp(-KV).
 \end{align*}
therefore we can bound $k_{\varepsilon}$ from below by a random variable distributed according to a geometric law of parameter $\exp(-KV)$. Then
\ben
\label{maj-keps}
\lim_{K\to\infty} \Pro(k_{\varepsilon}>\exp(KV/2))=1.
\een
It remains to prove that these back and forths do not happen too fast. We establish that the time intervals $\tau_k-\tau_{k-1}$ are of order $1$ for $k\ge2$. To this aim we search for $T'$ such that for every $k\ge2$,
 $\Pro(\tau'_k-\tau_{k-1}>T')>0$.
Using the strong Markov property again, it is sufficient to prove that
 $\inf_{\Zm^K(0)\in B_{\varepsilon''}}\Pro(  \tau'_1>T')>0$:
 \begin{align}
  \inf_{\Zm^K(0)\in B_{\varepsilon''}}\Pro(  \tau'_1>T')=\inf_{\Zm^K(0)\in B_{\varepsilon''}}\Pro\Bigl( \sup_{[0,T'\wedge
 \tau'_1]}\norm{\Zm^K(t)-\zbf^*}^2<\frac{\varepsilon'^2}{4}\Bigr).
 \end{align}
We deduce from \eqref{maj_norm2} with $\varepsilon=\varepsilon'/2$ that on the event $\{T_{\kappa}\le
 T'\wedge\frac{\varepsilon'^2}{8CC''C'''\kappa}\}$:
 \begin{align*}
  \sup_{[0,T'\wedge \tau'_1]}(\norm{\Zm^K(t)-\zbf^*}^2)^2\le
 CC''C'''\kappa(T'+T_{\kappa})\le \frac{\varepsilon'^2}{8}+CC''C'''\kappa T'.
 \end{align*}
Setting $T'=2C'\varepsilon''/C''C'''\kappa$ and $\varepsilon''$ such that $2C'\varepsilon''<\varepsilon'^2/4C$, we get that 
 \begin{align*}
  \sup_{[0,T'\wedge \tau'_1]}\norm{\Zm^K(t)-\zbf^*}^2<
 \frac{\varepsilon'^2}{4},
 \end{align*}
and thus $\tau'_1>T'$.\\
Lemma \ref{lem:mart} ensures again that for any initial condition in $ \mathcal{B}_{\varepsilon''}$:
 $$
 \begin{aligned}
  \Pro\Bigl(T_{\kappa}> T'\wedge\frac{\varepsilon'^2}{8CC''C'''\kappa}  \Bigr)\le\Pro\Bigl( \sup_{[0,T']}|M^K(t)| > C'\varepsilon'' \Bigr)  
  \underset{{K\to\infty}}{\longrightarrow}0.
 \end{aligned}
 $$
and thus
$$\inf_{\Zm^K(0)\in B_{\varepsilon''}}\Pro\Bigl( \sup_{[0,T'\wedge
 \tau'_1]}\norm{\Zm^K(t)-\zbf^*}^2<\frac{\varepsilon'^2}{4}\Bigr)   \underset{{K\to\infty}}{\longrightarrow} 1.$$
Finally \eqref{but} is deduced from \eqref{maj-keps}.

\medskip
\noindent\textbf{Acknowledgements:} The authors are grateful to Fr\'ed\'eric Bonnans who provided insight and expertise on Linear Complementarity Problems. This article benefited from the support
of the ANR MANEGE (ANR-09-BLAN-0215) and from the Chair ``Mod\'elisation Math\'ematique
et Biodiversit\'e" of Veolia Environnement - \'Ecole Polytechnique - Museum National d'Histoire
Naturelle - Fondation X.
\bibliographystyle{plain}
\begin{spacing}{0.5}
\bibliography{Biblio.bib}
\end{spacing}
\end{document}